\documentclass[a4paper]{article}
\usepackage{times}
\usepackage{a4wide}
\usepackage{hyperref}
\usepackage{authblk}
\usepackage{url}
\usepackage{amsmath,amsfonts, amsthm,amssymb}
\usepackage{algorithmic}
\usepackage{textcomp}
\usepackage{xcolor}
\usepackage{svg}
\usepackage{lscape}
\AtBeginDocument{%
	\providecommand\BibTeX{{%
			\normalfont B\kern-0.5em{\scshape i\kern-0.25em b}\kern-0.8em\TeX}}}

\usepackage{tikz,ifthen}                       
\usetikzlibrary{shapes.geometric, decorations, calc, math}
\usepackage{pgfplots}
\pgfplotsset{
	compat=newest,
}
\usepackage{natbib}

\usepackage{latexsym}
\usepackage{caption, subcaption}
\usepackage{enumitem}
\usepackage{graphicx}
\usepackage{enumitem}

\usepackage[ruled, noend, linesnumbered]{algorithm2e}
\usepackage{comment}
\usepackage{booktabs}
\usepackage{cleveref}

\newtheorem{theorem}{Theorem}

\newtheorem{problem}{Problem}

\newtheorem{proposition}{Proposition}
\newtheorem{lemma}{Lemma}
\newtheorem{corollary}{Corollary}

\definecolor{UniBlue}{HTML}{07529A}
\definecolor{UniOrange}{HTML}{EAB90C}
\definecolor{UniGray}{HTML}{909085}

\newcommand{\opn}[1]{\operatorname{#1}}

\newcommand{\bigO}[1]{O \left( #1 \right)}

\newcommand{\plane}{\mathbb{R}^2}

\newcommand{\cc}{\rho}

\newcommand{\join}[1]{\sup}

\newcommand{\meet}[1]{\inf}

\newcommand{\tG}{{\widetilde{G}}}
\newcommand{\degree}[1]{d(#1)}

\newcommand{\ccG}{\rho_G}

\newcommand{\facenumber}{\Phi}
\newcommand{\closuretree}{\tau}
\newcommand{\closureblock}{\beta}
\newcommand{\generatorset}{\text{\sc GeneratorSet}}

\newcommand{\last}{\uparrow}
\newcommand{\lastL}{{\last}_{L}}

\newcommand{\lastR}{{\last}_{R}}
\newcommand{\reach}{\opn{reach}}
\newcommand{\bound}{\sigma}
\newcommand{\Left}{\mathcal{L}}
\newcommand{\Right}{\mathcal{R}}
\newcommand{\Leftnew}{\mathcal{L}^\text{new}}
\newcommand{\Rightnew}{\mathcal{R}^\text{new}}
\newcommand{\dR}{L}
\newcommand{\dL}{R}
\newcommand{\Path}{\opn{Path}}
\newcommand{\ER}{Erd\H{o}s-Rényi\xspace}
\newcommand{\Core}{\mathcal{C}}

\DeclareMathOperator*{\argmax}{arg\,max}

\newlength{\somewidth}
\setlength{\somewidth}{0.2\textwidth}
\newlength{\someheight}
\setlength{\someheight}{\somewidth}

\let\oldnl\nl
\newcommand{\nonl}{\renewcommand{\nl}{\let\nl\oldnl}}

\title{A Fast Heuristic for Computing Geodesic Cores\\ in Large Networks}

\author[1]{Florian Seiffarth}
\author[1,2,3]{Tam\'{a}s Horv\'{a}th}
\author[1,2,3]{Stefan Wrobel}

\affil[ ]{\footnotesize\textit {\{seiffarth, horvath, wrobel\}@cs.uni-bonn.de}}
\affil[1]{\footnotesize Dept. of Computer Science, University of Bonn, Bonn,
	Germany}
\affil[2]{\footnotesize Fraunhofer IAIS, 
	Schlo{\ss} Birlinghoven, Sankt Augustin, Germany}
\affil[3]{\footnotesize Fraunhofer Center for Machine Learning, 
	Schlo{\ss} Birlinghoven, Sankt Augustin, Germany}	

\begin{document}

\maketitle
	
\begin{abstract}
	Motivated by the increasing interest in applications of graph geodesic convexity in machine learning and data mining, we present a heuristic for computing the geodesic convex hull of node sets in networks. 
	It generates a set of almost maximal outerplanar spanning subgraphs for the input graph, computes the geodesic closure in each of these  graphs, and regards a node as an element of the convex hull if it belongs to the closed sets for at least a user specified number of outerplanar graphs. 
	Our heuristic algorithm runs in time linear in the number of edges of the input graph, i.e., it is faster with one order of magnitude than the standard algorithm computing the closure exactly.
	Its performance is evaluated empirically by approximating convexity based core-periphery decomposition of networks.
	Our experimental results with large real-world networks show that for most networks, the proposed heuristic was able to produce close approximations significantly faster than the standard algorithm computing the exact convex hulls. For example, while our algorithm calculated an approximate core-periphery decomposition in 5 hours or less for networks with more than 20 million edges, the standard algorithm did not terminate within 50 days. 
\end{abstract}

\maketitle

{\footnotesize \textbf{Keywords:} geodesic closure, outerplanar graphs, convex core detection in large networks}

\section{Introduction}	
\label{sec:intro}
In recent years, there has been a growing interest in applications of \textit{geodesic convexity} in graphs (see, e.g., \cite{Pelayo13}).
This concept has been utilized successfully also in \textit{machine learning} and \textit{data mining}, besides other fields of computer science (e.g., genome rearrangement problems~\cite{Cunha18}).
Examples include exact \textit{cluster} recovery with queries~\cite{Bressan21}, 
vertex classification in batch~\cite{Macedo19,Seiffarth19, Stadtlander21} and active \textit{learning}~\cite{Thiessen/Gaertner/2021}, or \textit{mining} complex networks~\cite{Sub18, Subelj_etal/2019}.
Regarding this latter application, a \textit{new} type of \textit{core-periphery} network decomposition~\cite{Borg99} based on geodesic convexity has been proposed in \cite{Sub18}.
The results in \cite{Sub18}, as well as in subsequent papers \cite{Subelj_etal/2019, Sub19} clearly demonstrate that geodesic convexity based core-periphery   decomposition provides further useful insights into the network's structure, which have not been captured before. 
More precisely, by means of geodesic convexity, a broad class of real-world networks  can be decomposed into a dense core surrounded by a sparse periphery (see Fig.~\ref{fig:core_periphery} for a relatively small example).
\begin{figure}[t]
	\begin{subfigure}[t]{0.32\linewidth}
		\centering
		\includegraphics[scale=0.4]{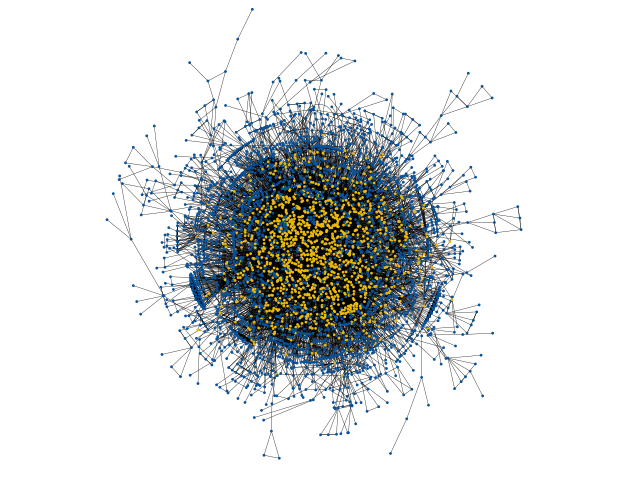}
		\caption{Entire network}
		\label{fig:Graph}
	\end{subfigure}
	\hfill
	\begin{subfigure}[t]{0.32\linewidth}
		\centering
		\includegraphics[scale=0.4]{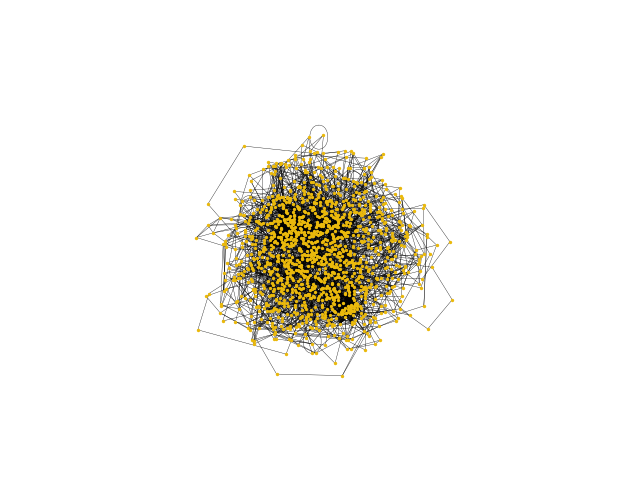}
		\caption{Geodesic core}
		\label{fig:Core1}
	\end{subfigure}
	\hfill
	\begin{subfigure}[t]{0.32\linewidth}
		\centering
		\includegraphics[scale=0.4]{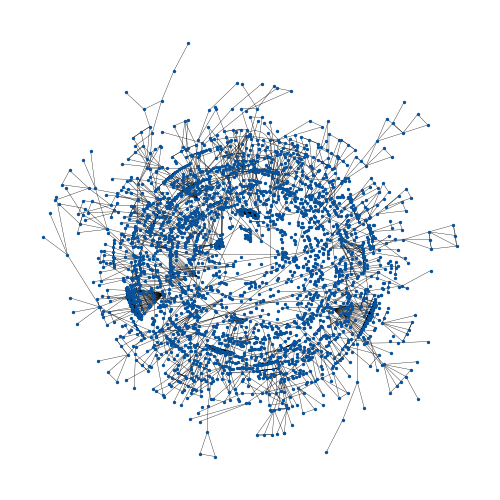}
		\caption{Periphery}
		\label{fig:Periphery1}
	\end{subfigure}
	\caption{\label{fig:core_periphery}(a) CA-GrQc network, (b) its (geodesic) core, (c) its periphery}
\end{figure}
In contrast to the core, the shortest paths between most node pairs in the periphery are unique. 
As mentioned, such a decomposition enables the acquirement of new knowledge~\cite{Sub18}.
For example, basically the nodes in the core govern the degree distribution of the entire network or they have a  higher clustering coefficients and a smaller geodesic distance to each other than the periphery nodes.
A further nice property of convexity based core-periphery decomposition is that it is \textit{not} characteristic to all network types. In particular, while for example social networks typically possess this kind of decomposition, this is not the case for standard random graph and network models, such as the Erd\H{o}s-R\'enyi, Watts-Strogatz, or Barab\'asi-Albert models (see \cite{Sub18}).

This and other applications of geodesic convexity rely on the basic operation of computing the \textit{geodesic convex hull} of a set $X$ of vertices of a graph, that is, the \textit{smallest} set $C$ of vertices containing $X$ such that \textit{all} vertices of \textit{all} shortest paths with both endpoints in $C$ belong also to $C$.
Such a smallest $C$ always exists and it is unique.
We will refer to the geodesic convex hull of a set of nodes as its \textit{closure}.
It is a folklore result that the closure of a set of nodes of a network with $n$ nodes and $m$ edges can be calculated in $\bigO{nm}$ time (cf.~\cite{Pelayo13}).
This time complexity, which is \textit{cubic} in $n$ in the worst case, makes all approaches relying on geodesic convex hulls practically \textit{infeasible} for \textit{large} networks.

To overcome this problem, we give up the demand for \textit{correctness} and calculate only an \textit{approximation} of the closure of $X$, by noting that we are not aware of any other existing closure approximation heuristic.   
More precisely, we generate a set of random spanning subgraphs of $G$, compute the closure of $X$ in these subgraphs separately, and regard a vertex of $G$ as an element of the convex hull of $X$ iff it belongs to the closure of $X$ in at least a user specified percentage of the number of spanning subgraphs.
The main question for this scheme is the choice of the class of the spanning subgraphs.
At first glance, forests might seem a natural candidate	 for their nice algorithmic properties.
However, a closer look at the problem as well as our empirical results reveal that already for graphs that are structurally very close to forests, a very poor approximation performance can be obtained in this way. 
This is because spanning forests may drastically distort shortest paths
even in sparse graphs.

Instead of forests, we therefore consider the class of \textit{outerplanar graphs}~\cite{Chartrand/Harary/1967} for spanning subgraphs for their nice algorithmic properties.
Although this class is only slightly beyond that of forests in the structural hierarchy,
our empirical results with large real-world networks show that for most networks, a close approximation of the geodesic convex hull can be obtained with outerplanar spanning subgraphs.
Our main contributions are threefold.
We (i) present a \textit{fast} and \textit{easy to implement} algorithm computing an almost inclusion \textit{maximal} outerplanar spanning subgraph in $\bigO{m}$ time, (ii) give an algorithm computing the (geodesic) convex hull in an \textit{outerplanar} graph $G$ in time \textit{linear} in its number of vertices and face number, where the face number is the maximum number of interior faces over the biconnected components of $G$, and
(iii) 
report experimental results with \textit{large} real-world networks which show that their cores (and hence, peripheries) can be approximated closely with this scheme in  \textit{feasible} time.
To the best of our knowledge, our approach and the algorithms in this work are all new. 
Furthermore, (i) and (ii) above may be of some independent interest as well.

Regarding (i), our algorithm
relies on the paradigm used e.g. in \cite{Fray2012}).
In particular, it generates first a DFS tree and adds then iteratively further edges to the left or right side of the paths in the tree in such a way that the resulting graph remains outerplanar.
Our  experimental results clearly show 
that the spanning outerplanar graphs generated with this algorithm
are \textit{nearly maximal} (less than $0.3\%$ of the edges were missing for maximality in the experiments).

Regarding~(ii), the closure of a set $X$ of vertices of an outerplanar graph $G$ with $n$ vertices can be computed in $\bigO{n|X|}$ time by solving the single source shortest path (SSSP) problem for all vertices in $X$.
Although this is already a significant improvement over the $\bigO{nm}$ bound for arbitrary graphs,
we calculate the closure with a more sophisticated algorithm.
One of the strengths  of this algorithm is that its complexity is \textit{independent} of the cardinality of $X$.
More precisely, it computes the closure \textit{correctly} and in $\bigO{nf}$ time, where $f$ is the \textit{face number} of $G$.
Although $f = \bigO{n}$ in the worst-case, our runtime results clearly reveal that this difference is essential in practice because $f$ is typically \textit{negligible} w.r.t. $n$ (e.g., in case of outerplanar spanning subgraphs of Erd\H{o}s-R\'enyi random graphs with around 1,000,000 edges,
the average face number was consistently less than $80$).

Finally, regarding~(iii), our experiments with 15 \textit{large} real-world networks show that their cores can be approximated  with our heuristic algorithm with a Jaccard similarity between 82 and 99\%.
Furthermore, already 100 spanning outerplanar subgraphs sufficed to obtain a close approximation in all of our experiments.
Thus, our algorithm is \textit{linear} in the number of edges in \textit{practice}, in contrast to the $\bigO{nm}$ worst-case complexity of the standard algorithm based on SSSP.

In particular, in case of networks with more than 20 million (and up to 117 million) edges, the approximate decomposition could be computed in 5 hours or less  with our algorithm. In contrast, the computation of the exact core-periphery decomposition with the standard algorithm had to be aborted after 50 days.
Regarding the approximate cores of the networks, their degree distributions were close to those of the exact ones.

The rest of the paper is organized as follows. In Sect.~\ref{sec:preliminaries} we collect the necessary notions and fix the notation. Sections~\ref{sec:sampling} resp. \ref{sec:closure} contain the description of the algorithms generating spanning outerplanar subgraphs resp. computing the closure of a set of vertices in outerplanar graphs.
In Sec.~\ref{sec:experiments} we empirically evaluate our approach.
Finally, in Sect.~\ref{sec:conclusion} we formulate some questions for further research.

\section{Notions and Notation}	
\label{sec:preliminaries}

For basic notions in graph theory, we refer e.g. to \cite{Diestel/2012}.
The set $V$ of vertices (resp. $E$ of edges) of a graph $G= (V,E)$ is denoted by $V(G)$ (resp. $E(G)$).
By graphs we always mean finite \textit{undirected} and \textit{unweighted} graphs without loops and parallel edges that are \textit{connected} and denote $|V(G)|$ and $|E(G)|$ by $n$ and $m$, respectively.
Furthermore, for $u,v \in V(G)$, $uv$ stands for the edge $\{u,v\} \in E(G)$.  
The path connecting two nodes $u,v$ of a tree is denoted by $\Path(u,v)$.

Given a graph $G$, the function $I: V\times V\rightarrow 2^V$, called the \textit{geodesic interval}, maps $(u,v)$ to the union of the sets of vertices on all shortest paths between $u$ and $v$. 
A set $X\subseteq V(G)$ is \textit{(geodesically) convex} or \textit{closed} if $I(u, v)\subseteq X$ for all $u, v\in X$. 
For all $X\subseteq V(G)$, there exists a unique smallest closed set $X' \supseteq X$, called the \textit{convex hull} or \textit{closure} of $X$.
Furthermore, the function $\cc_G$ mapping the subsets of $V(G)$ to their convex hulls is a \textit{closure operator}, i.e.,  
it is 
\textit{extensive} (i.e., $X\subseteq \cc_G(X)$), 
\textit{monotone} (i.e., $X\subseteq Y$ implies $\cc_G(X)\subseteq \cc_G(Y)$), and
\textit{idempotent} (i.e., $\cc_G(\cc_G(X))= \cc_G(X)$)
for all $X,Y \subseteq V(G)$.
We omit $G$ from $\cc_G$ if it is clear from the context.
For all graphs $G$ and $X \subseteq V(G)$, $\cc(X)$ can be computed
by iterating over \textit{all} elements $u \in \cc(X)$, starting with an arbitrary element of $X$, as follows  (see, e.g., \cite{Pelayo13}): 
Let $X' \supseteq X$ be the set of elements in $\cc(X)$ that have already been generated before we process the next element $u$. 
Then add $Y=\bigcup_{v \in X'} I(u,v)$ to $X'$, where $Y$ can be calculated by solving the \textit{single-source shortest path} (SSSP) problem (for unweighted graphs) from $u$ to all elements of $X'$. 
After all elements in $X'$ have been processed, we have $X'=\cc(X)$.
It is a folklore result that the SSSP problem can be solved with \textit{breadth-first search} (BFS) in $\bigO{n+m}$ time, implying that $\cc(X)$ can be computed in $\bigO{nm}$ time.

A graph $G$ is \textit{outerplanar}~\cite{Chartrand/Harary/1967} if it can be embedded in $\plane$ in a way that no two edges cross each other (except possibly in their endpoints) and there exists a point $P \in \plane$ such that each vertex of $G$ can be reached from $P$ by a simple curve that does not cross any of the edges. 
Removing all points and curves from the plane corresponding to the vertices and edges of $G$, respectively, we obtain a set of connected ``pieces'' of the plane, called \textit{faces}. 
Since $G$ is finite, all faces are bounded except for one, the \textit{outer} face. The bounded faces are called \textit{interior} faces.
The \textit{face number} of a biconnected outerplanar graph is the number of its interior faces; the face number of an outerplanar graph $G$, denoted  $\facenumber(G)$, is the maximum of the face numbers of its biconnected components.

Let $G$ be an outerplanar graph. All biconnected components, called \textit{blocks} of $G$ consist of a unique Hamiltonian cycle and a possibly empty set of (non-crossing) diagonals. 
Edges not belonging to blocks are called \textit{bridges}. 
The {\em block and bridge tree} (BB-tree) $\tG$ of $G$ is defined as follows~\cite{Horvath10}:
For each block $B$ of $G$, 
(i) introduce a new vertex, called \textit{block vertex} $v_{B}$,
(ii) remove all edges belonging to $B$, and
(iii) for every vertex $v$ of $B$, connect $v$ with $v_{B}$ by an edge if $v$ is adjacent to a bridge or to another biconnected component of $G$; otherwise remove $v$.
It holds that $\tG$ is a (free) tree and can be computed in $\bigO{n}$ time.

While $\cc(X)$ can be computed  in $\bigO{nm}$ time for arbitrary graphs, Thm.~\ref{thm:outerplanar_preclosure} gives rise to a faster algorithm for outerplanar graphs
\begin{theorem}[\cite{Allgeier09}] 
	\label{thm:outerplanar_preclosure}
	Let $G$ be an outerplanar graph. Then for all $X \subseteq V(G)$, $\cc(X) = \bigcup_{u,v \in X} I(u,v)$.
\end{theorem}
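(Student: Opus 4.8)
The plan is to reduce the statement to a convexity claim about a single round of the interval operator, and then to prove that claim from the block structure of outerplanar graphs. For $X\subseteq V(G)$ put $J(X)=\bigcup_{u,v\in X} I(u,v)$. Since the interval operator is extensive and $\cc$ is a closure operator, $X\subseteq J(X)\subseteq \cc(X)$ (the right inclusion because $\cc(X)$ is convex and contains $X$); hence it suffices to prove that $J(X)$ is convex, for then $J(X)$ is a convex superset of $X$ and therefore contains the smallest one, $\cc(X)$, giving $J(X)=\cc(X)$.

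To show $J(X)$ is convex, pick $p,q\in J(X)$, say $p\in I(a,b)$ and $q\in I(c,d)$ with $a,b,c,d\in X$ (degenerate choices such as $a=b$ are allowed), and prove $I(p,q)\subseteq J(X)$. I would isolate the following \emph{three-point property}: for all $s,t,r\in V(G)$ and every $p\in I(s,t)$,
\[
I(p,r)\subseteq I(s,t)\cup I(s,r)\cup I(t,r).
\]
Granting it, apply it with $(s,t,r)=(a,b,q)$ to get $I(p,q)\subseteq I(a,b)\cup I(a,q)\cup I(b,q)$, and then, since $q\in I(c,d)$, apply it with $(s,t,r)=(c,d,a)$ and with $(s,t,r)=(c,d,b)$ to bound $I(a,q)\subseteq I(c,d)\cup I(a,c)\cup I(a,d)$ and $I(b,q)\subseteq I(c,d)\cup I(b,c)\cup I(b,d)$. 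Every interval on the right has both endpoints in $X$, so $I(p,q)\subseteq J(X)$. The three-point property is genuinely outerplanar: it already fails in $K_{2,3}$ (with $p,r$ the two vertices of the size-$2$ side, $a,b$ two vertices of the size-$3$ side, and $w$ the third one, $w$ lies in $I(p,r)$ but in none of $I(a,b),I(a,r),I(b,r)$), and $K_{2,3}$ is one of the excluded minors of outerplanarity.

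It remains to prove the three-point property for an outerplanar $G$, and here I would use the block and bridge structure. Distances and intervals decompose along the BB-tree $\tG$: if the $\tG$-path from $u$ to $v$ meets the cut vertices $c_1,\dots,c_k$ in this order, then $d(u,v)=d(u,c_1)+d(c_1,c_2)+\dots+d(c_k,v)$ and $I(u,v)=I(u,c_1)\cup I(c_1,c_2)\cup\dots\cup I(c_k,v)$, where each consecutive interval lives inside a single block or on a single bridge. Using this, $p$ lies in exactly one segment of such a decomposition of $I(s,t)$; tracking the relative $\tG$-positions of $s,t,r$, every part of the desired inclusion that ``crosses'' a cut vertex reduces to the tree case (in a tree $J=\cc$ holds trivially, and the three-point inclusion is a one-line median computation), so the whole statement reduces to $G$ \emph{biconnected}. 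In that case all vertices of $G$ lie on its unique Hamiltonian cycle $C$ and the other edges are pairwise non-crossing chords; one must show that a shortest $p$--$r$ path cannot leave $I(a,b)\cup I(a,r)\cup I(b,r)$ when $p\in I(a,b)$. Fixing such a path $P$ and a vertex $w\in P$ outside the union, the geodesics witnessing $p\in I(a,b)$ and $w\in I(p,r)$, together with fixed shortest $a$--$r$ and $b$--$r$ paths, form a plane subgraph of the outerplanar embedding whose combinatorics is controlled by the cyclic order of $a,b,p,r$ on $C$ and by non-crossingness of the chords; that combinatorics forces either a strictly shorter detour for one of the distances involved (impossible) or a $K_4$- or $K_{2,3}$-minor (impossible since $G$ is outerplanar).

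The main obstacle is precisely this last step: excluding a vertex $w$ on a shortest $p$--$r$ path that escapes all three intervals inside one block. It requires a case analysis driven by the circular positions of the four terminals on $C$ and by which chords the four geodesics traverse, plus a careful argument that the resulting configuration hides either a shortcut or a forbidden minor, with the shortest-path length bookkeeping kept exact throughout so that ``shortcut'' really means a strict drop in some distance. By comparison, the gluing across blocks and bridges and the degenerate cases ($s=t$, $p\in\{a,b\}$, $r\in\{a,b\}$, coinciding terminals) should be routine once the biconnected case is settled.
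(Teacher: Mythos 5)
First, a point of reference: the paper does not prove this theorem at all --- it is imported verbatim from \cite{Allgeier09} --- so there is no in-paper argument to compare your route against; your proposal has to stand on its own.

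Your reduction is correct and clean. Writing $J(X)=\bigcup_{u,v\in X}I(u,v)$, the chain $X\subseteq J(X)\subseteq \cc(X)$ always holds, and the nested application of your three-point property does show that $J(X)$ is convex, which forces $J(X)=\cc(X)$. The $K_{2,3}$ example is also a correct sanity check that the three-point property is where outerplanarity must enter. Note, however, that the three-point property is \emph{equivalent} to the theorem restricted to three-element sets (if $\cc(\{s,t,r\})=I(s,t)\cup I(s,r)\cup I(t,r)$, then convexity of the closure immediately gives $I(p,r)\subseteq I(s,t)\cup I(s,r)\cup I(t,r)$ for $p\in I(s,t)$, and your argument gives the converse), so you have reduced the theorem to the case $|X|=3$ --- a genuine simplification, but not yet a proof. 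The gap is that the biconnected case of the three-point property, which is where the entire mathematical content of the statement lives, is never established: your argument there is a single sentence asserting that a vertex $w\in I(p,r)\setminus\bigl(I(a,b)\cup I(a,r)\cup I(b,r)\bigr)$ would ``force either a strictly shorter detour or a $K_4$- or $K_{2,3}$-minor,'' followed by your own acknowledgement that this is the main obstacle and requires a case analysis you have not carried out. Nothing in the proposal identifies which distance would strictly drop or which minor would appear in which configuration, and the appeal to forbidden minors is an odd detour in any case, since a biconnected outerplanar block is already explicitly a Hamiltonian cycle with pairwise non-crossing chords and the argument should be made directly in terms of the two arcs between $p$ and $r$ and the chords a geodesic may use. (The BB-tree gluing and the tree case I agree are routine.) As it stands, the proposal is a correct reduction of the theorem to an unproven lemma, not a proof.
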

Thus, in case of outerplanar graphs, it suffices to perform a BFS only from the elements of $X$, resulting in the following corollary, by noting that $m = \bigO{n}$ in case of outerplanar graphs:
\begin{corollary}	
	\label{cor:outerplanar}
	Let $G$ and $X$ be as in Thm.~\ref{thm:outerplanar_preclosure}. Then $\cc(X)$ can be solved in time $\bigO{m|X|} = \bigO{n|X|}$.
\end{corollary}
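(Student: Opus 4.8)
The plan is to combine Theorem~\ref{thm:outerplanar_preclosure} with a single breadth-first search per element of $X$. By Theorem~\ref{thm:outerplanar_preclosure} we have $\cc(X) = \bigcup_{u,v\in X} I(u,v)$, so---unlike in the general algorithm recalled in Sect.~\ref{sec:preliminaries}, where the outer iteration must range over all (a priori $\Theta(n)$ many) elements of $\cc(X)$---it suffices here to let the outer iteration range over the elements of $X$ only. Concretely, I would compute, for each fixed $u\in X$, the set $J_u := \bigcup_{v\in X} I(u,v)$, and return $\bigcup_{u\in X} J_u$; by the theorem this union equals $\bigcup_{u,v\in X} I(u,v) = \cc(X)$.

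Next I would argue that each $J_u$ can be computed in $\bigO{n+m}$ time from a single BFS. Running BFS from $u$ yields the distances $d(u,\cdot)$ and the layered shortest-path DAG $D_u$ rooted at $u$ (an arc from $w$ to $w'$ whenever $ww' \in E(G)$ and $d(u,w') = d(u,w)+1$), all in $\bigO{n+m}$ time. A vertex $w$ lies on some shortest $u$--$v$ path, i.e.\ $w \in I(u,v)$, precisely if $w$ is an ancestor of $v$ in $D_u$ (with the convention that $v$ is an ancestor of itself). Hence $J_u$ is the set of vertices from which some $v \in X$ is reachable in $D_u$, which can be obtained by a single multi-source search in the reverse of $D_u$ started from all $v \in X$ simultaneously---again in $\bigO{n+m}$ time, independently of $|X|$.

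It then remains to total the cost: there are $|X|$ choices of $u$, each contributing $\bigO{n+m}$, for an overall bound of $\bigO{|X|(n+m)} = \bigO{m|X|}$; and since $G$ is outerplanar we have $m \le 2n-3 = \bigO{n}$, which gives the claimed $\bigO{n|X|}$.

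The step I expect to require the most care is the second one: verifying that one BFS per source suffices, i.e.\ that the reverse-reachability characterization of $I(u,v)$ in the DAG $D_u$ is correct, and that handling all targets $v \in X$ in a single reverse traversal keeps the per-source cost $\bigO{n+m}$ rather than $\bigO{|X|\cdot n}$. Everything else is bookkeeping together with the standard fact that outerplanar graphs have $\bigO{n}$ edges.
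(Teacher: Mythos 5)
Your proposal is correct and follows essentially the same route as the paper: Thm.~\ref{thm:outerplanar_preclosure} reduces the outer iteration from all of $\cc(X)$ to the elements of $X$, and each source $u\in X$ is handled by one SSSP/BFS computation in $\bigO{n+m}$ time, giving $\bigO{m|X|}=\bigO{n|X|}$ since $m=\bigO{n}$ for outerplanar graphs. Your explicit reverse-reachability argument in the shortest-path DAG is just a careful spelling-out of the ``compute $\bigcup_{v}I(u,v)$ by solving SSSP from $u$'' step that the paper already sketches in Sect.~\ref{sec:preliminaries}.
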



\section{Spanning Outerplanar Subgraphs}
\label{sec:sampling}

The main contribution of this section is Alg.~\ref{alg:outerplanarSampling}, which generates a random spanning outerplanar subgraph for an undirected graph $G$ in time \textit{linear} in $m$. 
By Theorem~8 in \cite{Djidjev95}, for any graph $G$, one can generate a spanning \textit{planar} subgraph in $\bigO{m}$ time that is \textit{maximal} w.r.t. planarity. 
According to \cite{Djidjev95}, the algorithm for spanning planar subgraphs  can be modified in a way that it generates a \textit{maximal} spanning \textit{outerplanar} subgraph of $G$, also in $\bigO{m}$ time.
However, to the best of our knowledge, there exists no (simple) algorithmic realization of this result stated in \cite{Djidjev95} (cf. the discussion in Sect. 3.5 in \cite{Leip98}). 
We therefore propose an alternative algorithm that is \textit{easy} to implement and \textit{fast} in practice.
Our experimental results with Erd\H{o}s-R\'enyi random graphs clearly show that the spanning outerplanar graphs generated by our algorithm are \textit{almost maximal}.
More precisely, at most $0.3\%$ of the edges were missing for maximality. 
We note that Alg.~\ref{alg:outerplanarSampling} can easily be modified in a way that the face number of the output spanning outerplanar graphs becomes \textit{controllable} by some user specified upper bound.  
Thus, in the particular case that the face number is bounded by some \textit{constant}, the algorithm presented in \Cref{sec:closure} calculates the closure of any set of vertices in the output outerplanar graph in $\bigO{n}$ time.

Similarly to \cite{Fray2012}, our sampling algorithm is based on utilizing a basic property of \textit{Tr\'emaux trees}. 
More precisely, let $G$ be an undirected graph and $r \in V(G)$.
We assume w.l.o.g. that $G$ is connected.
Let $T$ be a DFS-tree of $G$ rooted at $r$. 
The connectivity of $G$ implies $V(T)=V(G)$.
We regard $T$ as a \textit{sorted} tree, where the order on the children of the vertices is defined by the DFS traversal of $G$. 
It is a well-known fact that $T$ is a Tr\'emaux tree of $G$, i.e., $v\preccurlyeq w$ or $w\preccurlyeq v$ holds for all \textit{back edges} $vw \in E(G) \setminus E(T)$, where for all $x,y \in V(T)$, $x \preccurlyeq y$ iff $\Path(r,y)$ in $T$ contains $x$.
In what follows, for all $v,w \in V(T)$, $p(v)$ denotes the parent of $v$ in $T$, $d(v)$ stands for the \textit{depth} of $v$ in $T$ (i.e., the length of $\Path(r,v)$ in $T$), and a back edge $vw \in E(G) \setminus E(T)$ with $w \preccurlyeq v$ is denoted by $(v,w)$.

It holds that the DFS traversal of $G$ defines a sequence of paths $P_1 = \Path(r_1,l_1),\ldots,P_k=\Path(r_k,l_k)$ in $T$, where $r_1 =r$, $l_1$ is the leftmost leaf of $T$, $r_{i+1}$ is the deepest vertex of $\Path(r,l_i)$ such that it has a child not belonging to $\bigcup_{\ell \leq i} V(P_\ell)$, and $l_{i+1}$ is the leftmost leaf in the subtree of $T$ rooted at $r_{i+1}$ that is not an element of $\bigcup_{\ell \leq i} V(P_\ell)$ $(1 \leq i < k)$.
For all back edges $(v,w)$ with $v\in V(P_i)$ we have that $w$ is a vertex of $\Path(r,l_i)$. 
The  sequence $P_1,\ldots,P_k$ is referred to as the (ordered) sequence of \textit{DFS paths} of $G$ w.r.t. $T$  (see Fig.~\ref{fig:outerplanar_example}a for an example with two paths).

Let $G'$ be a connected outerplanar graph, $T$ a DFS tree of $G'$ rooted at $r$ for some $r \in V(G')$, and let $\Path(r_1,l_1),\ldots,\Path(r_k,l_k)$ be the sequence of DFS paths of $G'$ w.r.t. $T$. 
Then $G'$ has an embedding in the plane with the following properties: For all leafs $l$ of $T$ and for all vertices $u$ on $P=\Path(r,l)$, the images of the vertices $v$ of $G'$ in $\mathbb{R}^2$ can be rotated around that of $r$, all in the same direction as $u$, such that the $x$-coordinate of the point representing $u$ becomes equal to that of $r$, the new embedding preserves the non-crossing edge property, and all back edges with both endpoints in $P$ are either on the left- or on the right-hand side of the vertical line containing $P$.
\begin{figure}
	\centering
	\begin{subfigure}[t]{0.45\linewidth}
		\centering
\begin{tikzpicture}[yscale=0.5]
\centering
\tikzmath{\xG1 = 0.0;  \xG2 = \xG1 + 2; \xd = \xG2 + 1.4;}

\newcommand{\DrawBox}[5]{
	\draw[#5] (#1, #2) -- (#1 + #3, #2) -- (#1 + #3, #2 - #4) -- (#1, #2 - #4) -- (#1, #2);
}

\node[inner sep=2pt] at (\xG1,1) {};
\node[circle, draw=black, inner sep=2pt] (R) at (\xG1,1) {};
\node[circle, draw=black, inner sep=2pt] (A) at (\xG1, 0) {};
\node[draw=black, circle, inner sep=2pt] (B) at (\xG1,-1) {};
\node[draw=black, circle, inner sep=2pt] (C) at (\xG1 - 0.5,-1) {};
\node[draw=black, circle, inner sep=2pt] (D) at (\xG1-0.5,-2) {};
\node[draw=black, circle, inner sep=2pt] (E) at (\xG1,-2) {};
\node[draw=black, circle, inner sep=2pt] (F) at (\xG1,-3) {};

\newcommand{\basegraph}{
\node[circle, draw=black, inner sep=2pt] (R) at (\xG1,1) {};
\node[circle, draw=black, inner sep=2pt] (A) at (\xG1, 0) {};
\node[draw=black, circle, inner sep=2pt] (B) at (\xG1,-1) {};
\node[draw=black, circle, inner sep=2pt] (C) at (\xG1-0.5,-1) {};
\node[draw=black, circle, inner sep=2pt] (D) at (\xG1-0.5,-2) {};
\node[draw=black, circle, inner sep=2pt] (E) at (\xG1,-2) {};
\node[draw=black, circle, inner sep=2pt] (F) at (\xG1,-3) {};

\draw[black, thick] (R) -- (A);
\draw[black, thick] (A) -- (B);
\draw[black, thick] (B) -- (C);
\draw[black, thick] (C) -- (D);
\draw[black, thick] (B) -- (E);
\draw[black, thick] (E) -- (F);

\draw[black, thick] (C) to [bend left = 30] (A);
\draw[black, thick] (D) to [bend left = 45] (R);
\draw[black, thick] (D) -- (B);

\path [black, thick] (F) edge [out=60,in=-60,bend angle=160,looseness=1](B);
\path [black, thick] (F) edge [out=200,in=-150,bend angle=200,looseness=1.2](A);
\path[black, thick] (F) edge [out=200,in=-160,bend angle=160,looseness=1] (R);
}

\basegraph

\begin{scope}[shift={(\xG2,0)}]
\node[inner sep=2pt, label=above:{\phantom{\textcolor{red}{$P$}}}] at (\xG1,1) {};
\node[circle, draw=black, inner sep=2pt, label=right:{$r=r_1$}] (R) at (\xG1,1) {};
\node[circle, draw=black, inner sep=2pt] (A) at (\xG1, 0) {};
\node[draw=black, circle, inner sep=2pt, label=right:$r_2$] (B) at (\xG1,-1) {};
\node[draw=black, circle, inner sep=2pt] (C) at (\xG1 - 0.5,-1) {};
\node[draw=black, circle, inner sep=2pt, label=below:$l_1$] (D) at (\xG1-0.5,-2) {};
\node[draw=black, circle, inner sep=2pt] (E) at (\xG1,-2) {};
\node[draw=black, circle, inner sep=2pt, label=right:$l_2$] (F) at (\xG1,-3) {};

\draw[red, thick, -latex] (R) -- (A);
\draw[red, thick, -latex] (A) -- (B);
\draw[red, thick, -latex] (B) -- (C);
\draw[red, thick, -latex] (C) -- (D);
\draw[red, thick, -latex] (B) -- (E);
\draw[red, thick, -latex] (E) -- (F);

\draw[black, dashed, thick, -latex] (C) to [bend left = 30] (A);
\draw[black, dashed,thick, -latex] (D) to [bend left = 45] (R);
\draw[black, dashed,thick, -latex] (D) -- (B);

\path [black, dashed,thick, -latex] (F) edge [out=60,in=-60,bend angle=160,looseness=1](B);
\path [black, dashed,thick, -latex] (F) edge [out=200,in=-150,bend angle=200,looseness=1.2](A);
\path[black, dashed, thick, -latex] (F) edge [out=200,in=-160,bend angle=160,looseness=1] (R);
\end{scope}

\end{tikzpicture}		
		\caption{\label{fig:outerplanar_dfs}\textit{Left:} Input graph $G$, \textit{Right:} DFS-Traversal of $G$ rooted at $r$ with DFS edges in red and back edges in dashed black. The paths are $P_1=\Path(r_1, l_1)$ and $P_2=\Path(r_2, l_2)$.}
		
	\end{subfigure}
	\hfill
	\begin{subfigure}[t]{0.45\linewidth}
		\centering
		\begin{tikzpicture}[yscale=0.5]
	\centering
	\tikzmath{\xG1 = 0.0; \xd = \xG1 + 1.5;}
	
	\newcommand{\DrawBox}[5]{
		\draw[#5] (#1, #2) -- (#1 + #3, #2) -- (#1 + #3, #2 - #4) -- (#1, #2 - #4) -- (#1, #2);
	}
	
	\node[inner sep=2pt, label=above:{\textcolor{red}{$P$}}] at (\xG1,1) {};
	\node[circle, draw=black, inner sep=2pt, label=right:{$r=r_1$}] (R) at (\xG1,1) {};
	\node[circle, draw=black, inner sep=2pt] (A) at (\xG1, 0) {};
	\node[draw=black, circle, inner sep=2pt, label=right:$r_2$] (B) at (\xG1,-1) {};
	\node[draw=black, circle, inner sep=2pt] (C) at (\xG1 - 0.5,-1) {};
	\node[draw=black, circle, inner sep=2pt, label=below:$l_1$] (D) at (\xG1-0.5,-2) {};
	\node[draw=black, circle, inner sep=2pt] (E) at (\xG1,-2) {};
	\node[fill, circle, inner sep=2pt, label=right:$l_2$] (F) at (\xG1,-3) {};

	\newcommand{\basegraph}{
		\node[circle, draw=black, inner sep=2pt] (R) at (\xG1,1) {};
		\node[circle, draw=black, inner sep=2pt] (A) at (\xG1, 0) {};
		\node[draw=black, circle, inner sep=2pt] (B) at (\xG1,-1) {};
		\node[draw=black, circle, inner sep=2pt] (C) at (\xG1-0.5,-1) {};
		\node[draw=black, circle, inner sep=2pt] (D) at (\xG1-0.5,-2) {};
		\node[draw=black, circle, inner sep=2pt] (E) at (\xG1,-2) {};
		\node[fill, circle, inner sep=2pt] (F) at (\xG1,-3) {};
		
		\draw[red, thick, -latex] (R) -- (A);
		\draw[red, thick, -latex] (A) -- (B);
		\draw[black, thick, -latex] (B) -- (C);
		\draw[black, thick, -latex] (C) -- (D);
		\draw[red, thick, -latex] (B) -- (E);
		\draw[red, thick, -latex] (E) -- (F);

		\draw[UniBlue, thick, -latex] (C) to [bend left = 30] (A);
		\draw[UniBlue, thick, -latex] (D) to [bend left = 45] (R);
	}
	\path [UniGray, dashed, thick, -latex] (F) edge [out=60,in=-60,bend angle=160,looseness=1](B);
	\path [UniGray, dashed, thick, -latex] (F) edge [out=200,in=-150,bend angle=200,looseness=1.2](A);
	\path[UniGray, dashed, thick, -latex] (F) edge [out=200,in=-160,bend angle=160,looseness=1] (R);
	\basegraph	
	\node[] (H1) at (\xd, 1.8) {$d$};
	\node[] (R1) at (\xd, 1) {$0$};
	\node[] (A1) at (\xd, 0) {$1$};
	\node[] (B1) at (\xd,-1) {$2$};
	\node[] (E1) at (\xd,-2) {$3$};
	\node[] (F1) at (\xd,-3) {$4$};
	
\end{tikzpicture}		
		\caption{\label{fig:add_edges_example}\textit{Blue Edges:} Already added left edges for path $\Path(r_1, l_1)$. \textit{Gray Edges:} Still unprocessed edges in the current path $P=\Path(r, l_2)$ drawn as a vertical line. $d$ denotes the node depth in $P$. The blue arrows denote valid back edges in $\Left$.}
		
	\end{subfigure}
	\\
	\begin{subfigure}[t]{\linewidth}
		\centering
\begin{tikzpicture}[yscale=0.5]
\centering
\tikzmath{\xLeft = 0.0; \xG1 = \xLeft + 0.0;  \xG2 = \xG1 + 3; \x1 = \xG2 + 4; \xL = \x1 + 1.8; \xR = \xL + 1.1; \xTL = \xR + 1.1; \xTR = \xTL + 1.1;}

\newcommand{\DrawBox}[5]{
	\draw[#5] (#1, #2) -- (#1 + #3, #2) -- (#1 + #3, #2 - #4) -- (#1, #2 - #4) -- (#1, #2);
}

\newcommand{\basegraph}{
	\node[inner sep=2pt, label=above:{\textcolor{red}{$P$}}] at (\xG1,1) {};
\node[circle, draw=black, inner sep=2pt, label=right:{$r=r_1$}] (R) at (\xG1,1) {};
\node[circle, draw=black, inner sep=2pt] (A) at (\xG1, 0) {};
\node[draw=black, circle, inner sep=2pt, label=right:$r_2$] (B) at (\xG1,-1) {};
\node[draw=black, circle, inner sep=2pt] (C) at (\xG1 - 0.5,-1) {};
\node[draw=black, circle, inner sep=2pt] (D) at (\xG1-0.5,-2) {};
\node[draw=black, circle, inner sep=2pt] (E) at (\xG1,-2) {};
\node[fill, circle, inner sep=2pt, label=right:$l_2$] (F) at (\xG1,-3) {};

\draw[red, thick, -latex] (R) -- (A);
\draw[red, thick, -latex] (A) -- (B);
\draw[black, thick, -latex] (B) -- (C);
\draw[black, thick, -latex] (C) -- (D);
\draw[red, thick, -latex] (B) -- (E);
\draw[red, thick, -latex] (E) -- (F);

\draw[UniBlue, thick, -latex] (C) to [bend left = 30] (A);
\draw[UniBlue, thick, -latex] (D) to [bend left = 45] (R);
}

\draw[thick] (\x1 -1, 1.5)--(\xTR + 1, 1.5);

\newcommand{\margin}{1}

\node[] (H1) at (\x1, 2) {$\reach$};
\node[] (R1) at (\x1, 1) {$\{L, R\}$};
\node[] (A1) at (\x1, 0) {$\{R\}$};
\node[] (B1) at (\x1,-1) {$\{L, R\}$};
\node[] (E1) at (\x1,-2) {$\{L, R\}$};
\node[] (F1) at (\x1,-3) {$\{L, R\}$};

\node[] (B12) at (\x1+\margin,-1) {\textcolor{orange}{$\rightarrow\{L\}$}};
\node[] (E12) at (\x1+\margin,-2) {\textcolor{orange}{$\rightarrow\{L\}$}};

\node[] (HL) at (\xL, 2) {$\bound_{L}$};
\node[] (RL) at (\xL, 1) {$F$};
\node[] (AL) at (\xL, 0) {$T$};
\node[] (BL) at (\xL,-1) {$T$};
\node[] (EL) at (\xL,-2) {$F$};
\node[] (FL) at (\xL,-3) {$F$};

\node[] (HR) at (\xR, 2) {$\bound_{R}$};
\node[] (RR) at (\xR, 1) {$F$};
\node[] (AR) at (\xR, 0) {$F$};
\node[] (BR) at (\xR,-1) {$F$};
\node[] (ER) at (\xR,-2) {$F$};
\node[] (FR) at (\xR,-3) {$F$};

\node[] (BR1) at (\xR+\margin/2,-1) {\textcolor{orange}{$\rightarrow T$}};
\node[] (ER1) at (\xR+\margin/2,-2) {\textcolor{orange}{$\rightarrow T$}};

\node[] (HTL) at (\xTL, 2) {$\lastL$};
\node[] (RTL) at (\xTL, 1) {$0$};
\node[] (ATL) at (\xTL, 0) {$0$};
\node[] (BTL) at (\xTL,-1) {$2$};
\node[] (ETL) at (\xTL,-2) {$2$};
\node[] (FTL) at (\xTL,-3) {$2$};

\node[] (FTL1) at (\xTL+\margin/2,-3) {\textcolor{orange}{$\rightarrow 3$}};

\node[] (HTR) at (\xTR, 2) {$\lastR$};
\node[] (TTR) at (\xTR, 1) {$0$};
\node[] (ATR) at (\xTR, 0) {$0$};
\node[] (BTR) at (\xTR,-1) {$1$};
\node[] (ETR) at (\xTR,-2) {$1$};
\node[] (FTR) at (\xTR,-3) {$1$};

\begin{scope}[shift={(\xG1,0)}]
\basegraph
\draw[UniBlue, dashed, thick, -latex] (F) to [bend left = 30] (B);
\path [UniGray, dashed, thick, -latex] (F) edge [out=150,in=-150,bend angle=200,looseness=1.2](A);
\path[UniGray, dashed, thick, -latex] (F) edge [out=160,in=-160,bend angle=160,looseness=1] (R);
\end{scope}

\begin{scope}[shift={(\xG2,0)}]
\basegraph
\path [orange, dashed, thick, -latex] (F) edge [out=50,in=-50,bend angle=160,looseness=1](A);
\path [orange, dashed, thick, -latex] (F) edge [out=60,in=-60,bend angle=160,looseness=1](B);
\path [UniGray, dashed, thick, -latex] (F) edge [out=40,in=-40,bend angle=10,looseness=0.8](R);
\end{scope}

\end{tikzpicture}
		\caption{\label{fig:outerplanar_algorithm}Running the subroutine \text{\sc AddEdges} for $l_2$. \textit{Left:} Determine $E_L$ (valid left back edges in dashed blue), \textit{Middle:} Determine $E_R$ (valid right back edges in dashed orange). \textit{Right:} The table shows the algorithm parameters before adding the orange edges (black numbers), changes after adding the orange edges to $\Right$ are marked in orange.}
	\end{subfigure}
	\caption{\label{fig:outerplanar_example}(\ref{fig:outerplanar_dfs}) shows an example of DFS-Traversal of a graph $G$, (\ref{fig:add_edges_example}) shows some possible intermediate result of Alg.~\ref{alg:outerplanarSampling} with unconsidered back edges starting at $l_2$ in dashed grey. (\ref{fig:outerplanar_algorithm}) shows the results of the \text{\sc AddEdges} subroutine with edges in $E_L$ marked in dashed blue and edges in $E_R$ marked in dashed orange.}
\end{figure}

Notice that if a back edge belongs to more than one path from a leaf to the root, then it is either to the left or to the right for \textit{all} such paths.
The set of left (resp. right) back edges of $G'$ is denoted by $\Left$ (resp. $\Right$).
A vertex $x$ of $G'$ lying on $\Path(r,l_i)$ is \textit{reachable} from \textit{left} (resp. \textit{right}) w.r.t. $\Path(r_i,l_i)$ if there is no left (resp. right) back edge $(v,w)$ with $w \prec x \prec v$ such that $x \prec r_i$ or 
$v \preccurlyeq l_i$.

Let $G$ be a connected graph, $T$ a DFS tree of $G$, and $G'$ be a spanning outerplanar subgraph of $G$ containing $T$ as a subgraph. 
We assume that $G'$ is embedded into the plane in the way sketched above, i.e., all back edges of $G'$ are either left or right back edges. 
A back edge $(v,w)\in E(G)\setminus E(G')$ is \textit{valid} if it can be added to $G'$ as a \textit{left} or \textit{right} back edge such that it intersects no other edges from $G'$ and for all vertices $v$ in the resulting graph there exists a path $P_i$ of $T$ such that $v$ is reachable from left or right w.r.t. $P_i$. 
One of the crucial steps in the generation of a spanning outerplanar graph of $G$ w.r.t. $T$ is to check the validity of back edges.  
To decide this problem in \textit{constant} time, we introduce some further notions.  More precisely, let $P_i=\Path(r_i,l_i)$ be a DFS path of $T$ and $v$ be a vertex with $r_i \preccurlyeq v$.
Then 
\begin{itemize}
	\item $\reach(v,P_i) \subseteq \{\dL, \dR\}$ denotes the direction(s) from which $v$ can be reached in $G'$ w.r.t. $P_i$,
	\item $\lastL(v)$ (resp. $\lastR(v)$) denotes the smallest depth of the vertex $w$ on $\Path(r,v)$ in $T$ such that $(v, w)$ is a valid left (resp. right) back edge, and
	\item $\bound_{L}(v)$ (resp. $\bound_{R}(v)$) is {\sc True} if there are $P=\Path(r, l_j)$ and $u,w \in V(P)$ for some $j$ $(1\leq j\leq k)$  such that $v\in V(P)$, $(u, w)$ is a left (resp. right) back edge, and $w\prec v \prec u$; o/w it is {\sc False}.
\end{itemize}

\begin{algorithm}[t]
	\KwIn{connected graph $G$}
	\KwOut{spanning outerplanar subgraph $H$ of $G$}
	select a vertex $r\in V(G)$ at random\label{line:selectrootvertex}\;
	generate a DFS tree $T$ of $G$ rooted at $r$ and with DFS paths $P_i=[v_1=r_i,\ldots,v_{n_i}=l_i], (1 \leq i \leq k)$\label{Line:1}\; 
	$\Left_0, \Right_0\leftarrow\emptyset$, $l \leftarrow 0$
	\label{line:update_init}\;
	$\lastL(r), \lastR(r) \leftarrow 0$\;
	$\bound_L(v), \bound_R(v) \leftarrow $ {\sc False} for all $v\in V(T)$\label{line:update_init_end}\;
	\For{$i=1,\ldots,k$\label{line:path_iteration}}{
		$\reach(r_i) \leftarrow \{\dL, \dR\}$\label{line:reachabiliy_r}\;
		\For{$\delta\in \{\dL,\dR\}$}{
			\textbf{if} {$\bound_{\delta}(r_i) \vee r_i=r$} \textbf{then}
			$\opn{{\last}_{\delta}}(r_i) = d(r_i)$\label{line:set_last_r}\;	 
			\textbf{else} ${\last}_\delta(r_i) = {\last}_\delta(p(r_i))$
			\label{line:set_last_r_parent}\;
		}
		\For{$j=2,\ldots,n_i$ \label{line:node_iteration}}{
			$\reach(v_j) \leftarrow \{\dL, \dR\}$\label{line:reachability_v}\;
			$\lastL(v_j) = \lastL(p(v_j))$,
			$\lastR(v_j) = \lastR(p(v_j))$\label{line:set_last_v}\;
			$F = \{(v_j, w)\in E :  w\prec v_j \}$\label{line:f_edges}\;
			$(E_L,E_R) = \text{\sc AddEdges}(v_j,F)$\label{line:add}\;
			$l \leftarrow l+1$\;
			$\Left_l = \Left_{l-1} \cup E_L, \Right_{l} = \Right_{l-1} \cup E_R$\label{line:path_iteration_end}\;
		}
	} 
	\Return $H=(V, E(T)\cup \Left_l \cup \Right_l)$\;
	\caption{\sc Spanning Outerplanar Subgraph}
	\label{alg:outerplanarSampling}
\end{algorithm}

Using the above notions and notation, we are ready to present Alg.~\ref{alg:outerplanarSampling} (see, also, Fig.~\ref{fig:outerplanar_example} for an example to demonstrate the algorithm).
In lines~\ref{line:selectrootvertex}--\ref{Line:1} it first computes a DFS tree of the input graph $G$ for some arbitrary root $r \in V(G)$ (see Fig.~\ref{fig:outerplanar_dfs}).
In lines~\ref{line:update_init}--\ref{line:update_init_end} it initializes some variables.
In particular, the left (resp. right) valid back edges that will be added to $T$ will be stored in the set variables $\Left_l$ (resp. $\Right_l$).
Since no back edge going out from the root can be added to $T$, $\lastL(r)$ and $\lastR(r)$ are both set to $0$.
Furthermore, the Boolean variables $\bound_L(v), \bound_R(v)$ are set to {\sc False} for all $v\in V$, as $T$ has no back edge initially.

The algorithm then processes the DFS paths $P_1,P_2,\ldots,P_k$ of $T$ in their DFS order defined above (cf. loop~\ref{line:path_iteration}--\ref{line:path_iteration_end}).
For each $P_i = \Path(r_i,l_i)$, it adds greedily as many as possible back edges to $\Path(r,l_i)$ with at least one endpoint in $P_i$ such that the extension does not violate outerplanarity.
In particular, it considers the vertices of $P_i$ one by one, from $r_i$ towards $l_i$ (Fig.~\ref{fig:add_edges_example} shows the new back edges added to $\Path(r_1, l_1)$ in blue). 
While processing the vertices of $P_i$, their reachability is set to $\{L, R\}$ (cf. lines~\ref{line:reachabiliy_r} and \ref{line:reachability_v}). Since we have not yet added any back edge to $P_i$, all of them are reachable from left as well as from right w.r.t. $P_i$. 
For simplicity, we omit the reference path $\Path(r,l_i)$ from the notation, by noting that all vertices above $r_i$ inherit their reachability state w.r.t. some previous path $\Path(r,l_j)$ with $j <i$.

For all vertices $v$ of $P_i$, $\lastL(v)$ (resp. $\lastR(v)$) is set to the depth of $r_i$ (cf. line~\ref{line:set_last_r}) if (i) $v = r_i$ and there is a $j$, $1 \leq j < i$, such that $r_i$ is \textit{not} reachable from left (resp. right) w.r.t. $\Path(r,l_j)$ or (ii) $v=r$; o/w it is set to $\lastL(p(v))$ (resp. $\lastR(p(v))$) (cf. lines~\ref{line:set_last_r_parent} and \ref{line:set_last_v}).  
Regarding the first case, there is no valid left/right back edge going out from $v$ satisfying (i) or (ii) and hence, its left/right smallest depth cannot be smaller than $d(v)$. 
For all other cases, if a back edge $(v, w)$ added to left (resp. right) destroys the reachability of some vertex $x$ with $w \preccurlyeq x$, then all other back edges $(v', w')$ with $v \preccurlyeq v'$ and $w'\preccurlyeq w$ added to left (resp. right) also destroy it. Hence, it suffices to store the depth of the vertex which hast the lowest depth and is a valid endpoint for a back edge added to left (resp. right).
After all relevant pieces of information have been calculated for $v_j$, we take the set of all possible back edges from $v_j$ ending in some vertex $w \prec v_j$ (line~\ref{line:f_edges}) and compute a maximal subset of this set of edges in function {\sc AddEdges} that can be added to $\Path(r,l_i)$ without destroying outerplanarity (line~\ref{line:add}) (see Fig.~\ref{fig:add_edges_example} and \ref{fig:outerplanar_algorithm}).

\begin{algorithm}[t]
	\nonl\textbf{Assumed:} undirected graph $G$ and DFS tree $T$ of $G$ \\ 
	\KwIn{$v \in V(G)$ and a set $F$ of back edges, all with initial vertex $v$}
	\KwOut{$E_L,E_R \subseteq F$ with $E_L = \emptyset$ or $E_R = \emptyset$}
	$E_L, E_R\leftarrow\emptyset$\;
	\For{$\delta\in \{L, R\}$\label{line:candidate_backedges_start}}{
		\For{$(v, w)\in F$}{
			\If{$\delta\in\reach(w)$ \textbf{and} ${\last}_\delta(v) \leq  d(w)$\label{line:condition}}{
				add $(v, w)$ to $E_\delta$\label{line:candidate_backedges_end}\;
			}
		}
	}
	$X\leftarrow L, Y\leftarrow R$\label{line:update_left}\;
	\textbf{if} $|E_R|>|E_L|$\label{line:update_right} \textbf{then}
	$X\leftarrow R, Y\leftarrow L$\;
	\If{$E_X \neq \emptyset$}{ 	
		${\last}_Y(v)$ = $d(p(v))$\label{line:enclosing2}\;
		\For{$(v, w)\in E_X$\label{line:update_prop}}{
			\For{$x$ in the open intervall $(v, w)$}{
				delete $X$ from $\reach(x)$\label{line:removing}\;
				${\last}_Y(x) = d(x)$\label{line:enclosing1}\;
				$\bound_X(x) = \text{\sc True}$\label{line:set_spanning}\;
			}	
		}
	}
	\textbf{if} $X = L$ \textbf{then} \Return $(E_L,\emptyset)$\label{line:XeqL}\;
	\textbf{else} \Return $(\emptyset,E_R)$\label{line:XeqR}\; 
	\caption{\sc Function AddEdges}
	\label{alg:check_valid_edges}
\end{algorithm}

Function {\sc AddEdges} is specified in Alg.~\ref{alg:check_valid_edges}.
Its input consists of a vertex $v$ of $T$ and a set $F$ of candidate back edges for $\Path(r,l_i)$ processed currently by Alg.~\ref{alg:outerplanarSampling}, each with starting vertex $v$.
Alg.~\ref{alg:check_valid_edges} tries to add as many as possible edges of $F$  to $\Path(r,l_i)$, either all from left or from right, without violating outerplanarity. 
In particular, each back edge $(v,w)$ is checked in loop~\ref{line:candidate_backedges_start}--\ref{line:candidate_backedges_end} for left and right validity w.r.t. $\Path(r,l_i)$ (cf. the condition in line~\ref{line:condition}) and, depending on the outcome of the test, is added to $E_L$ and $E_R$. As an example, all the gray edges in Fig.~\ref{fig:outerplanar_algorithm} violate at least one of the two conditions in line~\ref{line:condition} of Alg.~\ref{alg:check_valid_edges}, while the colored edges fulfill both of them.
Notice that once a back edge $(v, w) \in F$ has been added to one of the  sides of $\Path(r,l_i)$, then \textit{no} back edge $(v,w') \in F$ can be added to its other side, as $\reach(p(v),\Path(r,l_i))$ became empty, violating the reachability property of $p(v)$. 
Thus, we can add either all edges from $E_L$ to the left or all edges from $E_R$ to the right side of $\Path(r,l_i)$.
Since our goal is to maximize the number of back edges in $G'$, we take the set with the greater cardinality (cf. lines~\ref{line:update_left}--\ref{line:update_right}). 

After the selection of one of the two sets, say $E_L$ (the case of $E_R$ is analogous), we update the reachability information of the vertices on $\Path(r,v)$ as follows:
Since all back edges are of length at least $2$, no back edge $(s,t)$ with $l_i \succ s \succ v \succ p(v) \succ t$ can be added to the right of $\Path(r,l_i)$, as $p(v)$ became unreachable from both directions.
Therefore, $\lastR(v)$ has to be set to $d(p(v))$ (cf. line~\ref{line:enclosing2}). 
Furthermore, for all back edges $(v,w) \in E_L$ and for all internal vertices $x$ of $\Path(v,w)$, $x$ becomes unreachable from left (i.e., $L$ must be deleted from the reachability set of $x$ w.r.t. $\Path(r,l_i)$). 
Moreover, $(v,w)$ prohibits any left back edge in any possible path $P_j= \Path(r_j,l_j)$ with $x=r_j$ (i.e., $\bound_L(x)$ has to be set to $\text{\sc True}$) (cf. line~\ref{line:set_spanning}) and the terminal vertex of any right back edge in $P_j$ cannot be smaller w.r.t. the depth than $d(p(x))$ (i.e., $\lastR(x)$ has to be set to $d(x)$). 
In our example in Fig.~\ref{fig:outerplanar_algorithm}, by adding the orange edges to $\Right$ we update the parameters to the orange values (see, also, the table in Fig.~\ref{fig:outerplanar_algorithm}).

\begin{theorem} 
	\label{thm:outerplanar}
	For any connected graph $G$, Alg.~\ref{alg:outerplanarSampling} returns a spanning outerplanar subgraph of $G$ in $\bigO{m}$ time.
\end{theorem}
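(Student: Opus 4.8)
The statement has two independent parts: that the returned $H$ is a spanning outerplanar subgraph of $G$, and that the running time is $\bigO{m}$ --- and the theorem asks only for outerplanarity, not inclusion-maximality, so no extremality argument is needed. I would prove the first part by induction on the number $l$ of back edges added so far, maintaining a structural invariant. Since Alg.~\ref{alg:outerplanarSampling} starts from a DFS tree $T$ with $V(T)=V(G)$ and only ever adds back edges of $G$, the output $H=(V,E(T)\cup\Left_l\cup\Right_l)$ is automatically a spanning subgraph of $G$, so only outerplanarity is at stake. Writing $G'_l$ for the intermediate graph after $l$ additions, the invariant I would carry is: $G'_l$ is outerplanar and admits a plane embedding of the special ``vertical root--leaf path'' form described just before Alg.~\ref{alg:outerplanarSampling} (so every back edge of $G'_l$ is, consistently across all root--leaf paths through it, either a left or a right back edge), and the bookkeeping values $\reach(\cdot),\lastL(\cdot),\lastR(\cdot),\bound_L(\cdot),\bound_R(\cdot)$ equal the quantities they are defined to equal for $G'_l$ and the DFS path currently processed. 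The base case $l=0$ is $T$ together with the initialisation of lines~\ref{line:update_init}--\ref{line:update_init_end}, where everything holds trivially. The heart of the induction step is a lemma I would prove first: for a back edge $(v,w)\in E(G)\setminus E(G'_l)$ with $v$ on the currently processed DFS path, $(v,w)$ is valid and can be added on the left (resp.\ right) \emph{if and only if} $L\in\reach(w)$ and $\lastL(v)\le d(w)$ (resp.\ $R\in\reach(w)$ and $\lastR(v)\le d(w)$) --- exactly the test in line~\ref{line:condition} of {\sc AddEdges}; informally, the first conjunct forbids crossing an arc already enclosing $w$, and the second forbids creating an arc that would enclose a vertex between $v$ and $w$ whose reachability on that side has already been destroyed.

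Granting this lemma, the induction step has three pieces. (i) Each edge that {\sc AddEdges} adds is individually valid on the side chosen. (ii) Adding all of $E_X$ --- the larger of $E_L$ and $E_R$ (lines~\ref{line:update_left}--\ref{line:update_right}) --- simultaneously on one side $X$ preserves the embedding: the edges of $E_X$ share their deeper endpoint $v$, so on side $X$ they are pairwise nested arcs and cross neither each other nor, by the lemma, any edge of $G'_l$; and one cannot instead split the additions over both sides, since the first arc drawn at $v$ already removes $X$ from $\reach(p(v))$ (because $p(v)$ lies strictly inside it), so a further arc from $v$ on side $Y$ would empty $\reach(p(v))$ --- which is also why line~\ref{line:enclosing2} sets ${\last}_Y(v)=d(p(v))$. (iii) Lines~\ref{line:enclosing2}--\ref{line:set_spanning} re-establish the bookkeeping part of the invariant: deleting $X$ from $\reach(x)$, setting $\bound_X(x)$, and setting ${\last}_Y(x)=d(x)$ for every internal vertex $x$ of the new arcs are precisely the forced changes, using the monotonicity remark in the text (if an arc $(v,w)$ destroys the reachability of an internal vertex $x$ on side $X$, then so does every arc $(v',w')$ on side $X$ with $v\preccurlyeq v'$ and $w'\preccurlyeq w$) --- which is also why a single scan over the vertices enclosed by the whole nested family at $v$ suffices, and why the update is correct for all \emph{later} DFS paths and not only the current one. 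Since $H$ is the final $G'_l$, it is outerplanar.

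For the running time, choosing the root, computing $T$, and extracting its DFS-path decomposition (lines~\ref{line:selectrootvertex}--\ref{Line:1}) cost $\bigO{n+m}$. In the nested loops of Alg.~\ref{alg:outerplanarSampling} every non-root vertex is processed exactly once, on the DFS path on which it is first discovered, so $\sum_i(n_i-1)=n-1$ and there are $\bigO{n}$ vertex iterations (plus $\bigO{k}=\bigO{n}$ per-path root handling), each with $\bigO{1}$ overhead outside {\sc AddEdges}. For a vertex $v_j$, assembling $F$ in line~\ref{line:f_edges} costs $\bigO{\deg(v_j)}$ and the two checking loops of {\sc AddEdges} cost $\bigO{|F|}$, both summing to $\bigO{m}$ over all vertices, since the number of back edges is $m-(n-1)$. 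The one delicate term is the update loop of {\sc AddEdges} (lines~\ref{line:update_prop}--\ref{line:set_spanning}), which scans internal vertices of the new arcs: I would implement it to stop as soon as it reaches an already-enclosed vertex (one with $\bound_X$ already {\sc True}), so that over the whole run it touches each vertex $\bigO{1}$ times --- sound because $\reach(x)$ is set once to $\{\dL,\dR\}$ when $x$ is first processed (lines~\ref{line:reachabiliy_r} and \ref{line:reachability_v}) and only shrinks thereafter, and $\bound_X(x)$ flips from {\sc False} to {\sc True} at most once. Hence the update loops contribute $\bigO{n}$ overall, and the algorithm runs in $\bigO{n+m}=\bigO{m}$ time, using $m\ge n-1$ for connected $G$.

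The main obstacle is the ``if and only if'' lemma characterizing valid back edges, together with the claim that the constant-time updates in {\sc AddEdges} keep $\reach,\lastL,\lastR,\bound_L,\bound_R$ faithful to the evolving embedding: one has to translate the geometric ``no crossing and everybody reachable'' condition into the combinatorial depth and reachability inequalities, and check that a single scan over the vertices enclosed by a \emph{nested family} of additions at one vertex updates all of these quantities correctly, and for \emph{all} later DFS paths rather than only the current one. A lighter secondary point is making the amortized $\bigO{n}$ bound on the update loop rigorous --- in particular, arguing that the vertices a given arc still needs to enclose form a prefix along the relevant root path, so that stopping early is sound.
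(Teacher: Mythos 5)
Your overall architecture --- induction on the number of added back edges for outerplanarity, plus an amortized argument that the update loop touches each vertex $\bigO{1}$ times --- is exactly the paper's (its Lemma~\ref{lem:correctness} and Lemma~\ref{lem:complexity}; the early-stopping scan you describe is essentially the global stack used in Lemma~\ref{lem:complexity}, which additionally notes that a vertex may re-enter once as a path root $r_i$, harmless for the bound). The genuine gap is that the entire burden of the correctness half rests on your ``if and only if'' lemma together with the claim that $\reach$, $\lastL$, $\lastR$, $\bound_L$, $\bound_R$ remain exactly faithful to their definitions, and you prove neither; you name them as the main obstacle. That obstacle is where essentially all of the work lies: in the paper it corresponds to Lemma~\ref{lem:algorithm} (a pair of nesting and left/right non-interference properties of $\Left_l$ and $\Right_l$, themselves proved by a delicate induction exploiting the order of the DFS traversal) plus the embedding construction of Lemma~\ref{lem:correctness} --- in particular the argument that all vertices remain on the outer face, which needs part~(ii) of Lemma~\ref{lem:algorithm} and an explicit redrawing of the subtrees hanging off the newly enclosed path segment. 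Your step~(ii) (``cross neither each other nor, by the lemma, any edge of $G'_l$'') and the outer-face requirement are precisely the statements that consume the paper's two pages of case analysis; asserting the lemma does not discharge them.

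A second, more instructive point: your proposed lemma is stronger than what the paper proves and stronger than what the theorem needs. Since the claim is only outerplanarity (not maximality), only the soundness direction of the test in line~\ref{line:condition} is required: every edge that passes the test can be embedded. The paper never establishes the converse, nor that the bookkeeping variables coincide exactly with their intended definitions --- which is consistent with the empirical observation that the output is only \emph{almost} maximal. Committing to the ``only if'' direction and to exact faithfulness makes your induction hypothesis considerably heavier (you must verify that the constant-time updates never over- or under-restrict $\reach$ and ${\last}_{L/R}$ for \emph{all} later DFS paths), without buying anything for the statement at hand. I would drop the converse and replace ``the bookkeeping equals its definition'' by one-sided combinatorial invariants on $\Left_l$ and $\Right_l$ in the style of Lemma~\ref{lem:algorithm}, from which the embedding is then constructed. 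The running-time half of your argument is correct and matches Lemma~\ref{lem:complexity}.
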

\begin{proof}
	The claim follows from the fact that the algorithm first generates a spanning subtree in Line~\ref{Line:1} and the results of Lemma~\ref{lem:correctness} and Lemma~\ref{lem:complexity} below.
\end{proof}

To prove Lemma~\ref{lem:correctness}, we use the following auxiliary lemma.
\begin{lemma}
	\label{lem:algorithm}
	For all $l \geq 0$, $\Left_l$ (resp. $\Right_l$) in Alg.~\ref{alg:outerplanarSampling} fulfills the properties:
	\begin{enumerate}[label = \arabic*)]
		\item[(i)] For all $(v_1, w_1), (v_2, w_2)$ in $\Left_l$ (resp. $\Right_l$) and $y \in V(G)$ satisfying $w_2\prec y \preccurlyeq v_1$ and $w_2 \prec y \preccurlyeq v_2$, it holds that
		\begin{eqnarray}
			w_1 \prec w_2 
			& \implies & v_2 \preccurlyeq v_1 \label{eq:implication1}\\
			w_1 = w_2 &\implies & v_1 \prec v_2 \text{ or } v_2 \prec v_1 \enspace . \label{eq:implication2}
		\end{eqnarray}
		\item[(ii)] For all $(v_a, w_a)$ in $\Left_l$ (resp. $\Right_l$) and $x \in V(T)$ with
		$w_a \prec x \prec v_a$, there is no $(v_b, w_b)\in \Right_l$ (resp. $\Left_l$) with $v_a\preccurlyeq v_b$ and $w_b\preccurlyeq w_a$.
	\end{enumerate}
\end{lemma}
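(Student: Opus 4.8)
The plan is to prove (i) and (ii) simultaneously by induction on the counter $l$ of Alg.~\ref{alg:outerplanarSampling} (it is incremented once per vertex handled by the inner loop, and these vertices are handled in DFS‑discovery order, so every ancestor is handled before its descendants). The base case $l=0$ is vacuous since $\Left_0=\Right_0=\emptyset$. For the step, recall that \textsc{AddEdges} returns $E_L=\emptyset$ or $E_R=\emptyset$; by the $L\leftrightarrow R$ symmetry of the statement we may assume $E_R=\emptyset$, so $\Right_l=\Right_{l-1}$ and $\Left_l=\Left_{l-1}\cup E_L$, where every edge of $E_L$ has the form $(v_j,w)$ for the vertex $v_j$ currently handled and was admitted only because it passed the test in line~\ref{line:condition}, i.e.\ $L\in\reach(w)$ and $\lastL(v_j)\le d(w)$. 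Hence for (i) the only new instances are pairs of edges at least one of which lies in $E_L$, and for (ii) the only new instances are those whose $\Left_l$‑member lies in $E_L$. Besides (i)--(ii) I would carry two bookkeeping invariants that the algorithm maintains by construction: (a) when line~\ref{line:condition} tests a candidate $(v,w)$, the guard $\delta\in\reach(w)$ holds iff $w$ is reachable from side $\delta$ w.r.t.\ the current path in $T\cup\Left_{l-1}\cup\Right_{l-1}$; and (b) $\lastL,\lastR,\bound_L,\bound_R$ faithfully encode the admissible terminal depths of further left/right back edges and which vertices are interior to a left/right back edge — in particular $\bound_L(x)$ remains {\sc True} once set, a vertex interior to a left back edge reappears in a later DFS path only as that path's root, and whenever $x$ becomes a path root and $\bound_L(x)$ is already {\sc True}, line~\ref{line:set_last_r} pins $\lastL(x)=d(x)$, which then propagates downward (line~\ref{line:set_last_v}) so that $\lastL(u)\ge d(x)$ for every descendant $u$ of $x$ handled afterwards (symmetrically with $R$).

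For property~(i): if both edges of the pair lie in $E_L$ they share the initial vertex $v_j$, and the hypotheses $w_2\prec y\preccurlyeq v_1$, $w_2\prec y\preccurlyeq v_2$ force $v_1=v_2=v_j$, so the relevant implication holds trivially. Otherwise exactly one edge, say $e^{\star}=(v_j,w^{\star})\in E_L$, is new and $e'=(v',w')\in\Left_{l-1}$ is old; DFS order gives that $v'$ is neither $v_j$ (all left edges out of $v_j$ lie in $E_L$) nor a proper descendant of $v_j$, so $v'$ is a proper ancestor of $v_j$ or incomparable to it. When $v'\prec v_j$, every required conclusion holds except ``$v_j\preccurlyeq v'$'' demanded by \eqref{eq:implication1} for the ordered pair $(e',e^{\star})$ with $w'\prec w^{\star}$; for that exceptional demand the hypothesis forces $w'\prec w^{\star}\prec v'$, i.e.\ $w^{\star}$ interior to $e'$, and since both $v'$ (the initial vertex of $e'$) and $v_j$ descend from $w^{\star}$ through the subtree in which $v'$ — hence $e'$ — is handled before $v_j$, the flag $L$ was removed from $\reach(w^{\star})$ before $v_j$ was handled and never restored, so $L\notin\reach(w^{\star})$, contradicting admission by (a). When $v'$ is incomparable to $v_j$ with meet $m$: since $v'$ is handled before $v_j$, its subtree off $m$ is explored before $v_j$'s, which forces $m$ to be an ancestor of $r_i$ (otherwise $m$ would be a ``new'' vertex of $P_i$ whose $v'$‑subtree is explored only after all of $P_i$, contradicting $e'\in\Left_{l-1}$); and the hypothesis's common ancestor $y$ of $v',v_j$ lying strictly below the deeper endpoint of $\{w',w^{\star}\}$ forces both $w'$ and $w^{\star}$ strictly above $m$, hence $w'\prec m\prec v'$, i.e.\ $m$ is interior to $e'$. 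Consequently $\bound_L(m)$ became {\sc True} when $e'$ was inserted; since $v_j$ descends from $m$ and is reached, by the DFS‑order analysis above, through a path rooted at $m$ (or at $r_i$) that starts \emph{after} that insertion, line~\ref{line:set_last_r} pins $\lastL(m)=d(m)$, which propagates to $\lastL(v_j)\ge d(m)>d(w^{\star})$ — contradicting the admission test of $e^{\star}$.

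For property~(ii) one covers the two ``resp.''‑directions. A new left edge $e^{\star}=(v_j,w^{\star})\in E_L$ cannot act as the $\Left_l$‑edge against some $(v_b,w_b)\in\Right_l$ with $v_j\preccurlyeq v_b$ and $w_b\preccurlyeq w^{\star}$: such $(v_b,w_b)$ would have been inserted while handling $v_b$, but $v_b$ equals or descends from $v_j$, so $(v_b,w_b)$ is inserted at step $\ge l$ and hence is not in $\Right_l=\Right_{l-1}$. Nor can $e^{\star}$ act as the $\Left_l$‑edge in the mirrored direction, i.e.\ there is no old $(v_a,w_a)\in\Right_{l-1}$ with $v_a\preccurlyeq v_j$ and $w^{\star}\preccurlyeq w_a$ having an interior vertex: if there were, then $v_a\prec v_j$ (the case $v_a=v_j$ is impossible because $E_R=\emptyset$ and all edges out of $v_j$ lie in $E_L$), and inserting the right edge $(v_a,w_a)$ set $\lastL(v_a)=d(p(v_a))$ via line~\ref{line:enclosing2}, which by (b) propagates to $\lastL(v_j)\ge d(p(v_a))$; but $w^{\star}\preccurlyeq w_a\prec v_a$ gives $d(w^{\star})\le d(v_a)-2<d(p(v_a))$, so $\lastL(v_j)>d(w^{\star})$, again contradicting the admission of $e^{\star}$. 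Finally one verifies that the updates in lines~\ref{line:enclosing2}--\ref{line:set_spanning} of \textsc{AddEdges} restore (a) and (b) for step~$l$, which closes the induction.

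The part I expect to be the real work is establishing and maintaining the bookkeeping invariants (a) and (b). Because \textsc{AddEdges} re‑initialises $\reach$ at the head and interior vertices of every new DFS path (lines~\ref{line:reachabiliy_r}, \ref{line:reachability_v}), all ``memory'' of constraints created by edges of earlier paths survives only inside $\lastL,\lastR,\bound_L,\bound_R$; proving that these variables correctly certify outerplanarity‑preserving insertability — in particular the downward‑propagation statement in (b) and the exact interplay of $\bound_L$ with line~\ref{line:set_last_r} when a previously‑enclosed vertex reappears as a path root — is the technical heart. The incomparable sub‑case of (i), which is driven entirely by the DFS‑exploration order (handling order, first‑explored subtrees, and the resulting position of $m$ relative to $r_i$), is the most delicate point; everything else is bookkeeping along the lines sketched above.
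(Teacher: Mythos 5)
Your proposal follows essentially the same route as the paper's proof: induction on $l$, a case split on which of the two edges is newly added by {\sc AddEdges} and on the tree relation (comparable vs.\ incomparable) of their initial vertices, with the contradictions obtained exactly as in the paper from the admission test in line~\ref{line:condition} together with the updates to $\reach$, $\bound_L$, and ${\last}_L$ at interior vertices and at subsequent path roots (lines~\ref{line:set_last_r}, \ref{line:removing}--\ref{line:set_spanning}, \ref{line:enclosing2}), and for (ii) the same ${\last}$-chain argument plus the processing-order observation that rules out the remaining case. The bookkeeping invariants you defer are also left informal in the paper's own proof, so nothing essential is missing relative to it.
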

\begin{proof}
	We prove (i) and (ii) by induction on $l$ for direction left; the proof for direction right is analogous.
	Furthermore, for (i) we show only (\ref{eq:implication1}); the proof of (\ref{eq:implication2}) is similar.
	The base case $l=0$ is trivial.
	For the induction step, let $(v_1,w_1),(v_2,w_2) \in \Left_{l+1}$.
	If $(v_1,w_1),(v_2,w_2) \in \Left_l$ ({\sc Case 1}), then (i) holds by the induction hypothesis.
	If $(v_1, w_1),(v_2, w_2)\in \Left_{l+1}\setminus\Left_l$ ({\sc Case 2}), then $v_1=v_2$ and $w_1\neq w_2$, implying (\ref{eq:implication1}). 
	If $(v_1, w_1)\in \Left_{l+1}\setminus\Left_l$ and $(v_2, w_2)\in \Left_l$ ({\sc Case 3}), then the order of processing the vertices of $T$ implies 
		$v_1\nprec v_2$. 
	Moreover, as $(v_1,w_1)$ is a left back edge, we have 
		$\lastL(v_1)  \leq  d(w_1)$ 
	(cf. the condition in line~\ref{line:condition} of Alg.~\ref{alg:check_valid_edges} for $\delta = L$).
	Suppose 
		$w_1 \prec w_2$. 
	Then $w_1\prec w_2 \prec v_1$. 
	Assume for contradiction that $v_1$ and $v_2$ are incomparable.
	Then they lie on different paths in $T$, implying $w_2 \prec y \prec v_2$ for $y$ in (i). 
	Since, by condition of this case, $(v_2,w_2)$ has been added to $T$ before $(v_1,w_1)$, $v_2$ was considered before $v_1$ in the DFS traversal. 
	Hence, there exists $r_i \prec v_1$ 
	such that  
	$w_2\prec r_i \prec v_2 \enspace$. 
	Thus, after $(v_2, w_2)$ has been added to $\Left_j$ for some $j \leq l$, we certainly have $L \notin\reach(r_i)$ and $\bound_{L}(r_i) = \text{\sc True}$ (cf. lines~\ref{line:removing} and \ref{line:set_spanning} in Alg.~\ref{alg:check_valid_edges}). 
	In a later step, when processing $v_1$, we therefore have
		$\lastL(v_1) \geq d(r_i)$ 
	(cf. lines~\ref{line:set_last_r} and \ref{line:set_last_r_parent} of Alg.~\ref{alg:outerplanarSampling}).  By $w_1 \prec w_2$ and $w_2\prec r_i \prec v_2$ 
	we have $w_1\prec w_2\prec r_i$, from which $d(w_1)<d(w_2)<d(r_i) \leq \lastL(v_1)$ follows by $\lastL(v_1) \geq d(r_i)$. 
	But this contradicts $\lastL(v_1)  \leq  d(w_1)$. 
	Thus, $v_1$ and $v_2$ are comparable and hence,  
	we have (\ref{eq:implication1}) by $v_1\nprec v_2$ for {\sc Case 3}. 
	The proof of (\ref{eq:implication1}) for $(v_1,w_1) \in \Left_l$ and $(v_2,w_2) \in \Left_{l+1}\setminus\Left_l$ ({\sc Case 4}) is analogous.	 
	
	Regarding claim~(ii), the base case 
	is trivial. 
	For the induction step, let $(v_a, w_a)\in \Left_{l+1}$ with $w_a \prec x \prec v_a$. 
	Assume first $(v_a, w_a)\in \Left_{l}$ and suppose for contradiction that there is a $(v_b,w_b) \in \Right_{l+1}$ with $v_a\preccurlyeq v_b$ and $w_b\preccurlyeq w_a$.
	The induction hypothesis implies $(v_b,w_b) \in \Right_{l+1}\setminus \Right_l$.  
	Then
	$
	\lastR(v_b)\geq \lastR(v_a)\geq d(p(v_a)) \geq d(x) > d(w_b) 
	$,
	where $\lastR(v_a)\geq d(p(v_a))$ holds by line~\ref{line:enclosing2} of Alg.~\ref{alg:check_valid_edges}. 
	Hence $(v_b,w_b)$ does \textit{not} satisfy the condition in line~\ref{line:condition} in Alg.~\ref{alg:check_valid_edges} for $\delta = R$, contradicting $(v_b,w_b) \in \Right_{l+1}$.
	A contradiction for the case that $(v_a, w_a)\in \Left_{l+1}\setminus\Left_l$ and $(v_b, w_b)\in \Right_l$ can be obtained in a similar way, by noting that we cannot have $(v_a, w_a)\in \Left_{l+1}\setminus\Left_l$ and 	$(v_b, w_b)\in \Right_{l+1}\setminus\Right_l$ (cf. lines~\ref{line:XeqL} and \ref{line:XeqR} of Alg.~\ref{alg:check_valid_edges}).	
\end{proof}

\begin{lemma}
	\label{lem:correctness}
	For all $l \geq 0$, $G_l = (V(G), E(T) \cup \Left_l \cup \Right_l)$ is outerplanar after iteration $l$ of loop~\ref{line:node_iteration}--\ref{line:path_iteration_end} of  Alg.~\ref{alg:outerplanarSampling}.
\end{lemma}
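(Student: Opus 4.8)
The plan is to argue by induction on $l$, leaning on Lemma~\ref{lem:algorithm} and on the ``straightened'' planar embedding of an outerplanar graph with a distinguished DFS tree described just before Alg.~\ref{alg:outerplanarSampling}. First I would strengthen the statement to an invariant: after iteration $l$, the graph $G_l$ has a plane embedding in which $T$ is drawn with its DFS child-order, each root--leaf path $\Path(r,l)$ straightens to a vertical segment, every edge of $\Left_l$ (resp. $\Right_l$) whose two endpoints lie on such a segment is a simple arc strictly on its left (resp. right), and \emph{all} vertices lie on the outer face; outerplanarity of $G_l$ then follows at once from the last clause. I would also carry along, as part of the invariant, the claim that the bookkeeping quantities $\reach$, $\lastL$, $\lastR$, $\bound_L$, $\bound_R$ correctly record, at every step, the reachability notion defined before Alg.~\ref{alg:outerplanarSampling}.

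The base case $l=0$ is immediate: $G_0=T$ is a tree, hence outerplanar and trivially drawable in the required form. For the step, consider iteration $l+1$, where Alg.~\ref{alg:outerplanarSampling} processes a vertex $v_j$ on the current DFS path $P_i=\Path(r_i,l_i)$ and calls {\sc AddEdges}$(v_j,F)$. By lines~\ref{line:update_left}--\ref{line:XeqR} of Alg.~\ref{alg:check_valid_edges}, the newly inserted edges are either exactly $E_L$ or exactly $E_R$ (never both), and they all have initial vertex $v_j$ and terminal vertices among the ancestors of $v_j$ on $\Path(r,l_i)$. Assume without loss of generality it is the left case. Starting from the embedding of $G_l$, I would draw each new arc $(v_j,w)\in E_L$ on the left of the vertical segment carrying $\Path(r,v_j)$, nesting them by depth $d(w)$; since they share the endpoint $v_j$, no two of them cross.

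It then remains to check three things. (1) The new arcs do not cross any old edge of $\Left_l$: if an old left edge $(v',w')$ has a span on $T$ that shares a strict common descendant with the span of a new edge $(v_j,w)$, then property~(i) of Lemma~\ref{lem:algorithm} forces the two spans to be nested rather than interleaved, so both can be routed on the left without a crossing, while spans overlapping in no such vertex occupy disjoint regions. (2) The new left arcs cannot create an obstruction together with an edge $(v_b,w_b)\in\Right_l$: since right edges are drawn on the right, the only risk is that such a $(v_b,w_b)$ encloses the span of a new left arc and thereby traps one of its internal vertices; property~(ii) of Lemma~\ref{lem:algorithm}, applied with the new left edge in $\Left_{l+1}$ and the right edge in $\Right_{l+1}$, excludes exactly this configuration. (3) Every vertex of $G_{l+1}$ still lies on the outer face, i.e., no new left arc $(v_j,w)$ encloses a vertex.

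This last point is where I expect the real work to be. The vertices that could be enclosed are the internal vertices $x$ of $\Path(v_j,w)$ together with $p(v_j)$, and for precisely these the algorithm deletes $L$ from $\reach(x)$, sets $\bound_L(x)$, and updates $\lastL$ and $\lastR$ (lines~\ref{line:enclosing2}--\ref{line:set_spanning} of Alg.~\ref{alg:check_valid_edges}), while the admissibility test in line~\ref{line:condition} (that $\delta\in\reach(w)$ and ${\last}_\delta(v_j)\le d(w)$) is what ensures that any such $x$ remains reachable from the still-available side via its own DFS path or a later one, hence stays on the outer face. Turning this into a proof requires showing that the five bookkeeping quantities really do encode geometric reachability after the update — the side invariant mentioned above, proved in lockstep with the embedding. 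The case where $E_R$ is added is symmetric, and vertices strictly above $r_i$ simply keep the reachability state inherited from an earlier path, so nothing new has to be verified there.
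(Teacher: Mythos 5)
Your overall strategy is the paper's own: induction on $l$ with the strengthened invariant that $G_l$ has an embedding in which left (resp.\ right) back edges lie left (resp.\ right) of their DFS paths without crossings and all vertices lie on the outer face, using part~(i) of Lemma~\ref{lem:algorithm} to rule out crossings among left edges and part~(ii) to rule out left--right obstructions. Steps (1) and (2) of your plan match the paper's argument essentially verbatim.

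The gap is in step (3), and it is twofold. First, you correctly flag the outer-face condition as ``where the real work is,'' but you then defer it to an unproven side invariant that the bookkeeping quantities $\reach$, $\lastL$, $\lastR$, $\bound_L$, $\bound_R$ ``encode geometric reachability''; that invariant is never stated precisely, let alone proved, so the hardest part of the lemma is not actually established. The paper avoids this entirely: it never argues via correctness of the bookkeeping variables, but instead reduces to the single new edge $(v,w^*)$ whose terminal vertex has smallest depth and checks directly which vertices it could enclose. Second, your identification of the at-risk vertices is incomplete. You consider only the internal vertices of $\Path(v_j,w)$ together with $p(v_j)$, but a new left arc can also enclose vertices lying in subtrees that branch off $\Path(w^*,v)$ at some internal vertex $x^*$ (the paper's set $V_2$). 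For those the argument is different in kind: one shows via Lemma~\ref{lem:algorithm}(i) and the incomparability of $v$ with the other branch that no left back edge spans $x^*$, and hence the entire subtree rooted at $x^*$ can be redrawn on the right of $P_i$ while preserving the invariant. Without this redrawing step the induction does not close, since the inherited embedding of $G_l$ may well place that subtree on the left, where the new arc would trap it.
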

\begin{proof}
	We show by induction on $l$ that $G_l$ can be drawn in the plane in a way that
	\begin{itemize}
		\item[(i)] all edges $(v, w)\in \Left_l$ (resp. $(v, w)\in\Right_l$) that have been added to the DFS path $P_i$ for some $i$ $(1 \leq i \leq k)$ lie left (resp. right) w.r.t. $P_i$ and do not intersect any other edge of $G_l$ and
		\item[(ii)] all vertices of $G_l$ lie on the outer face.
	\end{itemize}
	
	The base case is trivial.
	For the induction step, suppose $G_{l}$ has an embedding in the plane satisfying (i)--(ii).
	If $\Leftnew_{l+1}=\Left_{l+1}\setminus\Left_l$ and $\Rightnew_{l+1}=\Right_{l+1}\setminus\Right_l$ are both empty then the claim holds by the induction hypothesis.
	Otherwise, exactly one of them, say $\Leftnew_{l+1}$, is non-empty by lines~\ref{line:XeqL}--\ref{line:XeqR} of Alg.~\ref{alg:check_valid_edges}; the proof of the case $\Rightnew_{l+1}\neq\emptyset$ is analogous.
	Let $P_i=\Path(r_i,l_i)$ be the DFS path $(1\leq i\leq k)$ and $v$ be a vertex of $P_i$ such that the left back edges in $\Leftnew_{l+1}$ have been constructed for $v$ in the outer loop of Alg.~\ref{alg:outerplanarSampling}.
	Then $v \neq r_i$ and it is the initial vertex of all back edges in $\Leftnew_{l+1}$.
	Thus, each edge in $\Leftnew_{l+1}$ can be drawn left w.r.t. $P_i$, without intersecting any other edge in $\Leftnew_{l+1}$. 
	%
	%
	All edges in $\Right_{l}$ with an endpoint in $\Path(r, l_i)$ are right w.r.t. $P_i$. Hence, the edges in $\Leftnew_{l+1}$ can be drawn without intersecting these right back edges.
	Suppose for contradiction that there is a new edge $(v_2, w_2)\in \Leftnew_{l+1}$ that cannot be drawn left w.r.t. $P_i$ without crossing some other edge $(v_1,w_1) \in \Left_l$.
	By the induction hypothesis, $(v_1,w_1)$ could be drawn for some $l' \leq l$ iteration also left w.r.t. $P_i$, without crossing any other edge. 
	Hence,  we must have $w_1\prec w_2$, $w_2 \prec y \preccurlyeq v_1$, and $w_2 \prec y \preccurlyeq v_2$, where $y$ is the vertex with the largest depth satisfying $y \preccurlyeq v_1,v_2$.
	But then, $v_2\preccurlyeq v_1$ by \Cref{lem:algorithm}, contradicting that $(v_1, w_1)$ has been considered before $(v_2,w_2)$. 
	Thus, $(v_2, w_2)$ can be drawn left w.r.t. $P_i$ without intersecting any edges in $\Left_{l}$ and hence, (i) holds.
	
	To prove (ii), notice that if $(v, w)\in \Left_{l+1}$ destroys  outerplanarity, then $(v, w') \in \Left_{l+1}$ with $w'\prec w$ does the same. 
	Thus, it suffices to consider the back edge in $\Left_{l+1}$ with the terminal vertex of the smallest depth. Let $(v,w^*)$ be this back edge.
	We show that it is possible to add $(v, w^*)$ to the planar embedding such that all vertices $x$ lie on the outer face and (i) stays valid.
	This is straightforward by induction for all vertices $x \in V(T)$ with $w^* \nprec x$, so it suffices to consider $V_1= V(\Path(w^*,v))$ and $V_2 = \{x\in V(T)\setminus V_1: w^*\prec x\}$.  
	
	We first prove the claim for the vertices in $V_1$.
	Suppose for contradiction that there is a vertex $x\in V_1$ that does not lie on the outer face.
	This can happen iff there is $(v_R,w_R) \in \Right_l$ such that $w^*\prec x \prec v$ and $w_R\prec x \prec v_R$. 
	But this contradicts (ii) of \Cref{lem:algorithm}. 
	Regarding the other case, assume there is an $x \in V_2$ that does not lie on the outer face. 
	Let $x^* \in V_1$ be the vertex with maximum depth such that $x^*\prec x,v$. Assume there is an edge $(v', w')\in \Left_{l}$ with $w'\prec x^*\prec v'$. 
	Since one of $w'\prec w^*$, $w^*\prec w'$, and $w'=w^*$ holds, the condition of (i) of \Cref{lem:algorithm} is fulfilled for $x^*=y$. 
	But $v',v$ are incomparable, implying that such an edge does not exists. 
	Hence, we can redraw the outerplanar subgraph consisting of all vertices $y$ with $y\succcurlyeq x^*$ right to $P_i$ such that all of its vertices and all other vertices of the graph lie on the outer face. Moreover, it can be redrawn such that no edges are crossing and this new embedding 
	fulfills (i) because no back edge in the subgraph of all vertices $y$ with $y\succcurlyeq x^*$ lies left w.r.t. $P_i$. This completes the proof of (ii).
\end{proof}

\begin{lemma}
	\label{lem:complexity}
	Alg.~\ref{alg:outerplanarSampling} terminates in $\bigO{|E(G)|}$ time.
\end{lemma}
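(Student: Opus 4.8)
The plan is to decompose the running time into additive pieces and bound each by $\bigO{m}$; since $G$ is connected we have $n\le m+1$, so any $\bigO{n}$ term is absorbed. Building the DFS tree $T$ with its DFS paths $P_1,\dots,P_k$ (Line~\ref{Line:1}) takes $\bigO{n+m}$, and the initialisation in Lines~\ref{line:update_init}--\ref{line:update_init_end} costs $\bigO{n}$. The DFS paths partition the $n-1$ tree edges, so $\sum_{i=1}^k(n_i-1)=n-1$, hence the inner loop (Line~\ref{line:node_iteration}) together with the per-path setup runs $\bigO{\sum_i n_i}=\bigO{n}$ times, and everything in one iteration except the call to {\sc AddEdges} (Alg.~\ref{alg:check_valid_edges}) is $\bigO{1}$. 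Over all those calls the sets $F$ built in Line~\ref{line:f_edges} are pairwise disjoint with union the set of non-tree (i.e.\ back) edges --- each back edge $(v,w)$ enters $F$ exactly once, when $v$ is processed --- so $\sum|F|=m-n+1\le m$; consequently the candidate-collecting double loop of {\sc AddEdges} (Lines~\ref{line:candidate_backedges_start}--\ref{line:candidate_backedges_end}) and the running unions $\Left_l\leftarrow\Left_{l-1}\cup E_L$, $\Right_l\leftarrow\Right_{l-1}\cup E_R$ (realised by appending) together cost $\bigO{m}$. This leaves exactly one quantity to control: the total work of the interior-update loop of {\sc AddEdges}, Lines~\ref{line:update_prop}--\ref{line:set_spanning}.

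That quantity is the sum, over all calls and all $(v,w)\in E_X$ selected in that call, of $|(v,w)|$, where $(v,w):=\{x\in V(T):w\prec x\prec v\}$ denotes the open interval; this is the heart of the proof. Two features make it delicate. First, within one call the selected edges $(v,w_1),\dots,(v,w_t)$ share the source $v$ and, since all $w_j$ lie on $\Path(r,v)$, their open intervals form a chain under inclusion, so a literal reading of Lines~\ref{line:update_prop}--\ref{line:set_spanning} may rescan the widest interval up to $t$ times; indeed, on dense fan-like graphs this literal reading already costs $\Theta(m^2)$. Second, $\reach(x)$ is reset to $\{L,R\}$ each time $x$ is re-entered as a vertex of a new DFS path (Lines~\ref{line:reachabiliy_r}, \ref{line:reachability_v}), so reachability sets do not shrink monotonically over the whole run, and a fixed vertex may lie in the interior of added back edges arising from several different paths.

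The approach I would attempt is an amortised argument anchored on the flags $\bound_L,\bound_R$, which are set only in Line~\ref{line:set_spanning} and never reset: process the selected edges of a call from the widest interval inward, and while updating the interior of an edge $(v,w)\in E_X$ scan from $v$ toward $w$ and halt at the first vertex $x$ with $\bound_X(x)$ already true. Since the three updates in Lines~\ref{line:removing}--\ref{line:set_spanning} are idempotent, the only "useful" interior updates are those flipping some $\bound_X(x)$ from false to true, which happens at most once per pair $(x,X)$, i.e.\ $\bigO{n}$ times in total, while each selected edge contributes one $\bigO{1}$ halting test, i.e.\ $\bigO{\sum_v|E_X|}=\bigO{m}$ in total; if this is sound, the interior-update loops cost $\bigO{n+m}$ and the algorithm runs in $\bigO{m}$. (A dual route is a potential $\Phi=\sum_x|\reach(x)|$ charged against the updates, but the per-path re-initialisation of $\reach$ raises $\Phi$ and must be paid for separately --- the same difficulty in another guise.)

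The main obstacle is precisely the validity of the halting rule --- equivalently, the claim that once $\bound_X(x)$ is true, no later back edge whose open interval contains $x$ ever forces a non-trivial update at $x$ or below --- despite the per-path re-initialisation of $\reach$. I would derive this from the structure in Lemma~\ref{lem:algorithm}: the added left and right back edges are non-crossing and telescoping, and whenever a candidate left back edge $(v',w')$ with $w'\prec x\prec v'$ arises for an $x$ with $\bound_L(x)$ true, the propagation of $\lastL$ from the root $r_j$ of $v'$'s DFS path (Lines~\ref{line:set_last_r}--\ref{line:set_last_r_parent} of Alg.~\ref{alg:outerplanarSampling} together with Line~\ref{line:enclosing2} of {\sc AddEdges}) should force $\lastL(v')\ge d(x)>d(w')$, so that $(v',w')$ fails the test in Line~\ref{line:condition}; the right-hand case is symmetric, and the inclusion-chain structure within a single call handles the remaining overlaps. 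Making this rigorous amounts to pinning down the exact invariant linking the $\bound_L,\bound_R$ flags to the $\lastL,\lastR$ values at all path roots and threading it through the induction already used for Lemma~\ref{lem:algorithm}; with that invariant in hand, the remaining running-time bookkeeping above is routine.
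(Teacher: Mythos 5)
Your decomposition of the running time is right, and you have correctly isolated the one non-trivial piece: the interior-update loop in lines~\ref{line:update_prop}--\ref{line:set_spanning} of {\sc AddEdges}, which a literal implementation executes in superlinear total time. The gap is in the amortization device you propose to fix it. Your rule -- scan from $v$ toward $w$ and \emph{halt} at the first vertex $x$ with $\bound_X(x)$ already {\sc True} -- is not merely unverified, it appears to be wrong in a configuration that Lemma~\ref{lem:algorithm} explicitly permits. Implication~(\ref{eq:implication1}) allows a later left back edge $(v,w)$ to strictly enclose an earlier one $(u',w')$ added in a \emph{different} call, i.e.\ $w \prec w' \prec u' \preccurlyeq v$. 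Then the vertices strictly between $w'$ and $u'$ are already flagged, but the vertices strictly between $w$ and $w'$ are not, and they do lie in the open interval of $(v,w)$, so they genuinely need the updates of lines~\ref{line:removing}--\ref{line:set_spanning}. Your scan from $v$ halts as soon as it enters the old interval $(w',u')$ and never reaches them. Your widest-first ordering only handles nesting \emph{within} one call (where all intervals share the endpoint $v$); it does nothing for this cross-call nesting. What you actually need is ``skip the already-processed run in $O(1)$ and resume below it,'' not ``halt,'' and that requires an auxiliary structure.

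That structure is exactly what the paper's proof supplies: a single \emph{global stack} holding, in depth order, the vertices of the current root-to-leaf path that are still reachable from both left and right. When an edge $(v,w)$ is accepted, one pops from the top (starting at $p(v)$) until $w$ appears; vertices already consumed by earlier enclosed edges are simply absent from the stack, so the pop sequence resumes automatically below them, and the vertices popped are precisely those needing an update. The amortization then becomes bookkeeping of the kind you wanted: total pops are bounded by total pushes, and a vertex re-enters the stack only when it becomes some path root $r_i$, so the whole loop costs $\bigO{n+m}$. Your $\bound$-flag potential is morally the same monotone resource (a vertex leaves the stack roughly when one of its flags is set), but the flag alone does not give you constant-time resumption past a flagged block, which is the step your sketch leaves open. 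To repair your proof, either adopt the stack (or an equivalent ``next unprocessed ancestor'' pointer structure with path compression) or prove an invariant strong enough to exclude the cross-call enclosing configuration -- and the latter is false, so the former is the way to go.
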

\begin{proof}
	Note that $T$ in line~\ref{Line:1} can be computed in $\bigO{|E(G)|}$ time and
	%
	%
	{\sc AddEdges} is called at most $|V(G)|$ times in line~\ref{line:add}.
	It can be checked in constant time whether an edge in $F$ can be added to $E_L$ or $E_R$. 
	If a back edge $(v, w)$ can be added, we have to update the properties of vertices between $(v, w)$ (cf. line~\ref{line:update_prop}). 
	This can be done with a \textit{na\"{\i}ve} algorithm in quadratic time. 
	However, we can store the vertices during the iteration over all $P_i$ in a \textit{global} stack that are reachable from both \textit{left} and \textit{right}. 
	If a new edge $(v, w)$ is added, we remove all vertices starting with the parent of $v$ from the stack, unless we have found $w$ ($w$ will not be deleted from the stack); this is because these vertices cannot be the endpoints of other \textit{left} or \textit{right} edges. The runtime of this operation is \textit{linear} in the number of elements removed from the stack. 
	Once a vertex has been removed from the stack, it will never be added again to it, except for the case that it is equal to $r_i$ for some $i$ $(1\leq i \leq k)$. Hence the \textit{overall} runtime of this operation is at most linear in the number of edges, implying the claimed total runtime of $\bigO{|E(G)|}$.
\end{proof}


\section{Closures in Outerplanar Graphs}
\label{sec:closure}
This section deals with the following problem for \textit{outerplanar} graphs: 
\begin{problem}
\label{problem:batch}
	\textit{Given} a graph $G=(V,E)$ and $X \subseteq V$, \textit{compute} $\cc(X)$. 
\end{problem}
As discussed in Section~\ref{sec:preliminaries}, Problem~\ref{problem:batch} can be solved in $\bigO{nm}$ time for arbitrary and in $\bigO{n|X|}$ time for outerplanar graphs.
The complexity of the algorithm presented in this section for outerplanar graphs is $\bigO{nf}$, where $f$ is the face number of $G$. 
Thus, its complexity is \textit{independent} of the cardinality of the input set $X$. 
Since $f = \bigO{n}$, it does not improve the \textit{theoretical} worst-case complexity $\bigO{n|X|}$.
It has, however, two advantages over the algorithm sketched in Sect.~\ref{sec:preliminaries}. The first one is \textit{practical}: Our experiments with various graphs clearly show that the face number of spanning outerplanar graphs is \textit{negligible}, compared to their size (i.e., $n$).  
The second one is of \textit{theoretical} interest: Allowing only at most $c$ faces per biconnected components in the spanning outerplanar graphs for some \textit{constant} $c$, our algorithm runs in guaranteed \textit{linear} time. 

\begin{algorithm}[t]
	\KwIn{outerplanar graph $G$ and $X \subseteq V(G)$}
	\KwOut{$\cc(X)$}

	construct the BB-tree $\tG$ for $G$\label{alg:BBTree}\;
	$X_0 \gets X$ \label{alg:initX}\;
	$Y \gets \text{ set of block nodes of $\tG$}$\label{alg:initY}\;
	$C_1 = \{v_B\in Y:  V(B) \cap X_0 \neq \emptyset\}$\label{alg:C1}\;
	$C_2 \gets V(\tG) \cap X_0$ \label{alg:C2}\;
	$C \gets \closuretree(\tG,C_1 \cup C_2)$, \label{alg:setC}\;
	$X_1 \gets X_0 \cup (C \cap V(G))$,  $i \gets 1$\label{alg:initX1}\;
	\ForEach{$v_B \in Y \cap C$ \label{alg:startfor}}{
		\If{$|V(B) \cap X_i| > 1$}{
			$X_{i+1} \gets X_i \cup \beta(B,V(B) \cap X_i)$\label{alg:updateX}\;
			$i \gets i+1$\;
		}
	}
	\label{alg:endfor}
	\Return $X_i$\label{alg:outputX}\;
	\caption{\sc Outerplanar Graphs: Closure }
	\label{alg:batchclosure}
\end{algorithm}

The algorithm solving Problem~\ref{problem:batch} for outerplanar graphs is given in Alg.~\ref{alg:batchclosure}.
We assume that $G$ is \textit{connected}, by noting that all results can easily  be generalized to disconnected outerplanar graphs as well.
Alg.~\ref{alg:batchclosure} first calculates the BB-tree $\tilde{G}$ for the input outerplanar graph $G$ and then stores $X$ and the set of block nodes of $\tilde{G}$ in the variables $X_0$ and $Y$, respectively (lines~\ref{alg:BBTree}--\ref{alg:initY}).
In line~\ref{alg:C1}, it computes the set $C_1$ of block nodes representing such blocks of $G$ that have at least one vertex from $X_0$.
In a similar way, $C_2$ contains the set of nodes of $\tilde{G}$ that belong to $X_0$ (cf. line~\ref{alg:C2}).
The closure of $C_1\cup C_2$ in $\tilde{G}$ is calculated in $C$ (line~\ref{alg:setC}) and the union of $X_0$ and the set of vertices in $C$ that belong to $V(G)$ is stored in $X_1$ (line~\ref{alg:initX1}).
Note that at this point of the algorithm we have $v \in X_1 \subseteq \ccG(X)$ for all $v \in \ccG(X)$ not belonging to a biconnected component of $G$. 
Furthermore, for all $v \in \ccG(X) \setminus X_1$, $v$ is on a shortest path in one of the blocks and with both endpoints in $X$.  
Accordingly, 
in loop~\ref{alg:startfor}--\ref{alg:endfor}, the algorithm takes all block nodes $v_B$ of $\tilde{G}$ that belong to the closed set $C$, computes the closure of the set of vertices of the corresponding block $B$ over $B$ that are known to be closed (i.e., belong to $X_i$), updates the set of already known closed vertices in $X_{i+1}$, and increments the loop variable $i$.
At the end, it returns the set $X_i$.

\begin{algorithm}[tb]
	\KwIn{tree $T$ and $X \subseteq V(T)$}
	\KwOut{$\cc_T(X)$}

	\While{$\exists v \in V(T) \setminus X \text{ with } \degree{v} \leq 1$}{
		remove $v$ from $T$\;
	}
	\Return $V(T)$\;

	\caption{\sc Function $\closuretree$}
	\label{alg:batchtree}
\end{algorithm}

It remains to discuss functions $\closuretree$ and $\closureblock$ (cf. lines~\ref{alg:setC} and \ref{alg:updateX}).
Regarding $\closuretree$ (see Alg.~\ref{alg:batchtree}), it computes the closure of a set of nodes of a tree.
It iteratively removes all leaves of $T$ that are not in $X$ and returns the set of all nodes of $T$ at the end that have not been deleted. 
The proof of the following lemma is straightforward:
\begin{lemma}
\label{lm:correctnesstreeclosure}
For any tree $T$ with $n$ nodes and for any $X \subseteq V(T)$, Alg.~\ref{alg:batchtree} returns $\cc_T(X)$ in $\bigO{n}$ time.
\end{lemma}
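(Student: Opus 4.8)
The plan is to establish two things for Algorithm~\ref{alg:batchtree} on input $(T, X)$: first, that the returned vertex set equals $\cc_T(X)$, and second, that the procedure runs in $\bigO{n}$ time. Both rely on the elementary fact that in a tree, for any two vertices $u, v$ the geodesic interval $I(u,v)$ is exactly the unique path $\Path(u,v)$, so the closure $\cc_T(X)$ is simply the union of all tree-paths between pairs of vertices of $X$ — equivalently, the smallest subtree of $T$ spanning $X$ (when $X \neq \emptyset$; the empty set and singletons are trivially closed and the algorithm leaves them untouched, since a single remaining vertex has degree $0 \le 1$ but lies in $X$, so it is never deleted, and more care shows the algorithm stops with exactly $X$ when $|X| \le 1$ up to isolated non-$X$ vertices being stripped).

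\textbf{Correctness.} First I would argue the output $V(T')$ (the surviving vertex set) is closed: since $T'$ is obtained from a tree by deleting leaves, $T'$ is itself a tree (or empty), and it is a connected subtree of $T$ containing $X$; hence for any $u,v \in V(T')$, the path $\Path_T(u,v)$ coincides with $\Path_{T'}(u,v) \subseteq V(T')$, so no new vertices are forced in — $V(T')$ is convex. For the reverse inclusion, I would show the algorithm never deletes a vertex of $\cc_T(X)$. Suppose toward contradiction that $v \in \cc_T(X)$ is the first such vertex removed; at that moment $v$ has degree $\le 1$ in the current tree $T''$ and $v \notin X$. Since $v \in \cc_T(X)$ and $v \notin X$, $v$ lies strictly between two vertices $a,b \in X$ on $\Path_T(a,b)$, so $v$ has two distinct neighbors on that path in $T$; because no vertex of $\cc_T(X) \supseteq \{a,b\}$ and none of the interior path vertices (all in $\cc_T(X)$) has been deleted yet by minimality, both neighbors of $v$ along $\Path_T(a,b)$ are still present in $T''$, giving $\degree{v} \ge 2$ in $T''$ — a contradiction. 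Combined with termination (each step removes a vertex, so the loop halts), this gives $V(T') = \cc_T(X)$.

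\textbf{Complexity.} For the $\bigO{n}$ bound, the naive reading of the \textbf{while} loop — rescanning $V(T)$ for a degree-$\le 1$ non-$X$ vertex — is quadratic, so I would note the standard linear-time realization: initialize degrees in $\bigO{n}$, push every vertex $v \notin X$ with $\degree{v} \le 1$ onto a queue; repeatedly pop $v$, delete it, and for its unique (at most one) remaining neighbor $u$, decrement $\degree{u}$ and push $u$ if now $u \notin X$ and $\degree{u} \le 1$ and $u$ not already queued. Each vertex is enqueued and processed at most once, each deletion touches $O(1)$ incident edges, and total work is $\bigO{n}$ since a tree has $n-1$ edges. This is a routine peeling argument, so I would state it briefly rather than belabor it.

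\textbf{The main obstacle} is not any deep step but pinning down the edge cases cleanly so the statement "returns $\cc_T(X)$" is literally true: in particular confirming that when $X = \emptyset$ the algorithm legitimately returns $\emptyset$ (it peels the entire tree) and $\cc_T(\emptyset) = \emptyset$, and when $|X| = 1$ it returns exactly that singleton — and more generally that the peeling really does terminate at the Steiner subtree and not one leaf-layer too far, which is exactly what the "first-deleted vertex of $\cc_T(X)$" contradiction above rules out. Since the paper calls this proof "straightforward," I would keep the writeup to the convexity-of-$T'$ observation, the one-line minimal-counterexample argument, and a sentence on the queue-based linear implementation.
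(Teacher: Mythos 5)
There is a genuine gap: you never prove the inclusion $V(T')\subseteq \cc_T(X)$. Your first argument (the surviving set is a connected subtree of $T$ containing $X$, hence convex) yields $\cc_T(X)\subseteq V(T')$, since the closure is the smallest closed superset of $X$. Your second argument, which you label ``the reverse inclusion,'' in fact proves the \emph{same} inclusion again: showing that no vertex of $\cc_T(X)$ is ever deleted is precisely the statement $\cc_T(X)\subseteq V(T')$. Neither argument rules out the algorithm stopping too early and returning a strict superset of the closure --- note that a procedure which deletes nothing at all satisfies everything you establish (its output is convex, contains $X$, and no closure vertex is removed), yet is wrong. The missing half must use the termination condition of the while-loop: when the loop halts, every vertex $v\in V(T')\setminus X$ has $\degree{v}\ge 2$ in $T'$. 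From this one argues that every leaf of the surviving tree $T'$ belongs to $X$; hence any internal vertex $v$ of $T'$ separates $T'$ into at least two components, each containing a leaf of $T'$ and therefore a vertex of $X$, so $v$ lies on the unique $T$-path between two vertices of $X$ and thus $v\in\cc_T(X)$. (Vertices of $T'$ of degree at most one are in $X$ by the termination condition, and the cases $|V(T')|\le 1$ are immediate.) This is a short addition, but without it the proof establishes only one of the two inclusions.

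The remainder is fine: the convexity observation and the first-deleted-vertex argument are both correct (just redundant with one another), the edge cases $X=\emptyset$ and $|X|=1$ are handled correctly, and the queue-based peeling implementation gives the claimed $\bigO{n}$ bound. The paper omits the proof entirely as ``straightforward,'' so there is no authorial argument to compare against; with the missing inclusion supplied as above, your writeup would be a valid proof.
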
 

\begin{algorithm}[t]

	\KwIn{biconnected outerplanar graph $B$, $X \subseteq V(B)$}
	\KwOut{$G_X \subseteq X$ such that $\cc_B(G_X) =\cc_B(X)$}

	$G_X \gets \emptyset$ 	\hfill // $G_X \subseteq X$: generator set for $\cc_B(X)$ \label{alg:startG}\;
	\ForAll{interior faces $F$ of $B$}{
		$X' \gets V(F) \cap X$\; 
		\If{$|X'| > 0$}{
			select an arbitrary vertex $w$ from $X'$\;
			add $u = \argmax\limits_{x \in X'} d(x,w)$ to $G_X$ \label{alg:u}\;
			add $v = \argmax\limits_{x \in (X' \setminus \cc_F(\{u,w\})) \cup \{w\} } d(x,w)$  to $G_X$ \label{alg:v}\;
			\If{$w \notin \cc_F(\{u,v\})$ \label{alg:uvw}}{
				add $w$ to $G_X$ \label{alg:endG}\;
			}
		}		
	}
	\Return $G_X$\;
\caption{\sc Function \generatorset}
\label{alg:generator}
\end{algorithm}
Regarding $\closureblock$ (see Alg.~\ref{alg:batchblock}), which computes the closure over biconnected outerplanar graphs, we first show that for any biconnected outerplanar graph $B$ with $f=\facenumber(B)$ and for any $X \subseteq V(B)$, there is a set $G_X \subseteq X$ of cardinality \textit{linear} in $f$ such that $\cc_B(G_X) = \cc_B(X)$.
Furthermore, $G_X$ can be constructed in linear time as follows (see, also, Alg.~\ref{alg:generator}): 
Initialize $G_X$ with $\emptyset$ (cf. line~\ref{alg:startG}) and process all interior faces $F$ of $B$ one by one in an arbitrary order as follows: 
If $F$ has no vertex from $X$ then disregard $F$; o/w choose an arbitrary vertex $w$ from $X' = V(F) \cap X$. 
For that $w$, calculate the furthest vertex $u \in X'$ and the furthest vertex $v \in (X' \setminus \cc_F(\{u,w\})) \cup \{w\}$, and add $u$ and $v$ to $G_X$ (cf. lines~\ref{alg:u} and \ref{alg:v} of Alg.~\ref{alg:generator}).  
Note that $\cc_F(\{u,w\})) = V(F)$ if $d(u,w) = \ell/2$, where $\ell$ is the (cycle) length of $F$; o/w it is the set of vertices of the (unique) shortest path between $u$ and $w$.
If $w$ does not lie on a shortest path between $u$ and $v$ (cf. line~\ref{alg:uvw}), then add $w$ to $G_X$ as well.
Note that $u$ and $v$ can be equal to $w$. 
Hence, we add at least one and at most three vertices of $X'$ to $G_X$ for $F$. As an example, consider the biconnected outerplanar graph $B$ and the set $X \subseteq V(B)$ marked with color blue in Fig.~\ref{fig:biconnected_example}. A generator set $G_X$ containing the four vertices (red) is given in the middle. In case of the largest face of $G$, suppose we first select $w \in X$. For $w$, we first add $u$ and then $v$ to $G_X$ by the algorithm; $w$ is not added because it is on a shortest path between $u$ and $v$. The closure $\cc(X)=\cc(G_X)$ is given on the right-hand side of Fig.~\ref{fig:biconnected_example}. 
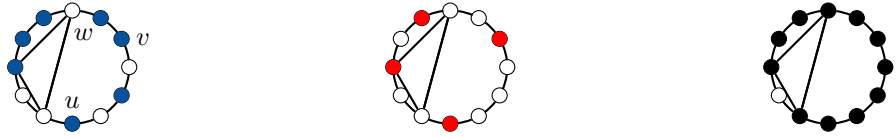
\begin{figure}[t]
	\centering
	\hfill
	\begin{subfigure}[t]{0.25\linewidth}
\begin{tikzpicture}
\centering
\newcommand{\basegraph}{
	\foreach \a in {1,2,...,12} \draw[fill=white, black, circle, inner sep=1pt] (\a*360/12:0.75cm) node[draw=black, circle, inner sep=2pt](\a){};
	\foreach \a in {1,2,...,12} \draw[thick] (\a*360/12+7:0.75cm) arc (\a*360/12+7:\a*360/12+22:0.75cm);
	
	\draw[thick] (3)--(6);
	\draw[thick] (8)--(6);
	\draw[thick] (3)--(8);
	\draw[thick] (3)--(8);
}

\newcommand{\inputset}{
	\foreach \a in {1, 2, 4, 11, 9, 6, 5} \draw[fill, UniBlue, circle, inner sep=1pt] (\a*360/12: 0.75cm) node[fill, circle, inner sep=2pt](\a){};
}
\draw[fill, black, circle, inner sep=1pt] (2*360/12: 0.75cm) node[fill, circle, inner sep=2pt, label=below left:{$w$}](X){};
\draw[fill, black, circle, inner sep=1pt] (1*360/12: 0.75cm) node[fill, circle, inner sep=2pt, label=right:{$v$}](X){};
\draw[fill, black, circle, inner sep=1pt] (9*360/12: 0.75cm) node[fill, circle, inner sep=2pt, label=above:{$u$}](X){};
\basegraph
\inputset
\end{tikzpicture}
	\end{subfigure}
	\hfill
	\begin{subfigure}[t]{0.25\linewidth}
\begin{tikzpicture}
\centering
\newcommand{\basegraph}{
	\foreach \a in {1,2,...,12} \draw[fill=white, black, circle, inner sep=1pt] (\a*360/12:0.75cm) node[draw=black, circle, inner sep=2pt](\a){};
	\foreach \a in {1,2,...,12} \draw[thick] (\a*360/12+7:0.75cm) arc (\a*360/12+7:\a*360/12+22:0.75cm);
	
	\draw[thick] (3)--(6);
	\draw[thick] (8)--(6);
	\draw[thick] (3)--(8);
	\draw[thick] (3)--(8);
}

\newcommand{\genset}{
	\foreach \a in {1, 4, 9, 6} \draw[fill, red, circle, inner sep=1pt] (\a*360/12:0.75cm) node[fill, circle, inner sep=2pt](\a){};
}
\basegraph
\genset
\end{tikzpicture}
	\end{subfigure}
	\hfill
	\begin{subfigure}[t]{0.25\linewidth}
\begin{tikzpicture}
\centering
\newcommand{\basegraph}{
	\foreach \a in {1,2,...,12} \draw[fill=white, black, circle, inner sep=1pt] (\a*360/12:0.75cm) node[draw=black, circle, inner sep=2pt](\a){};
	\foreach \a in {1,2,...,12} \draw[thick] (\a*360/12+7:0.75cm) arc (\a*360/12+7:\a*360/12+22:0.75cm);
	
	\draw[thick] (3)--(6);
	\draw[thick] (8)--(6);
	\draw[thick] (3)--(8);
	\draw[thick] (3)--(8);
}

\newcommand{\closedset}{
	\foreach \a in {1,2,3,4,5,6,8,9,10,11,12} \draw[fill=red, black, circle, inner sep=1pt] (\a*360/12: 0.75cm) node[fill, black, circle, inner sep=2pt](\a){};
}
\basegraph
\closedset
\end{tikzpicture}
	\end{subfigure}
	\hfill
	\caption{\label{fig:biconnected_example}(left) Biconnected outerplanar graph $B$ with $X\subseteq V(B)$ in blue, (middle) generator set $G_X\subseteq X$ in red, (right) $\cc(X)=\cc(G_X)$.}
\end{figure}
We have the following result about Alg.~\ref{alg:generator}:
\begin{lemma}
\label{lm:generatorset}
Let $B$ be a biconnected outerplanar graph with $f=\facenumber(B)$. Then for all $X \subseteq V(B)$, Alg.~\ref{alg:generator} computes a set $G_X \subseteq X$ in $\bigO{n}$ time such that $\cc_B(G_X) = \cc_B(X)$ and $|G_X| = \bigO{f}$.
\end{lemma}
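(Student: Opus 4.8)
The plan is to establish the two assertions—$|G_X|=\bigO{f}$ and $\cc_B(G_X)=\cc_B(X)$—largely independently. The cardinality bound is immediate: Alg.~\ref{alg:generator} runs over the $f$ interior faces of $B$ and, for each one, puts at most the three vertices $u,v,w$ into $G_X$, so $|G_X|\le 3f=\bigO{f}$. For the closure identity, $G_X\subseteq X$ gives $\cc_B(G_X)\subseteq\cc_B(X)$ by monotonicity, so by extensivity and idempotence it remains to show $X\subseteq\cc_B(G_X)$. I would reduce this to a one-face statement: every vertex of $B$ lies on the Hamiltonian cycle of $B$, which the diagonals cut into the interior faces, so every $x\in X$ belongs to $X':=V(F)\cap X$ for some interior face $F$; hence it suffices to show, for each $F$ with $X'\neq\emptyset$, that $X'\subseteq\cc_F(G_X^F)$, where $G_X^F\subseteq\{u,v,w\}$ is the set of vertices the algorithm adds for $F$ and $\cc_F$ is the closure operator of the bounding cycle $C_F$ regarded as a graph—provided I also know that $\cc_F$ is compatible with $\cc_B$.

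The compatibility I need is $\cc_F(S)\subseteq\cc_B(S)$ for all $S\subseteq V(C_F)$, and it follows from the $2$-cut structure of faces in biconnected outerplanar graphs (a \emph{faithfulness} property). Because $F$ is a face, the only $B$-edges joining two vertices of $C_F$ are the cycle edges themselves; and because $B$ is biconnected outerplanar, for every edge $pq$ of $C_F$ the portion of $B$ on the far side of $pq$ meets $V(C_F)$ only in $\{p,q\}$. Hence any $B$-walk between $u,v\in V(C_F)$ can be shortened or left unchanged to a walk inside $C_F$: each maximal excursion off $C_F$ re-enters at one of the two vertices through which it left, and can be replaced either by nothing (same departure and return vertex) or by the single edge $pq$, the latter being a strict shortening since such an excursion has length $\ge 2$. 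Therefore $d_B=d_{C_F}$ on $V(C_F)$ and every shortest $u$–$v$ path of $B$ stays inside $C_F$, i.e. $I_B(u,v)=I_{C_F}(u,v)$ for all $u,v\in V(C_F)$; combined with Thm.~\ref{thm:outerplanar_preclosure} applied to the cycle $C_F$, this gives $\cc_F(S)=\bigcup_{s,t\in S}I_{C_F}(s,t)=\bigcup_{s,t\in S}I_B(s,t)\subseteq\cc_B(S)$.

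It then remains to show $X'\subseteq\cc_F(G_X^F)$. Fix $F$, let $\ell$ be the length of $C_F$ and $q:=d_F(u,w)$. If $q=\ell/2$ then $\cc_F(\{u,w\})=V(C_F)\supseteq X'$ and $v=w$, so we are done; assume $q<\ell/2$, so $P:=\cc_F(\{u,w\})$ is the shorter $u$–$w$ arc and $R$ the complementary arc of length $\ell-q$, and every vertex of $X'\setminus P$ lies in the interior of $R$. If $X'\setminus P=\emptyset$ then $X'\subseteq P=\cc_F(\{u,v\})$ with $v=w$, done. Otherwise $v$ is the vertex of $X'\setminus P$ farthest from $w$; it lies in the interior of $R$ and splits $R$ into a $u$–$v$ sub-arc of length $a\ge1$ and a $v$–$w$ sub-arc of length $b\ge1$ with $a+b=\ell-q$. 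Take $x\in X'$: if $x\in P$ it lies in $\cc_F(\{u,w\})$. If $x$ is interior to the $u$–$v$ sub-arc I would show $a\le\ell/2$, so that sub-arc is a geodesic and $x\in\cc_F(\{u,v\})$: otherwise $b=\ell-q-a<\ell/2$, hence $d_F(v,w)=b$; the two arc-distances from $x$ to $w$ are $d_R(x,v)+b$ and $d_R(x,u)+q$, they sum to $\ell$, and the second exceeds $q\ge d_F(x,w)$ (as $u$ is farthest from $w$ in $X'$), so $d_F(x,w)=d_R(x,v)+b\ge 1+b>b=d_F(v,w)$, contradicting that $v$ is farthest from $w$ over $(X'\setminus P)\cup\{w\}\ni x$. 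Symmetrically, if $x$ is interior to the $v$–$w$ sub-arc then $b\le\ell/2$ (otherwise $d_F(v,w)=\ell-b$, and $d_F(v,w)\le q$ since $v\in X'$ forces $b\ge\ell-q$, contradicting $b\le\ell-q-1$), so $x\in\cc_F(\{v,w\})$. Thus $X'\subseteq\cc_F(\{u,v,w\})$. Finally, if $w\in\cc_F(\{u,v\})$ then $\cc_F(\{u,v,w\})=\cc_F(\{u,v\})$ and $G_X^F\supseteq\{u,v\}$; otherwise line~\ref{alg:uvw} adds $w$, so $G_X^F\supseteq\{u,v,w\}$. Either way $X'\subseteq\cc_F(G_X^F)$, completing the reduction.

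For the $\bigO{n}$ running time, the interior faces together with their cyclic vertex orders are obtainable in $\bigO{n}$ time from the outerplanar structure, and $\sum_F|V(F)|=\bigO{m}=\bigO{n}$; per face, computing $X'$, the distances in $C_F$ from $w$ (one pass around the cycle), the argmaxes $u$ and $v$, the arc $\cc_F(\{u,w\})$, and the test $w\in\cc_F(\{u,v\})$ each cost $\bigO{|V(F)|}$, summing to $\bigO{n}$. I expect the main obstacle to be the single-face argument of the previous paragraph: "two farthest points plus an anchor'' does not generate the whole cycle in general, so one has to pin down exactly why the two awkward configurations—a non-geodesic $u$–$v$ or $v$–$w$ sub-arc that nonetheless carries a vertex of $X'$—cannot occur, which is precisely where the maximality enforced in lines~\ref{alg:u}–\ref{alg:v} is used and where the sub-arc-length inequalities must be handled carefully; the faithfulness step, by contrast, is routine once the $2$-cut structure of faces is noticed.
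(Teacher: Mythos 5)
Your proof is correct, and it takes a genuinely different route from the paper's. The paper proves $\cc_B(X)\subseteq\cc_B(G_X)$ by induction on the number of interior faces: it peels off a leaf face $F$ of the weak dual, splits $X$ into $X_1=X\cap V(B')$ and $X_2=X\cap V(F)$, decomposes $\cc_B(X)$ via Thm.~\ref{thm:outerplanar_preclosure}, and handles the cross terms $\cc_B(\{u,v\})$ with $u\in X_1$, $v\in X_2$ by concatenating shortest paths through the generators chosen for the two parts. You dispense with both the induction and the cross-term analysis by noting that it suffices to show $X\subseteq\cc_B(G_X)$; this reduces the problem to one face at a time, provided one knows $\cc_F(S)\subseteq\cc_B(S)$ for $S\subseteq V(C_F)$. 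That faithfulness property is also needed by the paper (its induction step silently uses $\cc_B(X_2)=\cc_F(X_2)$ and $\cc_F(G_{X_2})=\cc_B(G_{X_2})$) but is not proved there; your $2$-cut/excursion argument supplies a clean proof of it. Your single-cycle analysis is likewise more careful than the paper's base case: the paper passes from ``$u,v,w$ pairwise distinct and $w\notin\cc_B(\{u,v\})$'' directly to ``$V(B)=\bigcup_{x,y\in G_X}\cc_B(\{x,y\})$'', which is false for three arbitrary points on a cycle and really requires the $u$--$v$ and $v$--$w$ sub-arcs to be geodesics; your inequalities $a\le\ell/2$ and $b\le\ell/2$, derived from the two argmax choices, are exactly the missing justification. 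The price of your route is the extra faithfulness lemma; what it buys is a non-inductive argument, a fully justified single-face step, and no bookkeeping of cross terms between faces.
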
 

\begin{proof}
	Since $G_X \subseteq X$, $\cc_B(G_X) \subseteq \cc_B(X)$ follows from the monotonicity of $\cc_B$. 
	We show $\cc_B(X) \subseteq \cc_B(G_X)$ by induction on $f$.
	The base case $f =1$ is trivial if $B$ has at most two vertices from $X$. Otherwise, let $u,v,w$ be the vertices considered by Alg.~\ref{alg:generator} for $F=B$. 
	Since $|V(B) \cap X| \geq 2$, we have $u \neq w$.
	If $v =w$, then $G_X = \{u,v\}$ and $X \subseteq \cc_B(G_X)$, from which the monotonicity and idempotency of $\cc_B$ imply $\cc_B(X) \subseteq \cc_B(G_X)$.
	If $v \neq w$, then $u,v$, and $w$ are pairwise different. Furthermore, by definition of this case, $w$ does not lie on the (unique) shortest path between $u$ and $v$. But then 
	$
	X \subseteq V(B) = \bigcup_{x,y \in G_X} \cc_B(\{x,y\})) = \cc_B(G_X) 
	$,
	where the last equality holds by Thm.~\ref{thm:outerplanar_preclosure}. from which $\cc_B(X) \subseteq \cc_B(G_X)$ follows, again by monotonicity and idempotency.
	For the induction step, 
	let $B$ be a biconnected outerplanar graph with interior faces $F_1,\ldots,F_{f+1}$ for some $f \geq 1$. 
	We can assume w.l.o.g. that $F= F_{f+1}$ is adjacent to exactly one interior face. 
	Then $F_1,\ldots,F_f$ form a biconnected outerplanar graph $B'$. 
	Let $X_1 = X \cap V(B')$ (resp. $X_2 = X \cap V(F)$) and $G_{X_1}$ (resp. $G_{X_2}$) be the generator set constructed for $B'$ (resp. $F$) by Alg.~\ref{alg:generator}. 
	Note that $G_X = G_{X_1} \cup G_{X_2}$.
	Thm.~\ref{thm:outerplanar_preclosure} implies  
	\begin{equation}
		\cc_B(X) = \cc_B(X_1) \cup \cc_B(X_2) \cup \bigcup_{u \in X_1, v \in X_2} \cc_B(\{u,v\}) \label{eq:gensetA} 
		\enspace . 	
	\end{equation}
	We have
	\begin{alignat}{4}
		&\cc_B(X_1) &&\subseteq \cc_{B'}(G_{X_1}) &&= \cc_{B}(G_{X_1}) &&\subseteq \cc_B(G_X) \label{eq:gensetB1} \\
		&\cc_B(X_2) &&\subseteq \cc_{F}(G_{X_2})  &&= \cc_{B}(G_{X_2}) &&\subseteq \cc_B(G_X)   \label{eq:gensetB2}
	\end{alignat}
	by $\cc_B(X_1) = \cc_{B'}(X_1)$, $\cc_{B'}(G_{X_1}) = \cc_{B}(G_{X_1})$ and $\cc_B(X_2) = \cc_{F}(X_2)$, $\cc_F(G_{X_2}) = \cc_{B}(G_{X_2})$, and by the induction hypothesis to $B'$ and $F$. 
	Below we show that for all $u \in X_1, v \in X_2$,
	\begin{equation}
		\cc_B(\{u,v\}) \subseteq \bigcup\limits_{x,y \in G_{X}} \cc_B(\{x,y\}) = \cc_B(G_X)\enspace , \label{eq:gensetC}
	\end{equation}
	from which $\cc_B(X) \subseteq \cc_B(G_X)$ follows by (\ref{eq:gensetA})--(\ref{eq:gensetB2}).
	To prove (\ref{eq:gensetC}), let $u \in X_1$, $v \in X_2$, and let $F_u$ be the interior face of $B$ containing $u$. 
	By construction, there are $u_1,u_2 \in G_{X_1}$ such that $u \in \cc_{B'}(\{u_1,u_2\})$ and $\cc_{B'}(\{u_1,u_2\}) \cap G_{X_1} = \{u_1,u_2\}$.
	Similarly, there are $v_1,v_2 \in G_{X_2}$ with $v \in \cc_{F}(\{v_1,v_2\})$ and $\cc_{F}(\{v_1,v_2\}) \cap G_{X_2} = \{v_1,v_2\}$.
	It holds that for all shortest paths $P_{u,v}$ between $u$ and $v$, there is a shortest path $P_{u',u} \oplus P_{u,v} \oplus P_{v,v'}$ that contains $P_{u,v}$, where $\oplus$ denotes the path concatenation operation and $P_{u',u}$ (resp. $P_{v,v'}$) is a shortest path from some $u' \in \{u,u_1,u_2\}$ to $u$ (resp. from $v$ to some $v' \in \{v,v_1,v_2\}$).
	One can easily check that $V(P_{u',u}) \subseteq \cc_B(\{u',u\})$ and $V(P_{v,v'}) \subseteq\cc_B(\{v,v'\})$ are both subsets of $\cc_{B}(G_X)$, from which (\ref{eq:gensetC}) holds by $V(P_{u',v'})\subseteq\cc_B(G_X)$, completing the proof of $\cc_B(X) \subseteq \cc_B(G_X)$.
	
	The linear time complexity of Alg.~\ref{alg:generator} follows from the facts that 
	each iteration of the loop can be carried out in $\bigO{|V(F)|}$ time and 
	the sum of the sizes of the faces $F$ is $\bigO{n}$.
\end{proof}


\begin{algorithm}[tb]
	\KwIn{biconnected outerplanar graph $B$, $X \subseteq V(B)$}
	\KwOut{$\cc_B(X)$}
	
	$G_X \gets \generatorset(B,X)$\;
	\Return $\cc_B(G_X)$\;
	
	\caption{\sc Function $\beta$}
	\label{alg:batchblock}
\end{algorithm}

We are ready to present Alg.~\ref{alg:batchblock} computing the closure of a set of vertices over a biconnected outerplanar graph (see line~\ref{alg:updateX} in Alg.~\ref{alg:batchclosure}).
The input of Alg.~\ref{alg:batchblock} consists of a biconnected outerplanar graph $B$ and a set $X \subseteq V(B)$. 
Using Alg.~\ref{alg:generator}, it first computes a generator set $G_X$ for $B$ and $X$  and computes then $\cc_B(X)=\cc_B(G_X)$ in time $\bigO{|V(B)|\cdot |G_X|}$ by Cor.~\ref{cor:outerplanar}. 
\begin{lemma}
\label{lm:correctnessblockclosure}
Let $B$, $f$, and $X$ be as in Lemma~\ref{lm:generatorset}.
Then Alg.~\ref{alg:batchblock} computes $\cc_B(X)$ correctly and in $\bigO{|V(B)|f}$ time. 	
\end{lemma}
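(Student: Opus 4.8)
The plan is to chain together the two ingredients already in place, since Alg.~\ref{alg:batchblock} does nothing more than call $\generatorset(B,X)$ and then return $\cc_B(G_X)$. So there is essentially no new combinatorial content: correctness comes from Lemma~\ref{lm:generatorset}, and the time bound comes from combining the linear generator-set construction with Corollary~\ref{cor:outerplanar}.

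First I would establish correctness. By Lemma~\ref{lm:generatorset}, the set $G_X$ produced by Alg.~\ref{alg:generator} satisfies $G_X \subseteq X \subseteq V(B)$ and $\cc_B(G_X) = \cc_B(X)$. Hence the value $\cc_B(G_X)$ returned by Alg.~\ref{alg:batchblock} is exactly $\cc_B(X)$. I would note in passing that a biconnected outerplanar graph is in particular a connected outerplanar graph, so both the notion $\cc_B$ and the closure-via-SSSP computation of Corollary~\ref{cor:outerplanar} apply to $B$.

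Next I would bound the running time. By Lemma~\ref{lm:generatorset}, computing $G_X = \generatorset(B,X)$ takes $\bigO{|V(B)|}$ time and yields $|G_X| = \bigO{f}$, where $f = \facenumber(B)$. Applying Corollary~\ref{cor:outerplanar} to the outerplanar graph $B$ with input set $G_X$ (using that $|E(B)| = \bigO{|V(B)|}$ for outerplanar $B$), the closure $\cc_B(G_X)$ is computed in $\bigO{|V(B)|\cdot |G_X|} = \bigO{|V(B)| f}$ time. Adding the $\bigO{|V(B)|}$ spent on the generator-set phase and using $f \geq 1$, the total is $\bigO{|V(B)| f}$, as claimed.

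The only thing requiring care — and it is pure bookkeeping rather than a real obstacle — is to make sure that the quantity controlling the $|X|$-dependence of Corollary~\ref{cor:outerplanar} is replaced by the generator-set size bound $|G_X| = \bigO{f}$ from Lemma~\ref{lm:generatorset}, which is precisely why the cardinality of $X$ drops out of the final complexity. No additional structural argument about $B$ is needed.
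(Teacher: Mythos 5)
Your proposal is correct and matches the paper's own (very brief) justification exactly: the paper likewise obtains correctness from Lemma~\ref{lm:generatorset} via $\cc_B(G_X)=\cc_B(X)$ and the running time by applying Cor.~\ref{cor:outerplanar} to the generator set, giving $\bigO{|V(B)|\cdot|G_X|}=\bigO{|V(B)|f}$. Nothing is missing.
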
 

In order to state Thm.~\ref{th:batch}, the main result of this section, we need some further notation.
For any $v \in V(\tilde{G})$, $\Gamma(v)$ denotes $\{v\}$ if $v \in V(G)$; o/w $\Gamma(v) = V(B)$, where $B$ is the block of $G$ represented by $v$. 
Prop.~\ref{pr:BBtreeshortestpath} below is used in the proof of Thm.~\ref{th:batch}. Its proof follows from the definitions.
\begin{proposition}  
	\label{pr:BBtreeshortestpath}
	Let $G$ be an outerplanar graph and $x \in V(G)$.
	\begin{itemize}
		\item[(i)] Let $\tilde{G}$ be the BB-tree of $G$ and $u,v \in V(\tilde{G})$. 
		If $x\in V(\tilde{G})$, then $x$ is on the shortest path in $\tilde{G}$ between $u$ and $v$ iff it is on a shortest path in $G$ between $u'$ and $v'$, for all $u' \in \Gamma(u)$ and $v' \in \Gamma(v)$.
		\item[(ii)] Let $B$ be some block of $G$ and $u,v \in V(B)$. 
		Then $x$ is on a shortest path in $B$ connecting $u$ and $v$ iff it is on a shortest path in $G$ between $u$ and $v$. 
	\end{itemize}
\end{proposition}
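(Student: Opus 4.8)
The plan is to derive both parts from one elementary fact: in any graph, a shortest path between two vertices that lie in a common block $B$ is a simple walk that cannot leave $B$, because leaving $B$ would force it to re‑enter $B$ through the same cut vertex and thus repeat a vertex. I would record this first, together with the structural reading of $\tilde{G}$ implied by its construction: the vertices of $\tilde{G}$ that already belong to $V(G)$ are precisely the cut vertices of $G$ together with the degree‑one vertices of $G$, and every such vertex that is not a leaf of the tree $\tilde{G}$ is in fact a cut vertex of $G$.

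For part~(ii): since $B$ is a subgraph of $G$ we have $d_G(u,v) \le d_B(u,v)$, and the fact above shows every shortest $G$‑path between $u$ and $v$ stays inside $B$, so $d_B(u,v) \le d_G(u,v)$; hence the distances agree and the families of shortest $u$–$v$ paths in $B$ and in $G$ literally coincide. Therefore $x$ lies on a shortest $B$‑path between $u$ and $v$ iff it lies on a shortest $G$‑path between them.

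For part~(i), I would first reduce the ``for all $u' \in \Gamma(u)$ and $v' \in \Gamma(v)$'' condition to the case where $u$ and $v$ are themselves vertices of $G$. If $u = v_B$ is a block vertex, let $c_B$ be the unique neighbour of $v_B$ on the $u$–$v$ path in $\tilde{G}$; then $c_B \in V(B) \cap V(G)$, every $u'$–$v'$ path with $u' \in V(B) = \Gamma(v_B)$ passes through $c_B$, and, using part~(ii) to route inside $B$, a shortest $G$‑path through $x$ from some $u' \in V(B)$ to $v'$ exists iff a shortest $G$‑path from $c_B$ to $v'$ through $x$ exists; moreover, since $x \ne v_B$, $x$ lies on the $\tilde{G}$‑path from $v_B$ to $v$ iff it lies on the one from $c_B$ to $v$. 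After the symmetric replacement on the $v$‑side, it remains to prove, for $u,v,x \in V(G)\cap V(\tilde{G})$, that $x$ lies on the $\tilde{G}$‑path between $u$ and $v$ iff $x$ lies on some shortest $G$‑path between $u$ and $v$. The forward direction is immediate if $x \in \{u,v\}$; if $x$ is internal to the $\tilde{G}$‑path it is a cut vertex of $G$, and I would show it separates $u$ from $v$ in $G$ by taking a hypothetical $u$–$v$ path $Q$ in $G - x$ and pushing it to a $u$–$v$ walk in $\tilde{G} - x$ (a vertex of $G$ inside a block $B$ maps to $v_B$, a surviving vertex maps to itself, a bridge edge to itself, and a block edge to a walk of length at most two through the corresponding block vertex), which contradicts $x$ separating $u$ from $v$ in the tree $\tilde{G}$; hence $x$ is on every $u$–$v$ path, in particular every shortest one. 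The backward direction I would argue contrapositively: if $x$ is off the $\tilde{G}$‑path between $u$ and $v$, it sits in a branch of $\tilde{G}$ attached to that path at some vertex $c \ne x$, and in $G$ any $u$–$v$ walk through $x$ must enter and leave that branch through $c$, violating simplicity, so no $u$–$v$ path — a fortiori no shortest one — passes through $x$.

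The only part that is more than bookkeeping is the claim that $\tilde{G}$ records separations faithfully — i.e. the walk‑pushing argument above — and I expect this to be the main (mild) obstacle; everything else follows directly from the definitions of block, bridge, and BB‑tree, and I would isolate the faithfulness claim as a small self‑contained lemma. I would also note that outerplanarity is not actually used here: the statement holds for arbitrary graphs with the analogous block/bridge tree.
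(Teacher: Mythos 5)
Your argument is sound, but note that the paper does not actually prove Prop.~\ref{pr:BBtreeshortestpath} at all --- it states only that the claim ``follows from the definitions'' --- so your proposal supplies a proof where the paper offers an assertion. Your route is the natural one: part~(ii) from the standard fact that a simple path between two vertices of a block cannot leave the block (so the shortest-path families in $B$ and $G$ coincide), and part~(i) by reducing block vertices $v_B$ to the cut vertex $c_B$ of $B$ facing the other endpoint and then proving that the tree $\tilde{G}$ records separations of $G$ faithfully via the walk-pushing lemma. Two small points deserve care if you write this out in full. First, in the backward direction of~(i) the attachment vertex $c$ of the branch of $\tilde{G}$ containing $x$ may itself be a block vertex, in which case ``enter and leave the branch through $c$'' has no direct meaning in $G$; you should instead pass to the cut vertex $c'\in V(G)$ that is the neighbour of $c$ towards $x$, and argue that any $u'$--$v'$ path visiting $x$ would have to visit $c'$ twice. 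Second, in the reduction step the equivalence should be stated with ``for all $u'\in\Gamma(v_B)$'' rather than ``for some $u'$''; the two happen to coincide here (because every $u'$--$v'$ path passes through $c_B$ and, by part~(ii), a shortest $u'$--$c_B$ path inside $B$ concatenates with a shortest $c_B$--$v'$ path to give a shortest $u'$--$v'$ path), but the universally quantified form is what the proposition asserts and what Thm.~\ref{th:batch} later uses. Your closing observation is also correct: outerplanarity plays no role in this proposition, which holds for the block-and-bridge tree of an arbitrary connected graph; it is only Thm.~\ref{thm:outerplanar_preclosure} and the face-based closure computation that need outerplanarity.
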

	
\begin{theorem}
	\label{th:batch}
	For outerplanar graphs, Alg.~\ref{alg:batchclosure} solves Problem~\ref{problem:batch} correctly and in $\bigO{n f}$ time, where $f=\facenumber(G)$.
\end{theorem}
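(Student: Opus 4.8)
The plan is to prove the two inclusions $X_i \subseteq \cc_G(X)$ (soundness) and $\cc_G(X) \subseteq X_i$ (completeness) for the set $X_i$ returned in line~\ref{alg:outputX}, and then bound the running time. For soundness I would induct along the execution. We have $X_0 = X \subseteq \cc_G(X)$. For $X_1$ (line~\ref{alg:initX1}): if $z \in C \cap V(G)$, then since $C = \cc_{\tilde G}(C_1 \cup C_2)$ and $\tilde G$ is a tree, $z$ lies on the $\tilde G$-path between two nodes $p, q \in C_1 \cup C_2$; by the definition of $C_1$ and $C_2$ each of $p, q$ has a ``witness'' $p', q' \in X$ with $p' \in \Gamma(p)$, $q' \in \Gamma(q)$, and Prop.~\ref{pr:BBtreeshortestpath}(i) gives $z \in I_G(p', q') \subseteq \cc_G(X)$, so $X_1 \subseteq \cc_G(X)$. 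For the loop, each call $\beta(B, V(B)\cap X_i)$ equals $\cc_B(V(B)\cap X_i)$ by Lemma~\ref{lm:correctnessblockclosure}; combining Thm.~\ref{thm:outerplanar_preclosure} applied to $B$ with Prop.~\ref{pr:BBtreeshortestpath}(ii) shows $\cc_B(S) \subseteq \cc_G(S)$ for every $S \subseteq V(B)$, so from the inductive hypothesis $X_i \subseteq \cc_G(X)$ and monotonicity/idempotency of $\cc_G$ we get $X_{i+1} \subseteq \cc_G(X)$.

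For completeness I would first note $X \subseteq X_i$ throughout, since the algorithm never removes vertices, and then use Thm.~\ref{thm:outerplanar_preclosure} (so $\cc_G(X) = \bigcup_{a,b \in X} I_G(a,b)$) to reduce the claim to: for all $a, b \in X$ and every vertex $z$ on a fixed shortest $a$--$b$ path $P$ in $G$, $z$ ends up in $X_i$. I would decompose $P$ along the block-and-bridge structure of $G$ into maximal pieces, each lying inside a single block or equal to a single bridge, with consecutive pieces meeting at cut vertices. Two sub-claims finish it: (a) every ``transition'' cut vertex $c$ of this decomposition, together with $a$ and $b$, lies in $X_1$ --- for $a, b$ this is immediate, and for $c$ one shows $c$ separates the BB-tree representatives $\hat a, \hat b$ in $\tilde G$ (where $\hat a = a$, or $\hat a = v_{B_a}$ when $a$ is interior to a block $B_a$), hence $c \in I_{\tilde G}(\hat a, \hat b) \subseteq C$, which Prop.~\ref{pr:BBtreeshortestpath}(i) certifies since $\hat a, \hat b, c \in V(\tilde G)$; and (b) if $z$ lies in the interior of a piece contained in a block $B$, then (by (a) and Prop.~\ref{pr:BBtreeshortestpath}(ii)) that piece is a shortest path in $B$ between two vertices of $V(B) \cap X_1$, and $v_B \in Y \cap C$ because $P$ traverses $B$, so the loop processes $v_B$ with $|V(B)\cap X_i| \ge 2$ and, by Lemma~\ref{lm:correctnessblockclosure} and Thm.~\ref{thm:outerplanar_preclosure}, places $z$ in $X_{i+1}$. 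Part (a) already puts the skeleton and bridge vertices of $P$ into $X_1$, so every $z \in V(P)$ is covered, and combining with soundness gives $X_i = \cc_G(X)$.

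For the running time: the BB-tree is built in $\bigO{n}$ (Sect.~\ref{sec:preliminaries}); lines~\ref{alg:initX}--\ref{alg:initX1} cost $\bigO{n}$, using $\sum_B |V(B)| = \bigO{n}$ so that forming $C_1$ and all the sets $V(B) \cap X_i$ --- maintained incrementally, since $X_i$ only grows --- is $\bigO{n}$ amortized; $\closuretree$ runs in $\bigO{n}$ by Lemma~\ref{lm:correctnesstreeclosure}; and the loop calls $\beta$ at most once per block, the call for $B$ costing $\bigO{|V(B)| \facenumber(B)} \le \bigO{|V(B)| f}$ by Lemma~\ref{lm:correctnessblockclosure}, which sums to $\bigO{f \sum_B |V(B)|} = \bigO{nf}$.

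I expect the main obstacle to be sub-claim~(a) of the completeness part: matching a concrete shortest path of $G$ against the unique path of the BB-tree and against the shortest paths internal to individual blocks, together with the bookkeeping of the degenerate cases ($a$ or $b$ interior to a block, a transition vertex coinciding with $a$ or $b$, or a block that $P$ enters and leaves at the same vertex). This is precisely the situation Prop.~\ref{pr:BBtreeshortestpath} is meant to handle, so the work is mostly in applying it carefully rather than in new ideas.
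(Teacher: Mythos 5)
Your proposal is correct and follows essentially the same route as the paper's proof: soundness by induction on the loop counter via Prop.~\ref{pr:BBtreeshortestpath} together with Lemmas~\ref{lm:correctnesstreeclosure} and \ref{lm:correctnessblockclosure}, completeness by reducing to shortest paths between generator pairs via Thm.~\ref{thm:outerplanar_preclosure} and tracking their entry/exit cut vertices through the BB-tree and the individual blocks, and the runtime by summing the per-block costs $\bigO{|V(B)|\,f}$. Your phrasing of completeness as a decomposition of the whole path into block/bridge pieces is only a cosmetic variant of the paper's per-vertex case split using $u_\bot, v_\bot$, and your explicit justification that $v_B \in C$ when $P$ traverses $B$ is if anything slightly more careful than the paper's.
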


\begin{proof}
	Regarding the correctness, for the $X_i$s in Alg.~\ref{alg:batchclosure} we have  $X=X_0 \subseteq X_1 \subseteq \ldots \subseteq X_N \subseteq V(G)$  (cf. lines~\ref{alg:initX}, \ref{alg:initX1}, and \ref{alg:updateX}), where $N$ is the value of $i$ at termination.
	We show that $X_N = \cc(X)$.
	This is straightforward for $|X| \leq 1$, so assume $|X| > 1$.
	We prove the soundness (i.e., $X_N \subseteq \cc(X)$) by showing with induction on $i$ that $X_i \subseteq \cc(X)$ 
	for all $i$. 
	The proof of the case $i=0$ is automatic by the extensivity of $\cc$.
	For $i=1$, the same argument holds if $x \in X_0$, so consider the case that $x \in X_1 \setminus X_0$. 
	Then, by Lemma~\ref{lm:correctnesstreeclosure} concerning the correctness of  $\closuretree$ computing the closure over trees (cf. line~\ref{alg:setC}), $x$ belongs to the closure of $C_1 \cup C_2$ in $\tilde{G}$. 
	That is, there are $u,v \in C_1 \cup C_2 \subseteq V(\tilde{G})$ such that $x$ lies on a shortest path connecting $u$ and $v$ in $\tilde{G}$, from which we have $x \in \cc(X)$ by (i) of Prop.~\ref{pr:BBtreeshortestpath}.
	For the induction step, suppose $X_k \subseteq \cc(X)$ holds for $k \geq 1$ and let $x \in X_{k+1}$.
	If $x \in X_k$, then $x \in \cc(X)$ by the induction hypothesis.
	Otherwise, by the definition of $k$, $x$ has been added to $X_{k+1}$ in line~\ref{alg:updateX}. 
	But then, $x \in \cc(X)$ is immediate from (ii) of Prop.~\ref{pr:BBtreeshortestpath} by Lemma~\ref{lm:correctnessblockclosure} concerning the correctness of  $\closureblock$ computing the closure over biconnected outerplanar graphs, and by the induction hypothesis, completing the proof of soundness.   	
	
	For the completeness (i.e., $\cc(X) \subseteq X_N$), let $x \in \cc(X)$.
	Clearly, $x \in X_N$ if $x \in X$.
	Otherwise, by Thm.~\ref{thm:outerplanar_preclosure}, there are $u,v \in X$ with $u \neq v$ such that $x \in \cc(\{u,v\})$. 
	If $x$ does not belong to a block in $G$, then $x \in V(\tilde{G})$ and $\Gamma(u) \neq \Gamma(v)$. Let $u',v' \in V(\tilde{G})$ such that $u \in \Gamma(u')$ and $v \in \Gamma(v')$. 
	We must have that $x, u', v'$ are pairwise different. But then, by (i) of Prop.~\ref{pr:BBtreeshortestpath}, $x$ is on a shortest path in $\tilde{G}$ that connects $u'$ and $v'$ and it has been added to $X_1$, as $u',v' \in C_1 \cup C_2$ by definition.  
	Now consider the case that $x \in V(B)$ for some block $B$ of $G$. Let $P$ be a shortest path in $G$ with endpoints $u$ and $v$ that contains $x$.
	If $u,v \in V(B)$, then the node $v_B \in V(\tilde{G})$ representing $B$ has been added to $Y$ in line~\ref{alg:initY} and processed in loop~\ref{alg:startfor}--\ref{alg:endfor}. In particular, $x$ is added to $X_{i+1}$ for some $i \geq 1$ because $u,v \in V(B) \cap X_i$ (cf. line~\ref{alg:updateX}).
	If at least one of $u,v$ is not a vertex of $B$, then let $u_\bot, v_\bot \in V(B)$ be the vertices on $P$ with the smallest distance to $u$ and $v$, respectively. 
	The definitions imply that $u_\bot, v_\bot \in V(\tilde{G})$.
	Furthermore, $u_\bot,v_\bot \in X_1$ by (i) of Prop.~\ref{pr:BBtreeshortestpath} and Lemma~\ref{lm:correctnesstreeclosure}.
	We are done if $x = u_\bot$ or $x = v_\bot$.
	Otherwise, $x$ is on a shortest path between $u_\bot$ and $v_\bot$ in $B$ and hence, as $u_\bot,v_\bot \in X_1 \subseteq X_i$, it is added to $X_{i+1}$ for some $i \geq 1$ in line~\ref{alg:updateX} for $v_B$, as $\closureblock$ is correct by Lemma~\ref{lm:correctnessblockclosure}. Hence, $\cc(X) \subseteq X_N$.
	
	Regarding the complexity, $\tilde{G}$ in line~\ref{alg:BBTree} can be computed in $\bigO{n}$ time~\cite{Horvath10} and, by Lemma~\ref{lm:correctnesstreeclosure}, the closure operator $\closuretree$ over $\tilde{G}$ (cf. line~\ref{alg:setC}) can be calculated also in $\bigO{n}$ time.
	Suppose $G$ contains $k$ blocks, say $B_1,\ldots,B_k$.
	Since by Lemma~\ref{lm:correctnessblockclosure}, the closure operator $\closureblock$ over $B_i$ (cf. line~\ref{alg:updateX}) can be computed in $\bigO{n_i f_i}$ time for all $i$, where $n_i = |V(B_i)|$ and $f_i =\facenumber(B_i)$, loop~\ref{alg:startfor}--\ref{alg:endfor} can be carried out in $\sum_i\bigO{n_i f_i} = \bigO{n f}$ time, as $\sum_i \bigO{n_i} = \bigO{n}$ and $f_i = \bigO{f}$. Thus, the total time of Alg.~\ref{alg:batchclosure} is $
	\bigO{n f}$, as claimed.
\end{proof}



\section{Experimental Results}
\label{sec:experiments}
Our experiments presented in this section are concerned with three properties of the proposed heuristic.
First, we evaluate Alg.~\ref{alg:outerplanarSampling} generating outerplanar spanning subgraphs for its \textit{runtime} and for the \textit{quality} of its output. The runtime results are compared also to those of standard algorithms generating \textit{spanning trees}. 
Second, we compare the runtime of our outerplanar closure algorithm (Alg.~\ref{alg:batchclosure}) to that of the na\"{\i}ve algorithm for outerplanar graphs (see Sect.~\ref{sec:preliminaries}). 
Third, using \textit{large real-world} networks~\cite{snapnets}, we empirically evaluate the approximation performance of our heuristic on the core-periphery decomposition~\cite{Sub18} problem.
For the implementation\footnote{The code is avaliable at \url{https://github.com/fseiffarth/GCoreApproximation}.} we used the C++-library {\sc Snap 6.0}~\cite{leskovec2016snap}. All experiments were conducted on an AMD Ryzen 9 3900X with 64GB RAM.

\subsection{Datasets}\label{sec:datasets}
\paragraph{\sc \ER I}
This dataset contains \textit{small} Erd\H{o}s-R\'enyi random graphs.
The size\footnote{We use only small graphs because testing how many edges can be added to the graph without destroying outerplanarity is in $\bigO{n m}$.} of the graphs vary, ranging from $n=100$ to $500$, with a step size of $100$ and for 10 different edge probabilities,  from $p=0.05$ with a step size of $0.01$. 
For each of the $50$ different configurations of $(n, p)$,  $100$ \textit{connected} \ER random graphs have been generated.

\paragraph{\sc \ER II}
This dataset contains also \ER connected random graphs with $10$ different sizes from $n=1{,}000$ with a step size of $1{,}000$ and with edge probabilities ranging from $p=0.006$ to $p=0.02$, with step size $0.002$. Below $p=0.006$, the graphs were too sparse for our purpose. For $n=10{,}000$ and $p=0.02$, the graphs contain around $1{,}000{,}000$ edges. For all configurations of $(n, p)$, $100$ \textit{connected} \ER random graphs have been generated.

\paragraph{\sc Large Real-World Networks}
This dataset contains 15 real-world networks from \cite{snapnets} (see \Cref{tab:graph_data}). 
In case of disconnected graphs, only their largest connected components were considered. 

\subsection{Sampling outerplanar spanning subgraphs}
By Thm.~\ref{thm:outerplanar}, Alg.~\ref{alg:outerplanarSampling} generates random spanning outerplanar graphs in $\bigO{m}$ time, which is, at the same time, the complexity of sampling random spanning trees (without the demand of uniform generation). 
The goal of our first experiments was to investigate the \textit{practical} \textit{time overhead} of Alg.~\ref{alg:outerplanarSampling} needed to generate spanning outerplanar graphs, instead of spanning trees.
The following algorithms have been considered for this purpose: 
(O1) is~Alg.~\ref{alg:outerplanarSampling}, 
(SBFS) implemented in {\sc Snap 6.0} generates a spanning BFS spanning tree, and (BFS) resp. (DFS) are our own implementations generating spanning BFS resp. DFS spanning trees.
We also consider (O2), which first calls (O1) and then calculates the BB-tree as well as the biconnected  components of the output outerplanar graphs returned by (O1). 
The reason of considering (O2) is that our closure algorithm requires these additional pieces of information as well.

We used the \ER II dataset for these experiments.
More precisely, for each of the $100$ random graphs for a particular value of $(n, p)$, we first generated a spanning outerplanar subgraph/tree with the above algorithms and compared their \textit{average} generation runtime.
The results are given in Fig.~\ref{fig:sampling_runtime}.
On the left-hand side we present the average runtime (in sec) of the algorithms per sample as a function of the number of edges in the input graphs.    
Our algorithms (O1) and (O2) scale linearly with the number of edges and are almost as fast as (BFS) and (DFS). Surprisingly, (SBFS) is even slower than (O2), though it generates spanning trees only.
As expected, (O2) is a bit slower than (O1) for the additional information it calculates. 
We will see that this is not a drawback w.r.t. the total time because the auxiliary structure generated by (O2) allows for a much faster closure computation. 
For graphs with around $10^6$ edges, (BFS) needs $0.38s$ per spanning \textit{tree}, while (O1) resp. (O2) $0.4s$ resp. $0.47s$ per spanning \textit{outerplanar} graph. 
If, however, we normalize the runtime by the number of edges in the spanning subgraph, (O1) is even \textit{faster} than (BFS) and (DFS) (see~Fig.~\ref{fig:sampling_runtime} (right)).
In summary, the time overhead of generating spanning outerplanar graphs instead of spanning trees is \textit{marginal}.

\begin{figure}[t]\centering
	\input{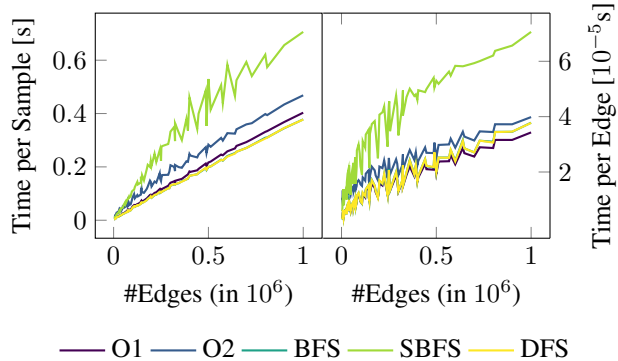}
	\caption{Average time of the algorithms generating spanning outerplanar subgraphs/trees for graphs up to $m=10^6$ in the \ER~II dataset. 
		(left) Average time (in sec) per input graph,     
		(right) average time (in $10^{-5}$ sec) per output edge. 
		\label{fig:sampling_runtime}}
\end{figure}

Besides runtime, we experimentally investigated the output outerplanar graphs of Alg.~\ref{alg:outerplanarSampling} for their \textit{maximality} and found that in all cases, they were at least  \textit{almost maximal}. These experiments were motivated by the fact that more edges in the spanning subgraph preserve more shortest paths.
For our experiments, we greedily added the edges of the input graph to the output of Alg.~\ref{alg:outerplanarSampling} which did not violate outerplanarity. 
Since this test needs $\bigO{m}$ time, we used the {\sc \ER I} dataset containing small graphs. 
The results are presented in \Cref{tab:quality1} for $n=500$ and different edge probabilities; we expect a similar behavior on larger graphs because the crucial factor for maximality is the graphs' density, and not their size. 
Around $18\%$ of the output outerplanar subgraphs were maximal, for the non-maximal outerplanar graphs only a few edges (in average less than $0.3\%$ of the edges) per graph were missing for maximality.

\begin{table}
	\begin{center}
		\begin{tabular}{crcccc}
			\toprule
			\multicolumn{1}{c}{Edge Prob.} &  \multicolumn{1}{c}{\#Edges} & \multicolumn{1}{c}{Maximal (\%)} & \multicolumn{1}{c}{Relative maximality (\%)} \\
			\midrule
			
			0.06 &  7,485 &          24 & 99.79 ($\pm$ 0.18) \\
			0.08 &  9,980 &          21 & 99.76 ($\pm$ 0.19) \\
			0.10 & 12,475 &          17 & 99.73 ($\pm$ 0.22) \\
			0.12 & 14,970 &          12 & 99.73 ($\pm$ 0.21) \\
			0.14 & 17,465 &          14 & 99.74 ($\pm$ 0.18) \\
			\bottomrule
		\end{tabular}
	\end{center}
	\caption{Percentage of maximal outerplanar graphs and closeness to a maximal outerplanar graph (in \%) of the output of Alg.~\ref{alg:outerplanarSampling} for graphs with $n=500$ in the {\sc \ER I} dataset. \label{tab:quality1}}
\end{table}

\subsection{Computing closures in outerplanar graphs}

In this section we empirically evaluate~Alg.~\ref{alg:batchclosure}.
More precisely, we take the output $G$ of (O1) for each graph in the {\sc \ER II} dataset. For each $G$, we first generate a random spanning tree $T$ of $G$ and construct then a possibly non-outerplanar graph $G'$ via adding $m-n+1$ edges to $T$. 
Thus, $G$ and $G'$ have the same number of vertices and edges. Fig.~\ref{fig:experiment_closure} (left) shows the average runtime needed to calculate the closures on $G$ and $G'$ for a random subset of $1\%$ of the vertices. 
(C1) is the na\"{\i}ve closure algorithm for the outerplanar graphs $G$ (i.e., it calculates the shortest paths between all pairs of input vertices). 
(C2) is our Alg.~\ref{alg:batchclosure} and (CGraph) is the na\"{\i}ve closure algorithm for the arbitrary graphs $G'$. 
Recall that the complexity of (CGraph) is $\bigO{nm}$, where $m = \bigO{n}$ by construction, it is $\bigO{n|X|}$  for (C1), where $|X|=n/100$, and $\bigO{n f}$ for (C2), which is independent of $|X|$.
The results are in accordance to these complexities. In particular, the closure computation on the arbitrary graphs $G'$ is slower by a factor up to $300$ than on the outerplanar graphs $G$ with (C1) and (C2) (see left of Fig.~\ref{fig:experiment_closure}). The right part of Fig.~\ref{fig:experiment_closure} is scaled down for (C1) and (C2). It clearly shows that (C2) (i.e., Alg.~\ref{alg:batchclosure}) is much faster in practice than the na\"{\i}ve algorithm (C1). In particular, (C2) seems to be the only out of the three algorithms which scales \textit{linearly} with the number of edges. This indicates that the face number $f$ in the time complexity $\bigO{n f}$ is \textit{negligible} in practice.

\begin{figure}[t]\centering
	\begin{tikzpicture}
	\pgfplotsset{scaled x ticks=false}
	\newcommand{\plota}{
		\begin{tikzpicture}[inner sep=3pt, trim axis right]			\definecolor{color0}{rgb}{0.267004,0.004874,0.329415}
		\definecolor{color1}{rgb}{0.275191,0.194905,0.496005}
		\definecolor{color2}{rgb}{0.212395,0.359683,0.55171}
		\definecolor{color3}{rgb}{0.153364,0.497,0.557724}
		\definecolor{color4}{rgb}{0.122312,0.633153,0.530398}
		\definecolor{color5}{rgb}{0.288921,0.758394,0.428426}
		\definecolor{color6}{rgb}{0.626579,0.854645,0.223353}
		\definecolor{color7}{rgb}{0.993248,0.906157,0.143936}
			\pgfkeys{/pgfplots/MyAxisStyle/.style={height=4cm,scale only axis, point meta=explicit symbolic}}
			\begin{axis}[MyAxisStyle,
				legend cell align={left},
				legend style={
					draw =none,
					legend columns=-1,
					fill opacity=0.8,
					draw opacity=1,
					text opacity=1,
					at={(0.25,-0.4)},
					anchor=north west,
				},
				width = \somewidth,
				height = \someheight,
				xtick = {0, 5000, 10000},
				xticklabels= {$0$, $0.5$, $1$},
				xlabel={\#Edges (in $10^4$)},
				ylabel={Time per Closure [s]},
				]
\addplot [thick, color0]
table [row sep=\\] {%
	1062 0.00277365\\1080 0.00259154\\1095 0.00270176\\1110 0.00270992\\1122 0.00271446\\1134 0.00277694\\1145 0.0028303\\1154 0.00281212\\2154 0.01088581\\2188 0.01112012\\2217 0.01078774\\2244 0.01092089\\2266 0.01108968\\2289 0.01115554\\2311 0.01129537\\2331 0.01119171\\3247 0.02512708\\3299 0.02498296\\3343 0.02435886\\3377 0.02511061\\3414 0.025702\\3446 0.02642011\\3478 0.02567526\\3508 0.02587978\\4343 0.04526831\\4407 0.04353642\\4466 0.04347214\\4517 0.04522486\\4563 0.04573711\\4606 0.04537688\\4644 0.04583192\\4686 0.04644323\\5437 0.07080115\\5519 0.07176801\\5590 0.07420675\\5651 0.07665106\\5709 0.07219358\\5762 0.07360934\\5817 0.07286329\\5863 0.07597306\\6535 0.1045907\\6634 0.10479654\\6711 0.10964386\\6793 0.11288435\\6858 0.11081293\\6925 0.11121876\\6984 0.1102162\\7040 0.10878749\\7631 0.14134253\\7742 0.14588696\\7838 0.14979661\\7929 0.15188051\\8007 0.14841842\\8081 0.1485488\\8155 0.15576374\\8221 0.15345338\\8729 0.20442611\\8854 0.20643171\\8964 0.20812841\\9069 0.21773507\\9155 0.22259103\\9240 0.22546078\\9324 0.22701083\\9399 0.22195506\\9826 0.29039725\\9965 0.28493601\\10084 0.28309738\\10201 0.27527924\\10307 0.27573057\\10401 0.26383853\\10493 0.26015417\\10575 0.26032487\\10921 0.32141361\\11076 0.31956557\\11214 0.31662635\\11339 0.31863943\\11454 0.31598801\\11558 0.30569728\\11660 0.29851292\\11757 0.29432884\\};
\addlegendentry{C1}
\addplot [thick, color4]
table [row sep=\\] {%
	1062 0.00772953\\1080 0.00730589\\1095 0.00740312\\1110 0.00742744\\1122 0.00761739\\1134 0.00778219\\1145 0.00777433\\1154 0.00791248\\2154 0.01572694\\2188 0.01587445\\2217 0.01524508\\2244 0.01526303\\2266 0.01569062\\2289 0.01571158\\2311 0.01597277\\2331 0.01556721\\3247 0.02405546\\3299 0.02357867\\3343 0.02297926\\3377 0.02327469\\3414 0.02374493\\3446 0.02442048\\3478 0.02326163\\3508 0.0240236\\4343 0.03235914\\4407 0.03077169\\4466 0.03031939\\4517 0.03157504\\4563 0.03158701\\4606 0.03206294\\4644 0.03215638\\4686 0.03255728\\5437 0.03987545\\5519 0.03975705\\5590 0.04042675\\5651 0.04201099\\5709 0.03983014\\5762 0.03967815\\5817 0.03978761\\5863 0.04083374\\6535 0.04831402\\6634 0.04821944\\6711 0.04967681\\6793 0.05000642\\6858 0.04889087\\6925 0.05013746\\6984 0.04855977\\7040 0.04858087\\7631 0.05620369\\7742 0.05614797\\7838 0.05611154\\7929 0.05572832\\8007 0.05589363\\8081 0.05659353\\8155 0.05991035\\8221 0.05830441\\8729 0.0673784\\8854 0.06614088\\8964 0.06521444\\9069 0.06771625\\9155 0.06978147\\9240 0.06917201\\9324 0.07090349\\9399 0.06912158\\9826 0.07874576\\9965 0.0771894\\10084 0.07677769\\10201 0.07601997\\10307 0.07612253\\10401 0.07424495\\10493 0.0729484\\10575 0.07268474\\10921 0.08018662\\11076 0.07932445\\11214 0.07984489\\11339 0.07991059\\11454 0.07932098\\11558 0.07778585\\11660 0.07723949\\11757 0.07655067\\};
\addlegendentry{C2}
\addplot [thick, color6]
table [row sep=\\] {%
	1062 0.12649521\\1080 0.14133317\\1095 0.15369992\\1110 0.16823018\\1122 0.17489252\\1134 0.19000469\\1145 0.19194738\\1154 0.19885362\\2154 0.64714365\\2188 0.71451183\\2217 0.72425368\\2244 0.75258818\\2266 0.77580612\\2289 0.8075026\\2311 0.82525339\\2331 0.83018897\\3247 1.57810716\\3299 1.67713417\\3343 1.71688755\\3377 1.79291582\\3414 1.88370934\\3446 2.04321029\\3478 1.92631251\\3508 1.9847398\\4343 3.06022113\\4407 2.97934064\\4466 3.07821401\\4517 3.26320307\\4563 3.34507578\\4606 3.40186911\\4644 3.49550429\\4686 3.58558793\\5437 4.75781575\\5519 5.12176799\\5590 5.35869887\\5651 5.69252451\\5709 5.42554931\\5762 5.55755405\\5817 5.66196517\\5863 6.02290965\\6535 7.08392546\\6634 7.45741262\\6711 7.90704963\\6793 8.25803627\\6858 8.35967119\\6925 8.5134729\\6984 8.53854883\\7040 8.50437115\\7631 9.67858599\\7742 10.40740368\\7838 10.66224718\\7929 11.03138112\\8007 11.0941039\\8081 11.14239949\\8155 12.15289668\\8221 11.84369941\\8729 13.90507182\\8854 14.30717277\\8964 15.09699181\\9069 16.0169262\\9155 16.55950372\\9240 17.13390101\\9324 17.59964317\\9399 17.52169508\\9826 20.13572143\\9965 20.08671117\\10084 20.12156983\\10201 20.14862604\\10307 20.31644538\\10401 19.98855458\\10493 20.4312959\\10575 20.44746677\\10921 21.95197382\\11076 22.4109848\\11214 22.79491912\\11339 23.08687342\\11454 23.42287573\\11558 23.48740086\\11660 23.41649923\\11757 23.2970613\\};
\addlegendentry{CGraph}
\end{axis}
		\end{tikzpicture}
	}
	
	\newcommand{\plotb}{
		\begin{tikzpicture}[inner sep=3pt, trim axis left]			\definecolor{color0}{rgb}{0.267004,0.004874,0.329415}
		\definecolor{color1}{rgb}{0.275191,0.194905,0.496005}
		\definecolor{color2}{rgb}{0.212395,0.359683,0.55171}
		\definecolor{color3}{rgb}{0.153364,0.497,0.557724}
		\definecolor{color4}{rgb}{0.122312,0.633153,0.530398}
		\definecolor{color5}{rgb}{0.288921,0.758394,0.428426}
		\definecolor{color6}{rgb}{0.626579,0.854645,0.223353}
		\definecolor{color7}{rgb}{0.993248,0.906157,0.143936}
			\pgfkeys{/pgfplots/MyAxisStyle/.style={height=4cm,scale only axis, point meta=explicit symbolic}}
			\begin{axis}[MyAxisStyle,
						legend style={
				draw =none,
				legend columns=-1,
				fill opacity=0,
				draw opacity=0,
				text opacity=0,
					at={(0.25,-0.4)},
				anchor=north west,
			},
				anchor=north west, xshift=1cm, axis y line*=right,
				width = \somewidth,
				height = \someheight,
				xlabel={\#Edges (in $10^4$)},
				ylabel={Time per Closure [s]},
				xtick = {0, 5000, 10000},
				xticklabels= {$0$, $0.5$, $1$},
				ymax = 0.35,
				]
\addplot [thick, color0]
table [row sep=\\] {%
	1062 0.00277365\\1080 0.00259154\\1095 0.00270176\\1110 0.00270992\\1122 0.00271446\\1134 0.00277694\\1145 0.0028303\\1154 0.00281212\\2154 0.01088581\\2188 0.01112012\\2217 0.01078774\\2244 0.01092089\\2266 0.01108968\\2289 0.01115554\\2311 0.01129537\\2331 0.01119171\\3247 0.02512708\\3299 0.02498296\\3343 0.02435886\\3377 0.02511061\\3414 0.025702\\3446 0.02642011\\3478 0.02567526\\3508 0.02587978\\4343 0.04526831\\4407 0.04353642\\4466 0.04347214\\4517 0.04522486\\4563 0.04573711\\4606 0.04537688\\4644 0.04583192\\4686 0.04644323\\5437 0.07080115\\5519 0.07176801\\5590 0.07420675\\5651 0.07665106\\5709 0.07219358\\5762 0.07360934\\5817 0.07286329\\5863 0.07597306\\6535 0.1045907\\6634 0.10479654\\6711 0.10964386\\6793 0.11288435\\6858 0.11081293\\6925 0.11121876\\6984 0.1102162\\7040 0.10878749\\7631 0.14134253\\7742 0.14588696\\7838 0.14979661\\7929 0.15188051\\8007 0.14841842\\8081 0.1485488\\8155 0.15576374\\8221 0.15345338\\8729 0.20442611\\8854 0.20643171\\8964 0.20812841\\9069 0.21773507\\9155 0.22259103\\9240 0.22546078\\9324 0.22701083\\9399 0.22195506\\9826 0.29039725\\9965 0.28493601\\10084 0.28309738\\10201 0.27527924\\10307 0.27573057\\10401 0.26383853\\10493 0.26015417\\10575 0.26032487\\10921 0.32141361\\11076 0.31956557\\11214 0.31662635\\11339 0.31863943\\11454 0.31598801\\11558 0.30569728\\11660 0.29851292\\11757 0.29432884\\};
\addlegendentry{Outerplanar Time}
\addplot [thick, color4]
table [row sep=\\] {%
	1062 0.00772953\\1080 0.00730589\\1095 0.00740312\\1110 0.00742744\\1122 0.00761739\\1134 0.00778219\\1145 0.00777433\\1154 0.00791248\\2154 0.01572694\\2188 0.01587445\\2217 0.01524508\\2244 0.01526303\\2266 0.01569062\\2289 0.01571158\\2311 0.01597277\\2331 0.01556721\\3247 0.02405546\\3299 0.02357867\\3343 0.02297926\\3377 0.02327469\\3414 0.02374493\\3446 0.02442048\\3478 0.02326163\\3508 0.0240236\\4343 0.03235914\\4407 0.03077169\\4466 0.03031939\\4517 0.03157504\\4563 0.03158701\\4606 0.03206294\\4644 0.03215638\\4686 0.03255728\\5437 0.03987545\\5519 0.03975705\\5590 0.04042675\\5651 0.04201099\\5709 0.03983014\\5762 0.03967815\\5817 0.03978761\\5863 0.04083374\\6535 0.04831402\\6634 0.04821944\\6711 0.04967681\\6793 0.05000642\\6858 0.04889087\\6925 0.05013746\\6984 0.04855977\\7040 0.04858087\\7631 0.05620369\\7742 0.05614797\\7838 0.05611154\\7929 0.05572832\\8007 0.05589363\\8081 0.05659353\\8155 0.05991035\\8221 0.05830441\\8729 0.0673784\\8854 0.06614088\\8964 0.06521444\\9069 0.06771625\\9155 0.06978147\\9240 0.06917201\\9324 0.07090349\\9399 0.06912158\\9826 0.07874576\\9965 0.0771894\\10084 0.07677769\\10201 0.07601997\\10307 0.07612253\\10401 0.07424495\\10493 0.0729484\\10575 0.07268474\\10921 0.08018662\\11076 0.07932445\\11214 0.07984489\\11339 0.07991059\\11454 0.07932098\\11558 0.07778585\\11660 0.07723949\\11757 0.07655067\\};
\addlegendentry{Outerplanar New Complete Time}
\addplot [thick, color6]
table [row sep=\\] {%
	1062 0.12649521\\1080 0.14133317\\1095 0.15369992\\1110 0.16823018\\1122 0.17489252\\1134 0.19000469\\1145 0.19194738\\1154 0.19885362\\2154 0.64714365\\};
\addlegendentry{CGraph}
\end{axis}	
		\end{tikzpicture}
	}
	
	\node (x) at (0,0) {};
	\node[inner sep=0pt] (a) at (5,0) {\plota};
	\node[inner sep=0pt, anchor=south west] (b) at ($ (a.south east) $) {\plotb};
\end{tikzpicture}
	\caption{(left) Closure runtimes for outerplanar graphs with the na\"{\i}ve alg. (C1) and with Alg.~\ref{alg:batchclosure} (C2) and for arbitrary graphs (CGraph), with the same number of nodes and edges. The generator set is a random subset of $1\%$ of the vertices.
		(right) Runtime scaled down for (C1) and (C2).
		\label{fig:experiment_closure}}
\end{figure}
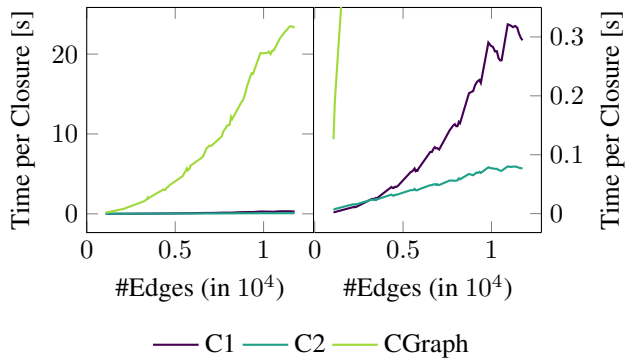

\begin{table}
	\begin{center}
		\begin{tabular}{crccccc}
			\toprule
			\multicolumn{1}{c}{Edge Prob.} &   \multicolumn{1}{c}{\#Edges} &  \multicolumn{1}{c}{Avg. \#Output Edges} &   \multicolumn{1}{c}{Avg. Face Number} \\
			\midrule
			0.008 & 399,960 & 11,077.61 ($\pm$ 19.76) & 76.11 ($\pm$ 25.38) \\
			0.012 & 599,940 & 11,342.36 ($\pm$ 23.01) & 70.52 ($\pm$ 16.36) \\
			0.016 & 799,920 & 11,561.69 ($\pm$ 25.60) & 71.77 ($\pm$ 19.26) \\
			0.020 & 999,900 & 11,755.85 ($\pm$ 27.71) & 65.95 ($\pm$ 14.14) \\
			\bottomrule
		\end{tabular}
	\end{center}
	\caption{Output of~\Cref{alg:outerplanarSampling} on {\sc \ER II} random graphs with fixed size of $n=10^4$\label{tab:outerplanar_prop1}, averaged over $100$ samples}
\end{table}

This observation is supported by \Cref{tab:outerplanar_prop1} containing the average \textit{face number} of the generated spanning outerplanar subgraphs for the graphs with $n=10^4$ vertices in the {\sc\ER II} dataset. 
Somewhat surprisingly, the average face number does \textit{not} increase with the density.
Fig.~\ref{fig:face_evaluation} shows  the average face number as a function of the number of vertices (left) and the number of edges of the input graphs (right), where the colors represent different edge probabilities. The results indicate that in practice, the face number seems to be \textit{sublinear} in the graph size for fixed density (in our experiments, it was always less than $80$), justifying the \textit{better} runtime of our closure computation algorithm (see, again, the right of Fig.~\ref{fig:experiment_closure}).

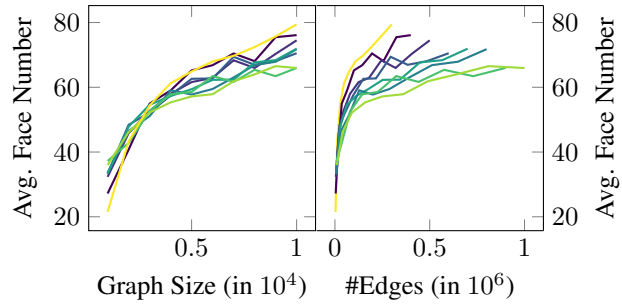
\begin{figure}\centering
	\begin{tikzpicture}
	\pgfplotsset{scaled x ticks=false}
	\newcommand{\plota}{
		\begin{tikzpicture}[inner sep=3pt, trim axis right]
			\definecolor{color0}{rgb}{0.267004,0.004874,0.329415}
			\definecolor{color1}{rgb}{0.275191,0.194905,0.496005}
			\definecolor{color2}{rgb}{0.212395,0.359683,0.55171}
			\definecolor{color3}{rgb}{0.153364,0.497,0.557724}
			\definecolor{color4}{rgb}{0.122312,0.633153,0.530398}
			\definecolor{color5}{rgb}{0.288921,0.758394,0.428426}
			\definecolor{color6}{rgb}{0.626579,0.854645,0.223353}
			\definecolor{color7}{rgb}{0.993248,0.906157,0.143936}
			\pgfkeys{/pgfplots/MyAxisStyle/.style={height=4cm,scale only axis, point meta=explicit symbolic}}
			\begin{axis}[MyAxisStyle,
				legend cell align={left},
			legend style={
	draw =none,
	legend columns=-1,
	fill opacity=0,
	draw opacity=0,
	text opacity=0,
	at={(0,0)},
	anchor=south west,},
				width = \somewidth,
				height = \someheight,
				x grid style={white!69.0196078431373!black},
				xlabel={Graph Size (in $10^4$)},
				ylabel={Avg. Face Number},
				xtick = {0, 5000, 10000},
				xticklabels= {$0$, $0.5$, $1$},
				]
\addplot [thick, color0]
table [row sep=\\] {%
	1000 27.19\\2000 40.73\\3000 54.94\\4000 59.02\\5000 65.22\\6000 66.82\\7000 70.44\\8000 68.07\\9000 75.47\\10000 76.11\\};
\addlegendentry{Avg. Face Number}
\addplot [thick, color1]
table [row sep=\\] {%
	1000 32.34\\2000 45.82\\3000 52.68\\4000 58.01\\5000 61.6\\6000 62.92\\7000 68.33\\8000 65.97\\9000 70.31\\10000 74.5\\};
\addlegendentry{Avg. Face Number}
\addplot [thick, color2]
table [row sep=\\] {%
	1000 33.32\\2000 48.39\\3000 52.24\\4000 57.48\\5000 62.62\\6000 62.73\\7000 69.41\\8000 66.94\\9000 68.09\\10000 70.52\\};
\addlegendentry{Avg. Face Number}
\addplot [thick, color3]
table [row sep=\\] {%
	1000 33.51\\2000 46.5\\3000 51.08\\4000 59.06\\5000 57.76\\6000 59.47\\7000 63.03\\8000 66.84\\9000 67.89\\10000 71.77\\};
\addlegendentry{Avg. Face Number}
\addplot [thick, color4]
table [row sep=\\] {%
	1000 33.34\\2000 47.82\\3000 54.61\\4000 57.29\\5000 59.41\\6000 62.3\\7000 62.35\\8000 67.68\\9000 68.38\\10000 71.92\\};
\addlegendentry{Avg. Face Number}
\addplot [thick, color5]
table [row sep=\\] {%
	1000 37.22\\2000 42.81\\3000 52.64\\4000 57.83\\5000 58.52\\6000 63.46\\7000 61.66\\8000 65.39\\9000 63.43\\10000 65.98\\};
\addlegendentry{Avg. Face Number}
\addplot [thick, color6]
table [row sep=\\] {%
	1000 35.98\\2000 46.16\\3000 52.21\\4000 55.34\\5000 57.17\\6000 57.86\\7000 61.86\\8000 63.98\\9000 66.53\\10000 65.95\\};
\addlegendentry{Avg. Face Number}
\addplot [thick, color7]
table [row sep=\\] {%
	1000 21.5\\2000 43.28\\3000 54.36\\4000 61.26\\5000 64.87\\6000 67.99\\7000 69.72\\8000 72.41\\9000 75.7\\10000 79.44\\};
\addlegendentry{Avg. Face Number}
\end{axis}
			
		\end{tikzpicture}
	}
	
\newcommand{\plotb}{
	\begin{tikzpicture}[inner sep=3pt, trim axis left]
		\definecolor{color0}{rgb}{0.267004,0.004874,0.329415}
		\definecolor{color1}{rgb}{0.275191,0.194905,0.496005}
		\definecolor{color2}{rgb}{0.212395,0.359683,0.55171}
		\definecolor{color3}{rgb}{0.153364,0.497,0.557724}
		\definecolor{color4}{rgb}{0.122312,0.633153,0.530398}
		\definecolor{color5}{rgb}{0.288921,0.758394,0.428426}
		\definecolor{color6}{rgb}{0.626579,0.854645,0.223353}
		\definecolor{color7}{rgb}{0.993248,0.906157,0.143936}
		\pgfkeys{/pgfplots/MyAxisStyle/.style={height=4cm,scale only axis, point meta=explicit symbolic}}
		\begin{axis}[MyAxisStyle,
			legend style={
			draw =none,
			legend columns=-1,
			fill opacity=0,
			draw opacity=0,
			text opacity=0,
			at={(0,0)},
			anchor=south west,},
				anchor=north west, xshift=1cm, axis y line*=right,
			width = \somewidth,
			height = \someheight,
			xtick = {0, 500000, 1000000},
			xticklabels= {$0$, $0.5$, $1$},
			xlabel={\#Edges (in $10^6$)},
			ylabel={Avg. Face Number},
			]
\addplot [thick, color7]
table [row sep=\\] {%
	2997 21.5\\11994 43.28\\26991 54.36\\47988 61.26\\74985 64.87\\107982 67.99\\146979 69.72\\191976 72.41\\242973 75.7\\299970 79.44\\};
\addlegendentry{Avg. Face Number}
\addplot [thick, color0]
table [row sep=\\] {%
	3996 27.19\\15992 40.73\\35988 54.94\\63984 59.02\\99980 65.22\\143976 66.82\\195972 70.44\\255968 68.07\\323964 75.47\\399960 76.11\\};
\addlegendentry{Avg. Face Number}
\addplot [thick, color1]
table [row sep=\\] {%
	4995 32.34\\19990 45.82\\44985 52.68\\79980 58.01\\124975 61.6\\179970 62.92\\244965 68.33\\319960 65.97\\404955 70.31\\499950 74.5\\};
\addlegendentry{Avg. Face Number}
\addplot [thick, color2]
table [row sep=\\] {%
	5994 33.32\\23988 48.39\\53982 52.24\\95976 57.48\\149970 62.62\\215964 62.73\\293958 69.41\\383952 66.94\\485946 68.09\\599940 70.52\\};
\addlegendentry{Avg. Face Number}
\addplot [thick, color4]
table [row sep=\\] {%
	6993 33.34\\27986 47.82\\62979 54.61\\111972 57.29\\174965 59.41\\251958 62.3\\342951 62.35\\447944 67.68\\566937 68.38\\699930 71.92\\};
\addlegendentry{Avg. Face Number}
\addplot [thick, color3]
table [row sep=\\] {%
	7992 33.51\\31984 46.5\\71976 51.08\\127968 59.06\\199960 57.76\\287952 59.47\\391944 63.03\\511936 66.84\\647928 67.89\\799920 71.77\\};
\addlegendentry{Avg. Face Number}
\addplot [thick, color5]
table [row sep=\\] {%
	8991 37.22\\35982 42.81\\80973 52.64\\143964 57.83\\224955 58.52\\323946 63.46\\440937 61.66\\575928 65.39\\728919 63.43\\899910 65.98\\};
\addlegendentry{Avg. Face Number}
\addplot [thick, color6]
table [row sep=\\] {%
	9990 35.98\\39980 46.16\\89970 52.21\\159960 55.34\\249950 57.17\\359940 57.86\\489930 61.86\\639920 63.98\\809910 66.53\\999900 65.95\\};
\addlegendentry{Avg. Face Number}
\end{axis}		
	\end{tikzpicture}
}

	\node (x) at (0,0) {};
	\node[inner sep=0pt] (a) at (5,0) {\plota};
	\node[inner sep=0pt, anchor=north west] (b) at ($ (a.north east) $) {\plotb};

\end{tikzpicture}
	\caption{Face numbers of outerplanar subgraphs generated by~\Cref{alg:outerplanarSampling} for the {\sc\ER II} dataset. (left) Average face number against the graph size. (right) Average face number against input edge number (colors depict edge probabilities).\label{fig:face_evaluation}}
\end{figure}

\begin{landscape}
\begin{table*}
	\begin{center}
		\begin{tabular}{lrrrrrr|rrc}
			\toprule
			Graph &    \multicolumn{1}{c}{Size}   &    \multicolumn{1}{c}{\#Edges}    &  \multicolumn{1}{c}{Density} &  \multicolumn{1}{c}{Size} &   \multicolumn{1}{c}{\#Edges} & \multicolumn{1}{c}{Time [s]} & \multicolumn{1}{|c}{Approx.} & \multicolumn{1}{c}{Time [s]} & \multicolumn{1}{c}{Jaccard sim.} \\
			&    \multicolumn{1}{c}{$n$}   &    \multicolumn{1}{c}{$m$}    &  \multicolumn{1}{c}{} &  \multicolumn{1}{c}{Core} &   \multicolumn{1}{c}{Core} & \multicolumn{1}{c}{Exact} & \multicolumn{1}{|c}{Core} & \multicolumn{1}{c}{Approx.} & \multicolumn{1}{c}{best ($l=5$, $t=1\%$)} \\
			\midrule
			com-Orkut & 3,072,441 & 117,185,083 & 2.5e-05 &        n.a. &         n.a. &      n.a. &    2,915,420 &        1.8e+04 &        n.a. \\
			soc-LiveJournal1 & 4,843,953 & 43,362,750 & 3.7e-06 &        n.a. &         n.a. &      n.a. &    3,018,149 &        8.7e+03 &        n.a. \\
			soc-pokec-relationships & 1,632,803 & 22,301,964 & 1.7e-05 &        n.a. &         n.a. &      n.a. &    1,390,297 &        6.5e+03 &       n.a. \\
com-youtube.ungraph & 1,134,890 &  2,987,624 & 4.6e-06 &   390,825 &  2,169,158 &        8.9e+05 &      338,654 &         2.2e+03 &               0.82 (0.73) \\
com-dblp.ungraph &   317,080 &  1,049,866 & 2.1e-05 &    90,077 &    438,265 &        7.0e+04 &       92,833 &         5.3e+02 &               \textbf{0.92} (0.87) \\
com-amazon.ungraph &   334,863 &    925,872 & 1.7e-05 &   216,109 &    643,075 &        2.2e+05 &      231,618 &         5.2e+02 &               0.88 (0.87) \\
Slashdot0902 &    82,168 &    582,533 & 1.7e-04 &    48,718 &    514,338 &        1.4e+04 &       45,558 &         1.6e+02 &              \textbf{0.92} (0.68) \\
Cit-HepPh &    34,401 &    420,828 & 7.1e-04 &    32,111 &    417,050 &        6.1e+03 &       32,309 &         9.6e+01 &               \textbf{0.99} (\textbf{0.97}) \\
Cit-HepTh &    27,400 &    352,059 & 9.4e-04 &    24,832 &    347,918 &        3.5e+03 &       25,049 &         7.7e+01 &               \textbf{0.98} (\textbf{0.98}) \\
CA-AstroPh &    17,903 &    197,031 & 1.2e-03 &     9,487 &    142,943 &        6.4e+02 &        9,522 &         3.0e+01 &               \textbf{0.95} (\textbf{0.93}) \\
CA-CondMat &    21,363 &     91,342 & 4.0e-04 &     8,603 &     49,682 &        4.0e+02 &        8,761 &         3.5e+01 &               \textbf{0.94} (\textbf{0.90}) \\
CA-HepPh &    11,204 &    117,649 & 1.9e-03 &     4,825 &     63,548 &        1.8e+02 &        4,804 &         1.8e+01 &               \textbf{0.93} (\textbf{0.91}) \\
Wiki-Vote &     7,066 &    100,736 & 4.0e-03 &     4,579 &     98,026 &        1.3e+02 &        4,452 &         1.5e+01 &               \textbf{0.97} (0.78) \\
CA-HepTh &     8,638 &     24,827 & 6.7e-04 &     3,605 &     14,161 &        4.6e+01 &        3,669 &         1.2e+01 &               \textbf{0.96} (\textbf{0.9}3) \\
CA-GrQc &     4,158 &     13,428 & 1.6e-03 &     1,336 &      5,036 &        7.0e+00 &        1,380 &         6.0e+00 &               \textbf{0.92} (0.88) \\
\bottomrule
		\end{tabular}
	\end{center}
	\caption{Large real-world networks with number of vertices ($n$), number of edges ($m$), density, number of vertices and edges in the core, time to calculate the exact core in seconds (or n.a. if it was not possible within 50 days), size of the approximated core, time to calculate the approximated core, and the Jaccard similarities of the exact and approximated cores obtained by grid search over $l$ and $t$, and for $l=5, t=1\%$ in brackets (values of at least 0.9 in bold). 
	The networks are sorted by $n \cdot m$.	
		\label{tab:graph_data}}
\end{table*}
\end{landscape}

\subsection{Core approximation in real-world networks}
\label{sec:core}
Finally, we present experiments concerning the approximation of \textit{cores} in \textit{large real-world} networks.
Following \cite{Sub18}, the core $\Core$ of a graph $G$ is defined by $\bigcap_{j=1}^i C_j$, where $i$ is the smallest integer satisfying $\bigcap_{j=1}^i C_j=\bigcap_{j=1}^{i+1} C_j$ and $C_j = \rho(X_j)$ is the closure of $X_j \subseteq V(G)$ containing $k >0$ vertices selected independently and uniformly at random. 
As a compromise between runtime and stability w.r.t. random effects, we choose $k=10$.  For each of the networks in Table~\ref{tab:graph_data}, the fixed point was reached after $i=3$ iterations. 

We used 15 networks from \cite{snapnets} in our experiments. 
The size ($n$) and order ($m$) of some of them are more than $1{,}000$ times larger than those in~\cite{Sub18}. Table~\ref{tab:graph_data} contains the size of the exact cores and the runtime of computing them. 
While the exact core of the $3$ largest networks could not be computed within 50 days, our algorithm produced the approximate cores in $5\opn{h}$ for these very large networks; in less than $40\opn{min}$ for all other graphs.

For the approximation, for each large network we generated $100$ spanning outerplanar subgraphs with Alg.~\ref{alg:outerplanarSampling} and calculated the closure of $l$ randomly chosen vertices on each of these outerplanar graphs with Alg.~\ref{alg:batchclosure}. 
Given the 100 closed sets in the outerplanar subgraphs obtained in this way, a vertex $v\in G$ was regarded as closed iff it was contained in at least $t\%$ of the closed sets. 
The approximate core $\tilde{\mathcal{C}}$ was then calculated in the same iterative way as the exact one, but with the approximate closed sets. 
We compared exact and approximate cores with each other using Jaccard similarity. 
The first value in the last column of Table~\ref{tab:graph_data} denotes the \textit{best} Jaccard similarity achieved via a grid search over $l \in\{5,\ldots,2000\}$ and $t \in \{1\%,\ldots,10\%\}$. 
We stress that using higher values of $l$ has no impact on the time complexity of our algorithm, as it depends on $n$ and the face number only (cf. Section~\ref{sec:closure}). 
The second value (in brackets) denotes the Jaccard similarity for the approximate core obtained for $l=5$ and $t=1\%$. 

For 12 out of the 15 graphs, we obtained a Jaccard similarity of around $0.8$ or more; for 9 even at least $0.9$.
As an example, in Fig.~\ref{fig:DegreeDistr} we show the exact core and periphery of the CA-HepTh network (see (a) and (b)) and their approximations (see (d) and (e)) for $l=5$ and $t = 1\%$ (see, also, Table~\ref{tab:graph_data}).
We also plot the degree distribution of the exact core (see (c)) and that of the approximate core (see (f)) obtained for these values. One can see that the two distributions are fairly similar to each other, by noting that the Jaccard similarity obtained for $l=5$ and $t=1\%$ was $0.93$ (see Table~\ref{tab:graph_data}). A similar behavior could be observed for the other networks as well.  

\begin{figure}
	\captionsetup[subfigure]{justification=centering}
	\centering
	\newcommand{\figscale}{0.35}
	\begin{subfigure}[b]{0.3\linewidth}
		\centering
		\includegraphics[scale=\figscale]{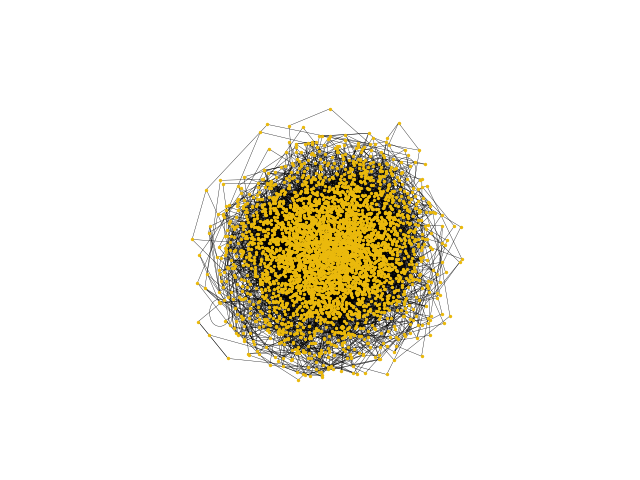}
		\caption{}
		\label{fig:Core_}
	\end{subfigure}
	\hfill
	\begin{subfigure}[b]{0.3\linewidth}
		\centering
		\includegraphics[scale=\figscale]{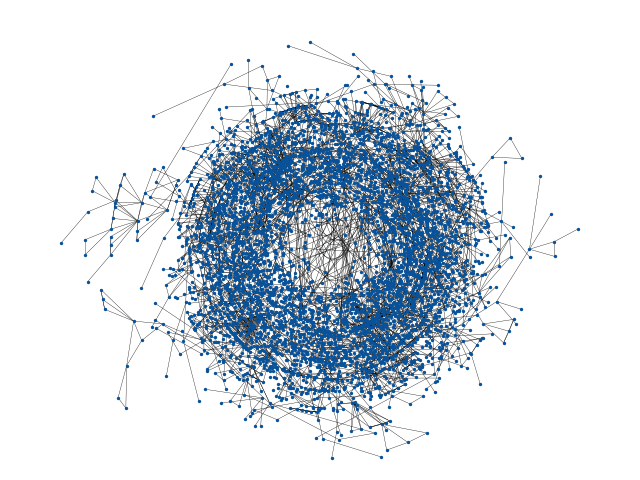}
		\caption{}
		\label{fig:Periphery_}
	\end{subfigure}
	\hfill
	\begin{subfigure}[b]{0.3\linewidth}
		\centering
\begin{tikzpicture}
	
	\definecolor{darkgray176}{RGB}{176,176,176}
	\definecolor{teal782154}{RGB}{7,82,154}
	
	\begin{axis}[
		tick align=inside,
		tick pos=left,
		x grid style={darkgray176},
		xmin= 0, xmax=59.8,
		xtick={25, 50},
		xtick style={color=black},
		y grid style={darkgray176},
		ymin= 0, ymax=620,
ytick={0, 200, 400, 600},
		ytick style={color=black},
		width = \somewidth*1.5,
		height = \someheight*1.5,
		]
		\draw[draw=none,fill=teal782154] (axis cs:1,0) rectangle (axis cs:2,0);
		\draw[draw=none,fill=teal782154] (axis cs:2,0) rectangle (axis cs:3,284);
		\draw[draw=none,fill=teal782154] (axis cs:3,0) rectangle (axis cs:4,495);
		\draw[draw=none,fill=teal782154] (axis cs:4,0) rectangle (axis cs:5,481);
		\draw[draw=none,fill=teal782154] (axis cs:5,0) rectangle (axis cs:6,441);
		\draw[draw=none,fill=teal782154] (axis cs:6,0) rectangle (axis cs:7,338);
		\draw[draw=none,fill=teal782154] (axis cs:7,0) rectangle (axis cs:8,264);
		\draw[draw=none,fill=teal782154] (axis cs:8,0) rectangle (axis cs:9,228);
		\draw[draw=none,fill=teal782154] (axis cs:9,0) rectangle (axis cs:10,164);
		\draw[draw=none,fill=teal782154] (axis cs:10,0) rectangle (axis cs:11,131);
		\draw[draw=none,fill=teal782154] (axis cs:11,0) rectangle (axis cs:12,114);
		\draw[draw=none,fill=teal782154] (axis cs:12,0) rectangle (axis cs:13,87);
		\draw[draw=none,fill=teal782154] (axis cs:13,0) rectangle (axis cs:14,89);
		\draw[draw=none,fill=teal782154] (axis cs:14,0) rectangle (axis cs:15,80);
		\draw[draw=none,fill=teal782154] (axis cs:15,0) rectangle (axis cs:16,58);
		\draw[draw=none,fill=teal782154] (axis cs:16,0) rectangle (axis cs:17,50);
		\draw[draw=none,fill=teal782154] (axis cs:17,0) rectangle (axis cs:18,27);
		\draw[draw=none,fill=teal782154] (axis cs:18,0) rectangle (axis cs:19,42);
		\draw[draw=none,fill=teal782154] (axis cs:19,0) rectangle (axis cs:20,27);
		\draw[draw=none,fill=teal782154] (axis cs:20,0) rectangle (axis cs:21,21);
		\draw[draw=none,fill=teal782154] (axis cs:21,0) rectangle (axis cs:22,22);
		\draw[draw=none,fill=teal782154] (axis cs:22,0) rectangle (axis cs:23,23);
		\draw[draw=none,fill=teal782154] (axis cs:23,0) rectangle (axis cs:24,14);
		\draw[draw=none,fill=teal782154] (axis cs:24,0) rectangle (axis cs:25,24);
		\draw[draw=none,fill=teal782154] (axis cs:25,0) rectangle (axis cs:26,10);
		\draw[draw=none,fill=teal782154] (axis cs:26,0) rectangle (axis cs:27,8);
		\draw[draw=none,fill=teal782154] (axis cs:27,0) rectangle (axis cs:28,8);
		\draw[draw=none,fill=teal782154] (axis cs:28,0) rectangle (axis cs:29,4);
		\draw[draw=none,fill=teal782154] (axis cs:29,0) rectangle (axis cs:30,7);
		\draw[draw=none,fill=teal782154] (axis cs:30,0) rectangle (axis cs:31,9);
		\draw[draw=none,fill=teal782154] (axis cs:31,0) rectangle (axis cs:32,4);
		\draw[draw=none,fill=teal782154] (axis cs:32,0) rectangle (axis cs:33,7);
		\draw[draw=none,fill=teal782154] (axis cs:33,0) rectangle (axis cs:34,4);
		\draw[draw=none,fill=teal782154] (axis cs:34,0) rectangle (axis cs:35,4);
		\draw[draw=none,fill=teal782154] (axis cs:35,0) rectangle (axis cs:36,4);
		\draw[draw=none,fill=teal782154] (axis cs:36,0) rectangle (axis cs:37,2);
		\draw[draw=none,fill=teal782154] (axis cs:37,0) rectangle (axis cs:38,1);
		\draw[draw=none,fill=teal782154] (axis cs:38,0) rectangle (axis cs:39,4);
		\draw[draw=none,fill=teal782154] (axis cs:39,0) rectangle (axis cs:40,5);
		\draw[draw=none,fill=teal782154] (axis cs:40,0) rectangle (axis cs:41,3);
		\draw[draw=none,fill=teal782154] (axis cs:41,0) rectangle (axis cs:42,1);
		\draw[draw=none,fill=teal782154] (axis cs:42,0) rectangle (axis cs:43,3);
		\draw[draw=none,fill=teal782154] (axis cs:43,0) rectangle (axis cs:44,1);
		\draw[draw=none,fill=teal782154] (axis cs:44,0) rectangle (axis cs:45,0);
		\draw[draw=none,fill=teal782154] (axis cs:45,0) rectangle (axis cs:46,2);
		\draw[draw=none,fill=teal782154] (axis cs:46,0) rectangle (axis cs:47,1);
		\draw[draw=none,fill=teal782154] (axis cs:47,0) rectangle (axis cs:48,3);
		\draw[draw=none,fill=teal782154] (axis cs:48,0) rectangle (axis cs:49,1);
		\draw[draw=none,fill=teal782154] (axis cs:49,0) rectangle (axis cs:50,1);
		\draw[draw=none,fill=teal782154] (axis cs:50,0) rectangle (axis cs:51,0);
		\draw[draw=none,fill=teal782154] (axis cs:51,0) rectangle (axis cs:52,1);
		\draw[draw=none,fill=teal782154] (axis cs:52,0) rectangle (axis cs:53,0);
		\draw[draw=none,fill=teal782154] (axis cs:53,0) rectangle (axis cs:54,0);
		\draw[draw=none,fill=teal782154] (axis cs:54,0) rectangle (axis cs:55,0);
		\draw[draw=none,fill=teal782154] (axis cs:55,0) rectangle (axis cs:56,1);
		\draw[draw=none,fill=teal782154] (axis cs:56,0) rectangle (axis cs:57,2);
	\end{axis}
	
\end{tikzpicture}
		\caption{}
		\label{fig:DegreeDist_}
	\end{subfigure}
\\
	\begin{subfigure}[b]{0.3\linewidth}
	\centering
	\includegraphics[scale=\figscale]{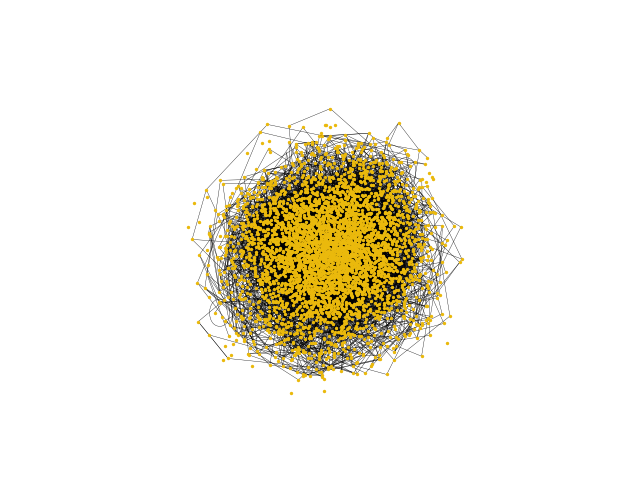}
	\caption{}
	\label{fig:Core_Approx}
\end{subfigure}
\hfill
\begin{subfigure}[b]{0.3\linewidth}
	\centering
	\includegraphics[scale=\figscale]{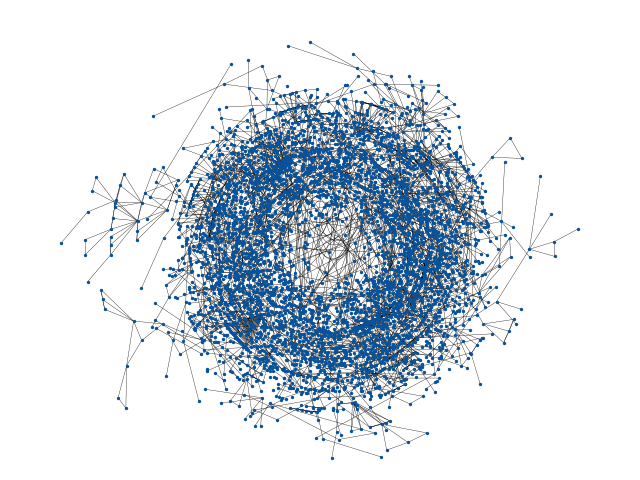}
	\caption{}
	\label{fig:Periphery_Approx}
\end{subfigure}
\hfill
\begin{subfigure}[b]{0.3\linewidth}
	\centering
\begin{tikzpicture}
	
	\definecolor{darkgray176}{RGB}{176,176,176}
	\definecolor{teal782154}{RGB}{7,82,154}
	
	\begin{axis}[
		tick align=inside,
		tick pos=left,
		x grid style={darkgray176},
		xmin= 0, xmax=59.8,
		xtick={25, 50},
		xtick style={color=black},
		y grid style={darkgray176},
		ymin= 0, ymax=620,
		ytick={0, 200, 400, 600},
		ytick style={color=black},
		width = \somewidth*1.5,
		height = \someheight*1.5,
		]
		\draw[draw=none,fill=teal782154] (axis cs:1,0) rectangle (axis cs:2,0);
		\draw[draw=none,fill=teal782154] (axis cs:2,0) rectangle (axis cs:3,293);
		\draw[draw=none,fill=teal782154] (axis cs:3,0) rectangle (axis cs:4,574);
		\draw[draw=none,fill=teal782154] (axis cs:4,0) rectangle (axis cs:5,514);
		\draw[draw=none,fill=teal782154] (axis cs:5,0) rectangle (axis cs:6,457);
		\draw[draw=none,fill=teal782154] (axis cs:6,0) rectangle (axis cs:7,347);
		\draw[draw=none,fill=teal782154] (axis cs:7,0) rectangle (axis cs:8,289);
		\draw[draw=none,fill=teal782154] (axis cs:8,0) rectangle (axis cs:9,244);
		\draw[draw=none,fill=teal782154] (axis cs:9,0) rectangle (axis cs:10,168);
		\draw[draw=none,fill=teal782154] (axis cs:10,0) rectangle (axis cs:11,130);
		\draw[draw=none,fill=teal782154] (axis cs:11,0) rectangle (axis cs:12,116);
		\draw[draw=none,fill=teal782154] (axis cs:12,0) rectangle (axis cs:13,95);
		\draw[draw=none,fill=teal782154] (axis cs:13,0) rectangle (axis cs:14,94);
		\draw[draw=none,fill=teal782154] (axis cs:14,0) rectangle (axis cs:15,74);
		\draw[draw=none,fill=teal782154] (axis cs:15,0) rectangle (axis cs:16,68);
		\draw[draw=none,fill=teal782154] (axis cs:16,0) rectangle (axis cs:17,51);
		\draw[draw=none,fill=teal782154] (axis cs:17,0) rectangle (axis cs:18,35);
		\draw[draw=none,fill=teal782154] (axis cs:18,0) rectangle (axis cs:19,36);
		\draw[draw=none,fill=teal782154] (axis cs:19,0) rectangle (axis cs:20,28);
		\draw[draw=none,fill=teal782154] (axis cs:20,0) rectangle (axis cs:21,20);
		\draw[draw=none,fill=teal782154] (axis cs:21,0) rectangle (axis cs:22,28);
		\draw[draw=none,fill=teal782154] (axis cs:22,0) rectangle (axis cs:23,20);
		\draw[draw=none,fill=teal782154] (axis cs:23,0) rectangle (axis cs:24,18);
		\draw[draw=none,fill=teal782154] (axis cs:24,0) rectangle (axis cs:25,23);
		\draw[draw=none,fill=teal782154] (axis cs:25,0) rectangle (axis cs:26,13);
		\draw[draw=none,fill=teal782154] (axis cs:26,0) rectangle (axis cs:27,3);
		\draw[draw=none,fill=teal782154] (axis cs:27,0) rectangle (axis cs:28,10);
		\draw[draw=none,fill=teal782154] (axis cs:28,0) rectangle (axis cs:29,6);
		\draw[draw=none,fill=teal782154] (axis cs:29,0) rectangle (axis cs:30,6);
		\draw[draw=none,fill=teal782154] (axis cs:30,0) rectangle (axis cs:31,5);
		\draw[draw=none,fill=teal782154] (axis cs:31,0) rectangle (axis cs:32,6);
		\draw[draw=none,fill=teal782154] (axis cs:32,0) rectangle (axis cs:33,6);
		\draw[draw=none,fill=teal782154] (axis cs:33,0) rectangle (axis cs:34,7);
		\draw[draw=none,fill=teal782154] (axis cs:34,0) rectangle (axis cs:35,6);
		\draw[draw=none,fill=teal782154] (axis cs:35,0) rectangle (axis cs:36,1);
		\draw[draw=none,fill=teal782154] (axis cs:36,0) rectangle (axis cs:37,3);
		\draw[draw=none,fill=teal782154] (axis cs:37,0) rectangle (axis cs:38,1);
		\draw[draw=none,fill=teal782154] (axis cs:38,0) rectangle (axis cs:39,8);
		\draw[draw=none,fill=teal782154] (axis cs:39,0) rectangle (axis cs:40,2);
		\draw[draw=none,fill=teal782154] (axis cs:40,0) rectangle (axis cs:41,2);
		\draw[draw=none,fill=teal782154] (axis cs:41,0) rectangle (axis cs:42,2);
		\draw[draw=none,fill=teal782154] (axis cs:42,0) rectangle (axis cs:43,2);
		\draw[draw=none,fill=teal782154] (axis cs:43,0) rectangle (axis cs:44,1);
		\draw[draw=none,fill=teal782154] (axis cs:44,0) rectangle (axis cs:45,1);
		\draw[draw=none,fill=teal782154] (axis cs:45,0) rectangle (axis cs:46,2);
		\draw[draw=none,fill=teal782154] (axis cs:46,0) rectangle (axis cs:47,0);
		\draw[draw=none,fill=teal782154] (axis cs:47,0) rectangle (axis cs:48,3);
		\draw[draw=none,fill=teal782154] (axis cs:48,0) rectangle (axis cs:49,1);
		\draw[draw=none,fill=teal782154] (axis cs:49,0) rectangle (axis cs:50,1);
		\draw[draw=none,fill=teal782154] (axis cs:50,0) rectangle (axis cs:51,1);
		\draw[draw=none,fill=teal782154] (axis cs:51,0) rectangle (axis cs:52,0);
		\draw[draw=none,fill=teal782154] (axis cs:52,0) rectangle (axis cs:53,0);
		\draw[draw=none,fill=teal782154] (axis cs:53,0) rectangle (axis cs:54,0);
		\draw[draw=none,fill=teal782154] (axis cs:54,0) rectangle (axis cs:55,1);
		\draw[draw=none,fill=teal782154] (axis cs:55,0) rectangle (axis cs:56,0);
		\draw[draw=none,fill=teal782154] (axis cs:56,0) rectangle (axis cs:57,2);
	\end{axis}
	
\end{tikzpicture}
	\caption{}
	\label{fig:DegreeDist_Approx}
\end{subfigure}
	\caption{CA-HepTh network, its exact (a) core, (b) periphery, (c) degree distribution of the core and its approximated (d)  core, (e) periphery,  (f) degree distribution of the approx. core.}
	\label{fig:DegreeDistr}
\end{figure}


\section{Concluding Remarks}
\label{sec:conclusion}

Our experimental results clearly demonstrate that the presence of \textit{cyclic} edges in the spanning subgraphs is essential for a close approximation of the geodesic convex hull. 
Thus, it is natural to ask whether further graph classes \textit{beyond} forests can also be considered for spanning subgraphs.
Such a graph class should fulfill at least two properties: 
(i) A (potentially maximal) spanning subgraph from this class could be generated in time \textit{linear} in the order of the input graph and (ii) for the graphs in this class, the preclosure of any set vertices should be its closure at the same time (cf. Thm.~\ref{thm:outerplanar_preclosure} in Sect.~\ref{sec:preliminaries}).  
This second condition indicates that the graphs in the class should be $K_{2,3}$-free (w.r.t. forbidden minor).
A somewhat related question is whether the algorithm presented in Sect.~\ref{sec:sampling} can be modified in a way that it returns a \textit{maximal} spanning outerplanar graph, preserving at the same time the time complexity of Alg.~\ref{alg:outerplanarSampling}.
For example, is it possible to utilize the \textit{degree distribution} of the input graph in the selection of the back edges in a way that the output outerplanar graph is always maximal?

Although our primary focus in this work was on an effective approximation of geodesic convex hulls in large graphs, 
the results of Section~\ref{sec:core} raise some interesting questions towards this direction. For example, we are investigating whether it is possible to approximate the set of nodes with the highest \textit{betweenness centrality} in large networks by that in their approximate cores?
 
Our empirical results concerning core approximation in large real-world networks have been obtained for relatively small sets of generator elements and for low frequency thresholds. 
The choice of these two parameters 
seem crucial for a close approximation (see Table~\ref{tab:graph_data}).
The related question is how to select them, especially in case of large networks? 
\textit{Sampling} seems a natural way, the question is whether it is possible to utilize the structure of the network at hand during sampling?   
Last but not least, it would be interesting to \textit{systematically} study further types of random as well as large real-world networks for their core-periphery decomposition.

\bibliographystyle{plain}
\bibliography{references}

\end{document}